\documentclass[11pt]{article}
\usepackage[utf8]{inputenc}
\usepackage{amsmath}
\usepackage{amssymb}
\usepackage{amsthm}
\usepackage{xcolor}
\usepackage{hyperref}
\usepackage{fullpage}
\usepackage{url}
\usepackage{authblk}
\usepackage{xfrac}
\usepackage{appendix}

\usepackage[inline]{enumitem}

    \usepackage{caption}
    \usepackage{setspace}
\usepackage[left=1in,right=1in,top=1.35in,bottom=1.4in]{geometry}
    \captionsetup[figure]{font=small}
    \captionsetup[table]{font=small}

\usepackage{algorithm}
\usepackage{algpseudocode}
\usepackage{natbib}
\hypersetup{
	colorlinks=true,       
	linkcolor=blue,          
	citecolor=blue,        
	filecolor=blue,      
	urlcolor=blue           
}

\numberwithin{equation}{section}
\numberwithin{figure}{section}

\allowdisplaybreaks
\tolerance=2400

\theoremstyle{plain}
\newtheorem{lemma}{Lemma}[section]

\newtheorem{corollary}[lemma]{Corollary}

\newtheorem{theorem}[lemma]{Theorem}

\theoremstyle{definition}

\newtheorem{remark}[lemma]{Remark}

\newtheorem{example}[lemma]{Example}

\newcommand{\Rbb}{\mathbb{R}}

\newcommand{\Nbb}{\mathbb{N}}


\newcommand{\Pbb}{\mathbb{P}}
\newcommand{\Ebb}{\mathbb{E}}
\newcommand{\Var}{\textnormal{Var}}
\newcommand{\Cov}{\textnormal{Cov}}

\newcommand{\Pcal}{\mathcal{P}}
\newcommand{\diff}{\textnormal{d}}
\newcommand{\trace}{\textnormal{tr}}

\newcommand{\covspace}{\mathbb{K}}
\newcommand{\symspace}{\mathbb{S}}

\newcommand{\id}{\textnormal{id}}
\newcommand{\Hcal}{\mathcal{H}}

\newcommand{\Ncal}{\mathcal{N}}
\newcommand{\Supp}{\textnormal{supp}}

\DeclareMathOperator{\TV}{TV}

\newcommand{\Perp}{\perp\!\!\!\!\perp} 

\newcommand{\todo}[1]{\textcolor{red}{(TODO: #1)}}


\newcommand{\B}{\mathcal{B}}

\newcommand{\distrcstr}{\mathcal{DCB}}
\newcommand{\varcstr}{\mathcal{VCB}}
\newcommand{\gencstr}{\mathcal{GCB}}

\newcommand{\transport}{\mathbf{t}}
\newcommand{\BW}{\Pi}



\begin{document}

\title{Constrained Denoising, Empirical Bayes, and Optimal Transport}

\date{\today}
\author[1]{Adam Quinn Jaffe\thanks{\texttt{a.q.jaffe@columbia.edu}}}
\author[2]{Nikolaos Ignatiadis\thanks{NI gratefully acknowledges support from NSF (DMS-2443410).}}
\author[1]{Bodhisattva Sen\thanks{BS gratefully acknowledges support from NSF (DMS-2311062)}}

\affil[1]{Department of Statistics, Columbia University, New York, NY}
\affil[2]{Department of Statistics and Data Science Institute, University of Chicago, Chicago, IL}

	

\maketitle

\begin{abstract}
    In the statistical problem of denoising, Bayes and empirical Bayes methods can ``overshrink'' their output relative to the latent variables of interest.
    This work is focused on constrained denoising problems which mitigate such phenomena.
    At the oracle level, i.e., when the latent variable distribution is assumed known, we apply tools from the theory of optimal transport to characterize the solution to (i) variance-constrained, (ii) distribution-constrained, and (iii) general-constrained denoising problems.
    At the empirical level, i.e., when the latent variable distribution is not known, we use empirical Bayes methodology to estimate these oracle denoisers.
    Our approach is modular, and transforms any suitable (unconstrained) empirical Bayes denoiser into a constrained empirical Bayes denoiser.
    We prove explicit rates of convergence for our proposed methodologies, which both extend and sharpen existing asymptotic results that have previously considered only variance constraints.
    We apply our methodology in two applications: one in astronomy concerning the relative chemical abundances in a large catalog of red-clump stars, and one in baseball concerning minor- and major league batting skill for rookie players.
\end{abstract}

\medskip

\small

\vspace{-0.4cm}
\noindent \textbf{\textit{Keywords:}} constrained Bayes estimation; $G$-modeling; Gaussian mixture model;
errors-in-variables regression;
latent variable model; (smoothed) nonparametric maximum likelihood estimation; Wasserstein space

\medskip


\normalsize


\section{Introduction}
\label{sec:introduction}
\subsection{Problem Statement}\label{subsec:prob-statement}

This work is focused on the statistical problem of denoising, which concerns the following model for the joint distribution for a pair of random variables $(\Theta,Z)\in\Rbb^m\times\Rbb^d$ for $m\ge1$ and $d\ge 1$:
\begin{equation}\label{eq:Mdl}
\Theta \sim G, \qquad \mbox{and} \qquad (Z \mid \Theta = \theta) \sim P_\theta.
\end{equation}
Here, $G$ is an unknown distribution in $\Rbb^m$ and $\{P_{\theta}\}_{\theta}$ is a known family of probability distributions on $\Rbb^d$.
We assume that there exist $(\Theta_1,Z_1),\ldots, (\Theta_n,Z_n)$ which are independent identically distributed (i.i.d.) pairs from the above model; we refer to $\Theta_1,\ldots,\Theta_n$ as the \textit{latent variables}, we refer to $Z_1,\ldots, Z_n$ as the \textit{observations}, and \textit{denoising} is the task of estimating/predicting the latent variables from the observations.

More precisely, the goal of denoising is to construct a function $\delta:\Rbb^d\to\Rbb^m$ such that the risk $\Ebb[\|\delta(Z)-\Theta\|^2]$ is small, with expectation taken over the joint distribution of $(\Theta,Z)$.
If the distribution $G$ is known, then the best choice is the posterior mean $\delta_{\B}(z) := \Ebb[\Theta\,|\,Z = z]$, called the \textit{Bayes denoiser} (with respect to the squared error loss).
If the distribution $G$ is not known, then it is often still possible to construct some function $\hat{\delta}_{\B}$, called an \textit{empirical Bayes (EB) denoiser} \citep{Robbins1956,efron2019bayes}, which achieves a risk that is nearly equal to that of the Bayes denoiser.

This capability suggests a strategy of undeniable appeal for practical data analysis on the latent variables $\Theta_1,\ldots,\Theta_n$:
First design the analysis as if the latent variables $\Theta_1,\dotsc,\Theta_n$ were available, then apply this same analysis to the denoised observations $\hat{\delta}_{\B}(Z_1),\ldots, \hat{\delta}_{\B}(Z_n)$.
The reader may notice, however, that this plug-in approach can be problematic; the distribution of $\hat{\delta}_{\B}(Z_1),\ldots, \hat{\delta}_{\B}(Z_n)$ need not be close at all to the distribution of $\Theta_1,\ldots,\Theta_n$, so downstream analyses based on this strategy need not be even consistent.
This was summarily described by \citet{LouisI} (where his ``parameters'' are our ``latent variables''):
\begin{quote}
    ``In standard Bayes and empirical Bayes component decision problems, estimating individual parameters is the primary goal. In [...] comparisons of histograms of estimates, however, the primary goal is to produce parameter estimates that can be considered as an ensemble. For example, the histogram of estimates should be a good estimate of the histogram of parameters.''
\end{quote}
Similar considerations have been noted by many authors over the last hundred years, e.g.,~\citet{eddington1940correction, LouisI, Ghosh, loredo2007analyzing,bloom2017using, agarwal2020data,chen2025empirical}, and in textbooks [\citealp[p.~88]{raudenbush2002hierarchical}; \citealp[p.~316]{rao2015small}].

    \begin{figure}[t]
        \centering
        \includegraphics[width=1.0\linewidth]{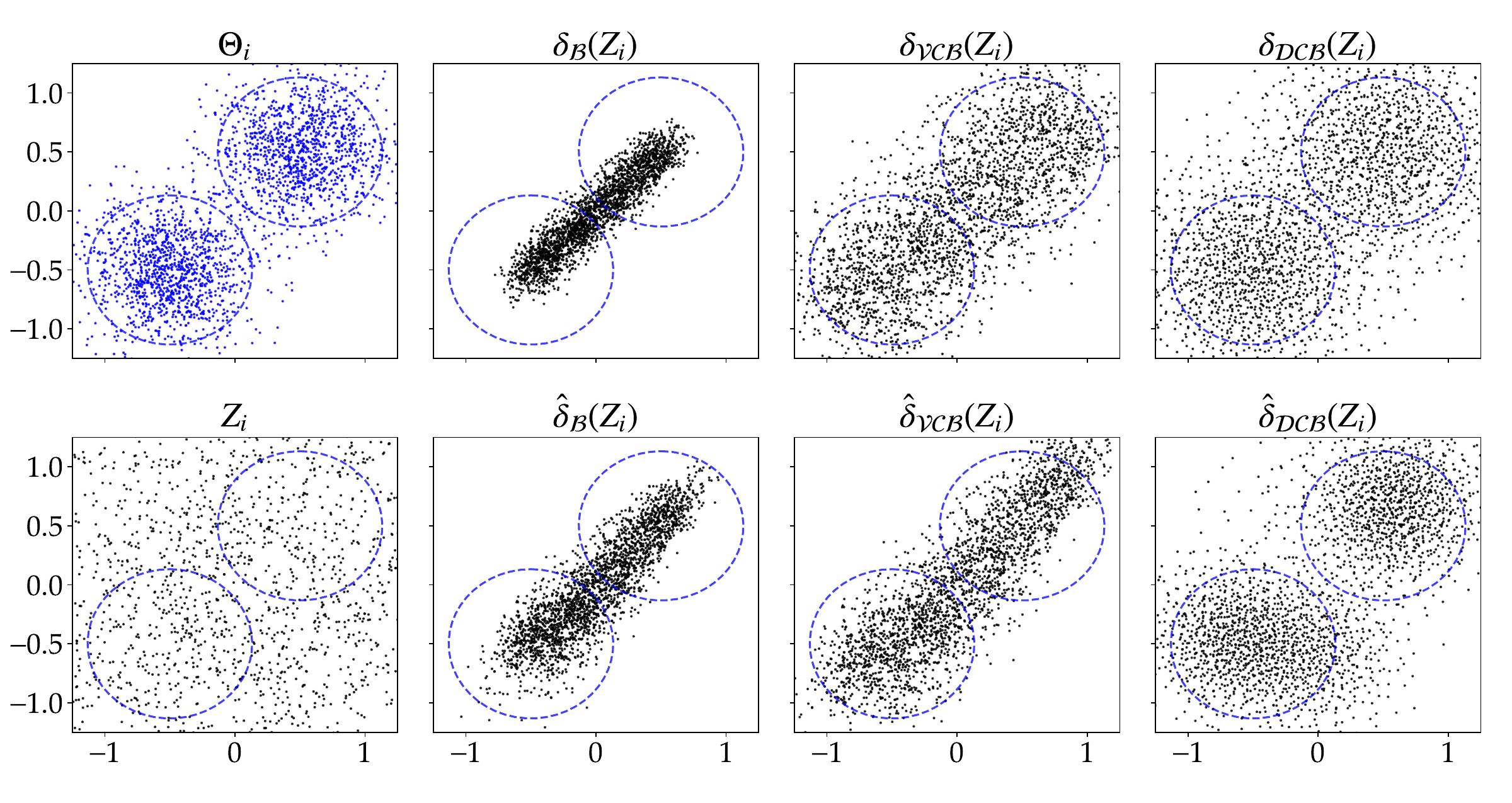}
        \caption{Denoising in a simulated data set.
        The latent variables come from a two-component Gaussian mixture model, and the observations come from a Gaussian likelihood.
        We show the latent variables and the observations (first column), the Bayes and EB denoisers (second column), the variance-constrained Bayes and EB denoisers (third column), and the distribution-constrained Bayes and EB denoisers (fourth column).
        In each plot, we also show the contour for two standard deviations around each component (blue) for reference.}
        \label{fig:simulation}
    \end{figure}
    
The fundamental problem with the plug-in approach is that $\hat{\delta}_{\B}$ achieving nearly the optimal risk does not imply that the distribution of $\hat{\delta}_{\B}(Z_1),\ldots, \hat{\delta}_{\B}(Z_n)$ must be close to the distribution of $\Theta_1,\ldots,\Theta_n$. Instead, it often implies that the former is ``overshrunk'' compared to the latter.
This can be seen already at the population level, since $\hat{\delta}_{\B}$ targets $\delta_{\B}$, and the law of total covariance yields the following, where $\prec$ refers to the positive semi-definite order:
\begin{equation}\label{eqn:M-prec-A}
    \Cov(\delta_{\B}(Z)) = \Cov(\Theta) -\Ebb[\Cov(\Theta\,|\,Z)] \preceq \Cov(\Theta).
\end{equation}
As a concrete example of this phenomenon, consider the simple setting where $\Theta\sim\Ncal(0,1)$ and $P_{\theta} = \Ncal(\theta,1)$, and note that we have $\delta_{\B}(Z)\equiv Z/2\sim\Ncal(0,\sfrac{1}{2})\neq \Ncal(0,1)$.
As another example, consider the simulation given in the first two columns of Figure~\ref{fig:simulation} where the latent variables come from a two-component Gaussian mixture model (with component variance $\tau^2I_2$ with $\tau^2= 0.1$) and the observations come from a Gaussian likelihood (with likelihood variance $\sigma^2I_2$ with $\sigma^2=1.0$).
From $n=2{,}000$ such observations, we observe that the Bayes and EB denoisers both shrink the observations too aggressively.
(Presently, we compute the EB denoiser using the $G$-modeling
approach of the smooth NPMLE from \cite{magder1996smooth} as described in Example~\ref{ex:smooth-f-model} with variance lower bound $\tau^2 I_2$.)

The goal of this work is to develop methodology for constructing fully-data-driven denoisers $\hat{\delta}$ such that the distribution of $\hat{\delta}(Z_1),\ldots, \hat{\delta}(Z_n)$ must be, in some sense, close to the distribution of $\Theta_1,\ldots, \Theta_n$, and which achieve a risk that is nearly equal to that of the optimal constrained denoiser. The other panels of Figure~\ref{fig:simulation} provide a preview of our methodology, which we will precisely define later.
 
\subsection{Classical Denoising and Three Constrained Denoising Problems}

We now describe the classical denoising problems and three constrained variants that are of interest in this work.
First, we consider the (unconstrained) Bayes denoising problem,
\begin{equation}\label{eqn:Bayes}
    \underset{\delta:\Rbb^d\to\Rbb^m}{\textnormal{minimize}}\quad\Ebb\left[\|\delta(Z)-\Theta\|^2\right],
\tag{$\B$}
\end{equation}
whose solution is the posterior mean $\delta_{\B}(z) =\Ebb[\Theta\,|\,Z=z]$, and we next add suitable constraints.

\medskip

\noindent \textbf{Variance-Constrained Denoising.} 
The following problem has been studied by~\citet{LouisI, Ghosh}, and several others, and is precisely given as
\begin{equation}\label{eqn:pop-varcstr}
    \underset{\delta:\Rbb^d\to\Rbb^m}{\textnormal{minimize}}\quad\Ebb\left[\|\delta(Z)-\Theta\|^2\right]\qquad\textnormal{s.t.}\qquad \Ebb[\delta(Z)] = \Ebb[\Theta] \quad\textnormal{and}\quad \Cov(\delta(Z)) = \Cov(\Theta).
    \tag{$\varcstr$}
\end{equation}
Problem~\eqref{eqn:pop-varcstr} aims to find a denoiser $\delta$ that minimizes the Bayes risk subject to the constraint that the first two moments of $\delta$ match those of the latent variable $\Theta$, thus mitigating the shrinkage of~\eqref{eqn:M-prec-A}.

\medskip

\noindent \textbf{Distribution-Constrained Denoising.}
The second problem of interest has been recently studied by~\citet{GarciaTrillosSen}, and it is given as
\begin{equation}\label{eqn:pop-distrcstr}
    \underset{\delta:\Rbb^d\to\Rbb^m}{\textnormal{minimize}}\quad\Ebb\left[\|\delta(Z)-\Theta\|^2\right]\qquad\textnormal{s.t.}\qquad \delta(Z) \overset{\mathcal{D}}{=} \Theta.\tag{$\distrcstr$}
\end{equation}
Here, rather than enforcing constraints on the first two moments, we require that the distribution of $\delta(Z)$ matches exactly the distribution of the latent variable $\Theta$.

\medskip

\noindent \textbf{General-Constrained Denoising.} 
We may accommodate general constraints via the problem
\begin{equation}\label{eqn:pop-gencstr}
    \underset{\delta:\Rbb^d\to\Rbb^m}{\textnormal{minimize}}\quad\Ebb\left[\|\delta(Z)-\Theta\|^2\right]\qquad\textnormal{s.t.}\qquad \Ebb[\psi_{\ell}(\delta(Z))] = \Ebb[\psi_{\ell}(\Theta)] \quad \textnormal{ for all } 1\le \ell \le k
    \tag{$\gencstr$}
\end{equation}
where $\psi_1,\ldots,\psi_k:\Rbb^m\to\Rbb$ is some prescribed set of measurable functions.
This formulation is general enough to cover a wide variety of problems; for example, it encompasses~\eqref{eqn:pop-varcstr} when $m=1$ by taking $\psi_1(z):=z$ and $\psi_2(z):=z^2$, it also encompasses~\eqref{eqn:pop-distrcstr} in a limiting sense when $\psi_1,\psi_2,\ldots$ are taken to be a suitable sequence of bounded measurable functions. The added generality is useful in some novel applications and substantially extends the scope of constrained denoising.

\medskip

The reason why the problems \eqref{eqn:pop-varcstr}, \eqref{eqn:pop-distrcstr}, and \eqref{eqn:pop-gencstr} are difficult to analyze is that distributional constraints are non-convex in terms of the denoiser. However, it is one of the fundamental insights of the theory of optimal transport (OT) \citep{Villani} that problems involving such non-convex distributional constraints are in many settings equivalent to a suitable linear program over a space of couplings, and that analysis of the latter provides theoretical and computational advantages for the former.
This connection has been previously made by \cite{GarciaTrillosSen} at the population-level, and the present work extends this paradigm to the EB setting. This allows us to comprehensively study all the problems above, and to derive new results even for problem \eqref{eqn:pop-varcstr} which has been extensively studied in previous literature.

\subsection{Summary of Results}\label{subsec:results}

The main results of this paper concern the various constrained Bayes problems above, their EB approximations, extensions, and applications.
We now describe these points in more detail.

\medskip

\noindent \textbf{Variance-Constrained Denoising.} First, we provide a detailed study of problem~\eqref{eqn:pop-varcstr}, which has been introduced in \citet{LouisI} and studied in many subsequent works (e.g., \citet{Ghosh, ghosh1999adjusted}).
At the population level, we provide a closed-form solution for the variance-constrained Bayes denoiser $\delta_{\varcstr}$ (Theorem~\ref{thm:VCB-solution}) which shows that $\delta_{\varcstr}$ is an affine function of the Bayes denoiser $\delta_{\B}$; crucially, this characterization arises by casting the denoising problem in terms of the geometry of the Bures-Wasserstein space of Gaussian OT, for which many explicit formulas are known.
We further show that the variance-constrained Bayes risk $R_{\varcstr}$ (the infimum of \eqref{eqn:pop-varcstr}) is at most twice the optimal Bayes risk (the infimum of~\eqref{eqn:Bayes}), illustrating that the price one needs to pay for matching the first two moments of the denoisers to that of the unknown latent variables is modest.
At the empirical level, we provide an EB denoising procedure (Algorithm~\ref{alg:Gaussian-mod-NP-prior}) which computes a denoiser $\hat{\delta}_{\varcstr}$ whose distance to the oracle $\delta_{\varcstr}$ exhibits an explicit rate of convergence (Theorem~\ref{thm:varcstr-gm-npp}).
See Figure~\ref{fig:simulation} (third column) for an illustration of these variance-constrained denoisers in the simulation, which is consistent with them being affine functions of the corresponding unconstrained denoisers.

\medskip

\noindent \textbf{Distribution-Constrained Denoising.} Next, we consider the distribution-constrained Bayes problem~\eqref{eqn:pop-distrcstr}.
At the population level, we give conditions under which there is a uniquely defined distribution-constrained Bayes denoiser $\delta_{\distrcstr}$ (Theorem~\ref{thm:pop-distrcstr}); at the empirical level, we provide an EB denoising scheme (Algorithm~\ref{alg:distr-cstr}) which produces a denoiser $\hat{\delta}_{\distrcstr}$ that exhibits an explicit rate of convergence to the oracle constrained denoiser (Theorem~\ref{thm:emp-distrcstr}).
This work extends recent results in \citet{GarciaTrillosSen} concerning the population-level problem; we provide rigorous statistical guarantees for an EB procedure which was stated (but not rigorously studied) in \citet[Appendix~E]{GarciaTrillosSen}.
In Figure~\ref{fig:simulation} (fourth column) we display the resulting denoisers in the simulation.
The figure also highlights the difference between the problem of distribution-constrained EB denoising and the problem of deconvolution \citep{meister2009deconvolution}, that denoising implies a correspondence between the denoised data and the observations, while deconvolution does not.

\medskip

\noindent \textbf{General-Constrained Denoising.} Third, we consider the general-constrained Bayes problem~\eqref{eqn:pop-gencstr}, which encapsulates a wide variety of practical constraints (for instance, the fourth-moment constraints of \cite{armstrong2022robust} or support constraints such as nonnegativity).
While the precise statements and discussions of these results can be found in Appendix~\ref{sec:gen-cstr}, we briefly summarize them here.
At the population level, we give conditions under which there is a uniquely-defined general-constrained Bayes denoiser $\delta_{\gencstr}$; at the empirical level, we provide an EB denoising scheme which produces a denoiser $\hat{\delta}_{\gencstr}$ that exhibits an explicit rate of convergence to the oracle constrained denoiser.

\medskip

A fundamental flexibility of our paradigm is that it allows one to transform an arbitrary unconstrained EB denoiser into a suitable constrained EB denoiser, and the constrained denoiser inherits its rate of convergence from the unconstrained EB denoiser.
For example, the unconstrained denoiser may arise via $G$-modeling, $F$-modeling, parametric EB, smoothed versions thereof, hybrids thereof, and more.
This allows us to determine the rates of convergence for our constrained EB denoising procedures in many different settings of interest (e.g., light- and heavy-tailed $G$, Gaussian and Poisson likelihood $\{P_{\theta}\}_{\theta}$, conjugate parametric models), and we give a detailed analysis of the resulting rates of convergence in Appendix~\ref{app:rates}.
A notable example is the case of a nonparametric latent variable distribution and a Gaussian likelihood where we use nonparametric maximum likelihood (NPMLE) for $G$-modeling; our results imply that the rate of convergence of $\hat{\delta}_{\varcstr}$ is nearly-parametric, and the rate of convergence of $\hat{\delta}_{\distrcstr}$ is slow (due to the slow minimax rate of nonparametric deconvolution, i.e., \citet{CarollHallDeconvolution, ZhangDeconvolution, FanDeconvolution}).

We also illustrate our methodology in two applications (Section~\ref{sec:app}).
In an application to astronomy (Section~\ref{subsec:astro}), we use our results to denoise the distribution of latent chemical abundances in a large catalog of observed stars, furthering the analyses of \cite{Soloff,Ratcliffe_2020}.
In an application to baseball (Section~\ref{subsec:baseball}), we use our results to denoise the distribution of latent batting skill for players who are promoted from minor- to major-league play.
Because these applications involve some degree of heterogeneity in the observations (e.g., heteroskedasticity in the Gaussian case) we include a detailed discussion (Section~\ref{sec:heterosked}) of some considerations for extensions to the heterogeneous case.


\subsection{Related Literature}\label{subsec:related-lit}

We build upon a long tradition in the EB literature~\citep{Robbins1956, efron2019bayes} of constructing nonparametric denoisers that can provably perform nearly as well as oracle denoisers that operate under knowledge of the true latent variable distribution. The statistical properties of unconstrained denoisers that mimic the Bayes denoisers in~\eqref{eqn:Bayes} have been increasingly well understood in recent years~\citep{zhang1997empirical, brown2009nonparametric, JiangZhang, efron2011tweedie, SahaGuntuboyina,PolyanskiyWuPoisson, ShenWu, ignatiadis2023empiricala, barbehenn2023nonparametric, Soloff, ghosh2025steins}. Here we lift existing results on unconstrained denoisers and provide further theory on EB approximations to constrained denoisers.

The fallacy of applying downstream statistical analyses to EB-denoised data sets has been recognized since the early developments of EB in astronomy.
To wit, Sir Arthur Eddington introduced what is now known as Tweedie's formula~\citep{efron2011tweedie} to denoise noisy parallax observations; this method was communicated by Sir Frank Dyson~\citeyearpar{dyson1926method}. Later,~\citet{eddington1940correction} lamented that his formula ``is not infrequently employed in an illegitimate way'':
Astronomers would use the empirical distribution of denoised estimates $\widehat{\Theta}_1,\dotsc,\widehat{\Theta}_n$ and interpret it as an improved estimate of the distribution of $\Theta_1,\dotsc,\Theta_n$. Eddington explains that this practice is ``fallacious,'' since the distribution of $\widehat{\Theta}_i$ ``deviates as far from the true distribution [of $\Theta_i$] in the direction of reduced spread as the observed distribution [of $Z_i$] does in the direction of increased spread.'' Despite Eddington's paper (which is well-known in astronomy), astronomers still often use shrunk values in the way Eddington warned against~\citep{loredo2007analyzing}.

The variance-constrained denoiser we study, traditionally called the constrained Bayes estimator, has been examined and used in applications by several authors~\citep{LouisI, lahiri1990adjusted, Ghosh, devine1994constrained, ghosh1999adjusted, FreyCressie, bloom2017using}. While some results like~\citet[Theorem 1]{Ghosh} apply generally to any Bayes estimator (paralleling our Theorem~\ref{thm:VCB-solution}), existing applications predominantly pair constrained Bayes denoisers with parametric EB methods. To the best of our knowledge, only few authors~\citep{shen2000triplegoal, lockwood2018flexible, lee2024improving} have explored more flexible nonparametric specifications alongside constrained Bayes denoising. Our work advances this literature in two ways: first, by deriving explicit estimation error rates that directly inherit from the original Bayes denoiser's error rates; second, by building on OT theory to provide a new perspective and to extend variance constrained denoisers to multidimensional parameters. Though~\cite{ghosh1999adjusted} previously tackled multidimensional extensions, we identify and correct an oversight in their approach (see Remark~\ref{rem:Ghosh-comparison}).


Our proposed methodology is closely tied to practical considerations in a number of application areas. In this paper, we specifically consider applications to astronomy (Section~\ref{subsec:astro}) and baseball (Section~\ref{subsec:baseball}). Regarding astronomy, \citet{loredo2007analyzing} has argued that distributional constraints on denoising procedures are a fundamental feature of astrostatistics.
Another application is to baseball statistics; while our specific application is novel, it complements a long tradition of EB methods in baseball \citep{EfronMorris, brown2008inseason, jiang2010empirical, gu2017empirical}.
 Our methodology is directly relevant to domains that have recognized the need for constrained denoising and have employed variance-constrained denoisers, including multisite experiments in education research~\citep{raudenbush2015learning, bloom2017using, lee2024improving},
small-area estimation \citep{DevineLouisHalloran, GhoshRao, LeylandDavies}, epidemiology~\citep{lyles1997prediction, moore2010empirical}, and errors-in-variables regression \citep{freedman2004new}. We also anticipate applications in other fields, such as those in labor economics reviewed by \citet{walters2024empirical}.

\section{Preliminaries on Optimal Transport}\label{sec:prelim}





We review some aspects of optimal transport (OT), which provides the mathematical foundation for our main results.
While OT was initially studied as a concrete problem in applied mathematics \citep{Villani, RachevRdorf}, there has been an explosion of recent interest in applications to statistics \citep{StatisticsWassersteinSpace, StatisticalOT}.

We begin with Monge's formulation of the OT problem, which aims to send each point to a given location in order to transport some source measure onto some target measure in such a way that minimizes the average cost.
(For simplicity, we consider the case of quadratic cost function and of probability measures on the same space, but these can both be relaxed.)
That is, for probability measures $\mu,\mu'$ on $\Rbb^k$, we consider the problem
\begin{equation}\label{eqn:Monge}
        \underset{T:\Rbb^k\to\Rbb^k}{\textnormal{minimize}} \int_{\Rbb^k}\|x-T(x)\|^2\diff \mu(x)\qquad
        \textnormal{s.t.}\qquad T_{\#}\mu = \mu'.
\end{equation}
Here $T_{\#}\mu$ denotes the \textit{pushforward} of $\mu$ by $T$, defined as the probability measure on $\Rbb^k$ given by the distribution of $T(X)$ when $X\sim \mu$.
Problem~\eqref{eqn:Monge} is difficult to solve directly, since the constraint $T_{\#}\mu = \mu'$ is non-convex.

Another perspective is Kantorovich's formulation, which aims to send each point to a distribution over locations in order to transport some source measure onto some target measure in a way that minimizes the average cost.
That is, for probability measures $\mu,\mu'$ on $\Rbb^k$, we consider
\begin{equation}\label{eqn:Kant}
    \underset{\pi\in\Gamma(\mu;\mu')}{\textnormal{minimize}} \int_{\Rbb^k}\|x-x'\|^2\diff \pi(x,x')
\end{equation}
Here, $\Gamma(\mu;\mu')$ denotes the set of all \textit{couplings} of $\mu,\mu'$, meaning the set of all joint distributions $\pi$ such that if $(X,X')\sim\pi$ then $X\sim \mu$ and $X'\sim \mu'$.
Problem~\eqref{eqn:Kant} is a linear program (possibly infinite dimensional) so it is often simple to analyze and implement.

One of the fundamental insights of OT is that problems \eqref{eqn:Monge} and \eqref{eqn:Kant} are closely related.
For instance, it is easy to show that  \eqref{eqn:Kant} is a \textit{convex relaxation} of \eqref{eqn:Monge} in the sense that it is a convex optimization problem whose domain naturally contains the domain of the former and its optimal value lower bounds the optimal value of the former. Remarkably, a result of \citet{Brenier1991} provides sufficient conditions for \eqref{eqn:Kant} to be a \textit{tight convex relaxation} of \eqref{eqn:Monge} in the sense that their optimal values agree, and every solution of the latter can be naturally mapped onto a solution of the former. More precisely, if $\mu$ and $\mu'$ both have finite second moment and if $\mu$ has a density with respect to Lebesgue measure, then \eqref{eqn:Monge} admits a solution $\delta$ which is unique up to $\mu$-equivalence, \eqref{eqn:Kant} admits a unique solution $\pi$, and these solutions are related via $\pi = (\id,\delta)_{\#}\mu$; moreover, the optimal $\delta$ for \eqref{eqn:Monge} must be equal to the gradient of some real-valued convex function defined on $\Rbb^m$.
In other words, the unique solution to the Kantorovich problem is a coupling supported on the graph of a gradient of a convex function which is a solution to the Monge problem.

In practice, a common approach to solving \eqref{eqn:Monge} is as follows.
First, find a solution $\pi$ to \eqref{eqn:Kant}.
Second, define the function $\delta:\Rbb^k\to\Rbb^k$ via
$    \delta(x')
    := \int_{\Rbb^k}x\, \pi(\diff x,x')\big/\int_{\Rbb^k}\pi(\diff x,x')
$
called the \textit{barycentric projection} of $\pi$; note that this is just the conditional expectation $\Ebb[X\,|\,X'=x']$ when $(X,X')\sim \pi$.
If the convex relaxation above is tight, then $\delta$ must be solution to \eqref{eqn:Monge}.
Otherwise, $\delta$  can be thought of as an approximate solution to \eqref{eqn:Monge}.

We next review some aspects of the geometry of the Wasserstein space.
We write $\Pcal_2(\Rbb^k)$ for the space of probability measures $\mu$ on $\Rbb^k$ satisfying $\int_{\Rbb^k}\|x\|^2\diff\mu(x)<\infty$. For any probability measures $\mu,\mu'\in\Pcal_2(\Rbb^k)$, we write $W_2^2(\mu,\mu')$ for the optimal value of problem~\eqref{eqn:Kant} above, which is called the (square of the) \textit{2-Wasserstein distance} between $\mu$ and $\mu'$.
(If $\mu$ additionally has a density with respect to Lebesgue measure, then we know that this is equal to the optimal value of problem~\eqref{eqn:Monge}.)
The metric space $(\Pcal_2(\Rbb^k),W_2)$ is called the \textit{2-Wasserstein space}.

These notions simplify significantly in the case of centered Gaussian measures, which will prove useful in our study of the variance-constrained denoising problem~\eqref{eqn:pop-varcstr}.
For $m\in\Nbb$, we write $\covspace(m)$ for the space of (strictly) positive definite $m\times m$ matrices.
For any $\Sigma,\Sigma'\in\covspace(m)$, we may consider problem~\eqref{eqn:Monge} for probability measures $\mu=\Ncal(0,\Sigma)$ and $\mu' = \Ncal(0,\Sigma')$; it turns out that this problem has a (necessarily unique) solution given by the linear map $\delta(x) := \transport_{\Sigma}^{\Sigma'}x$, where $\transport_{\Sigma}^{\Sigma'} := \Sigma^{-\sfrac{1}{2}}(\Sigma^{\sfrac{1}{2}}\Sigma' \Sigma^{\sfrac{1}{2}})^{\sfrac{1}{2}}{\Sigma}^{-\sfrac{1}{2}}$.
Here, we write $A^{\sfrac{1}{2}}$ for the unique positive definite square root of a positive definite matrix $A\in\covspace(m)$.
We also write $\preceq$ and $\prec$ for the usual Loewner order on $\covspace(m)$, that is, the order generated by the positive semi-definite (PSD) cone.
The space $\covspace(m)$ endowed with the metric inherited from $W_2$, under the natural identification, is referred to as the \textit{Bures-Wasserstein space}.

\section{Oracle Constrained Bayes Denoising}\label{sec:pop}

In this section, we consider the oracle versions (i.e., assuming that the latent variable distribution $G$ is known) of our constrained denoising problems~\eqref{eqn:pop-varcstr},~\eqref{eqn:pop-distrcstr} and~\eqref{eqn:pop-gencstr}. 
We address the case of variance constraints in Section~\ref{subsec:pop-varcstr}, the case of distributional constraints in Section~\ref{subsec:pop-distrcstr}, and the case of general constraints in Section~\ref{subsec:pop-gencstr}. The proofs of all the results stated in this section are given in Appendix~\ref{app:oracle}.

\subsection{Variance-Constrained Denoising}\label{subsec:pop-varcstr}

We begin with a detailed study of problem~\eqref{eqn:pop-varcstr} in which our results can be made rather explicit.
To give context for this problem, we mention the following motivating applications.

\begin{example}[Mitigating underdispersion by ``unshrinking'' a bit]
Graphical distributional summaries (e.g., histograms, density plots) of Bayes denoisers inevitably exhibit underdispersion relative to the true distribution of the unobserved latent variables $\Theta_1,\dotsc,\Theta_n$ (as directly follows from~\eqref{eqn:M-prec-A}). At an intuitive level, these graphical summaries would more faithfully represent the true heterogeneity in the population if the denoisers could be ``unshrunk'' to match the variance of the latent variables. Problem~\eqref{eqn:pop-varcstr} captures precisely this intuition through its variance-matching constraint, as Theorem~\ref{thm:VCB-solution} below demonstrates. Thus, problem~\eqref{eqn:pop-varcstr} provides a principled and useful strategy for presenting more faithful empirical summaries of heterogeneity by at least capturing the correct second moment structure. We note that this motivation was central to the work of \citet{LouisI} and \citet{Ghosh}.
\end{example}

\begin{example}[Errors-in-variables regression: moment reconstruction]
\label{example:moment_reconstruction}
Consider a regression setting with covariates $X \in \mathbb R^k$ and response $Y \in \mathbb R$; interest focuses on the relationship between $X$ and $Y$. We observe the covariates with noise, that is, we observe $W = X + E$ where $E$ has a known distribution. In regression calibration~\citep{carroll1990approximate} we perform the regression analysis on the pairs $(\mathbb E[X \mid W], Y)$; see~\citet{chen2025empirical} for a critical discussion of the approach. By contrast, in moment reconstruction~\citep{freedman2004new}, we seek a denoiser of $X$, which may possibly depend on both $W$ and $Y$, that satisfies
$$
\Ebb[\delta(W,Y)] = \Ebb[X],\qquad \Cov(\delta(W,Y)) = \Cov(X), \qquad \Ebb[\delta(W,Y) Y] = \Ebb[XY],
$$
and apply the regression analysis on pairs $(\delta(W,Y),Y)$.
The idea is that we would like our denoised covariates to preserve cross-moments with the response as these are fundamental for understanding the $X,Y$ relationship.
A moment reconstruction denoiser is provided by optimization problem~\eqref{eqn:pop-varcstr} by taking $Z = (W,Y)$ and $\Theta = (X,Y)$.
\end{example}

Our analysis of problem~\eqref{eqn:pop-varcstr} requires a few assumptions, which we now introduce and discuss.
Because we will impose covariance constraints on the denoising problem, we need some conditions to ensure that the constrained denoising problem admits a solution.
The following is necessary:
    \begin{equation}\label{eqn:2M}
        \int_{\Rbb^m}\|\theta\|^2\diff G(\theta) < \infty.
        \tag{2M}
    \end{equation}
We require the second moment condition for $G$ but not for $F$ (the marginal distribution of $Z$).
We also make the following assumption on the covariance of the Bayes denoiser $\delta_{\B}$:
\begin{equation}\label{eqn:pos-def}
            \Cov(\delta_\B(Z))\textnormal{ is strictly positive definite}. \tag{PD}
    \end{equation}
This condition is usually easy to verify; for instance, if $\Theta$ has Gaussian distribution $G=\Ncal(\mu,A)$ for $A\succ 0$ and $P_{\theta}=\Ncal(\theta,\Sigma)$, then we have $ \Cov(\delta_\B(Z)) = A(A+\Sigma)^{-1}A \succ 0$ for any $\Sigma\succeq 0$. In this form, the assumption precludes the case that $\Theta$ is concentrated on a proper subspace of $\Rbb^m$

Some additional assumptions provide sufficient regularity in order to ensure that the variance-constrained denoising problem admits a unique solution. For example, we assume
        \begin{equation}\label{eqn:Z-dens}
            \textnormal{the distribution of } Z \textnormal{ has a density with respect to Lebesgue measure, and}
                \tag{ZC}
        \end{equation}        \begin{equation}\label{eqn:Bayes-dens}
            \textnormal{the distribution of } \delta_\B(Z) \textnormal{ has a density with respect to Lebesgue measure.}
            \tag{BC}
        \end{equation}
        While condition~\eqref{eqn:Z-dens} holds whenever all elements of the model $\{P_{\theta}\}_{\theta}$ have a density with respect to Lebesgue measure (for example, in the Gaussian case), it does not hold in some discrete cases (for example, in the Poisson case).
        Under assumption~\eqref{eqn:Z-dens}, assumption~\eqref{eqn:Bayes-dens} is quite mild (see \citet[Remark~2.2]{GarciaTrillosSen}).
        
    We are ready to state the first of our main results, for which the univariate case is effectively due to \citet{LouisI, Ghosh}.
    Here, $R_{\B}$ denotes the minimum value of problem~\eqref{eqn:Bayes} and $R_{\varcstr}$ the minimum value of problem~\eqref{eqn:pop-varcstr}.
    Also, recall the definition of $\transport$ given in Section~\ref{sec:prelim}.
	\begin{theorem}\label{thm:VCB-solution}
        Under assumption~\eqref{eqn:2M} and \eqref{eqn:pos-def}, problem~\eqref{eqn:pop-varcstr} has solution 
	\begin{equation}\label{eq:varcstr}
			\delta_{\varcstr}(z) = \transport_{\Cov(\delta_\B(Z))}^{\Cov(\Theta)}(\delta_\B(z)-\Ebb[\Theta]) + \Ebb[\Theta].
		\end{equation}
	   and its risk is given by $R_{\varcstr} = R_{\B} + W_2^2(\Ncal(0,\Cov(\Theta)),\Ncal(0,\Cov(\delta_\B(Z)))$.
            Under the further assumptions~\eqref{eqn:Z-dens} and \eqref{eqn:Bayes-dens}, $\delta_{\varcstr}$ is the $F$-a.s. unique solution to problem~\eqref{eqn:pop-varcstr}.
	\end{theorem}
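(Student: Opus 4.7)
The plan is to reduce problem~\eqref{eqn:pop-varcstr} to an explicit Bures–Wasserstein matrix problem via a Pythagorean decomposition. The key reduction is the identity
\begin{equation*}
\Ebb[\|\delta(Z) - \Theta\|^2] = \Ebb[\|\delta(Z) - \delta_\B(Z)\|^2] + R_\B,
\end{equation*}
valid for any $\delta$ because $\Theta - \delta_\B(Z)$ is $L^2$-orthogonal to every function of $Z$. Minimizing~\eqref{eqn:pop-varcstr} is therefore equivalent to minimizing $\Ebb[\|\delta(Z) - \delta_\B(Z)\|^2]$ under the same moment constraints. Centering via $V := \delta(Z) - \Ebb[\Theta]$ and $Y := \delta_\B(Z) - \Ebb[\Theta]$ (noting $\Ebb[\delta_\B(Z)] = \Ebb[\Theta]$ by the tower property), the problem becomes: minimize $\Ebb[\|V - Y\|^2] = \trace(A) + \trace(B) - 2\trace(\Ebb[V Y^\top])$ over $\sigma(Z)$-measurable $V$ with mean zero and $\Cov(V) = A := \Cov(\Theta)$, where $Y$ has mean zero and $\Cov(Y) = B := \Cov(\delta_\B(Z))$; under~\eqref{eqn:pos-def} and the law of total covariance, $A \succeq B \succ 0$.

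The crux is a Gelbrich-type inequality: $\trace(\Ebb[V Y^\top]) \leq \trace\bigl((B^{1/2} A B^{1/2})^{1/2}\bigr)$, valid for any joint law with the prescribed marginal covariances. I would prove this by writing the admissible cross-covariance as $C := \Ebb[V Y^\top] = A^{1/2} K B^{1/2}$ for some contraction $K$ (this parametrization of $C$ follows from a Schur-complement analysis of the PSD joint covariance), and then invoking von Neumann's trace inequality:
\begin{equation*}
\trace(C) = \trace(B^{1/2} A^{1/2} K) \leq \|B^{1/2} A^{1/2}\|_{*} = \trace\bigl((B^{1/2} A B^{1/2})^{1/2}\bigr).
\end{equation*}
This yields the lower bound $\Ebb[\|V - Y\|^2] \geq W_2^2(\Ncal(0, A), \Ncal(0, B))$. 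To show the candidate $V^\ast := T Y$ with $T := \transport_B^A$ attains this bound, I would use the explicit formula $T B = B^{-1/2}(B^{1/2} A B^{1/2})^{1/2} B^{1/2}$ to verify $\Cov(V^\ast) = T B T = A$ and $\trace(T B) = \trace((B^{1/2} A B^{1/2})^{1/2})$. Since $V^\ast$ is $\sigma(Z)$-measurable, uncentering recovers the stated formula for $\delta_{\varcstr}$ and the risk formula.

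For $F$-a.s.~uniqueness under~\eqref{eqn:Z-dens} and~\eqref{eqn:Bayes-dens}, if another admissible $V$ attains the minimum, the equality case in von Neumann's inequality (using that $A, B \succ 0$ makes $B^{1/2} A^{1/2}$ invertible, so the optimal contraction is uniquely the unitary factor of its polar decomposition) forces $\Ebb[V Y^\top] = T B$. Combined with $\Cov(V) = A$, expanding $\Ebb[\|V - V^\ast\|^2] = 2\trace(A) - 2\trace(T B) = 0$ yields $V = V^\ast$ almost surely, hence $\delta = \delta_{\varcstr}$ on a set of full $F$-measure. The main obstacle I anticipate is the careful handling of the Gelbrich inequality along with its equality case; the inequality is folklore in Bures–Wasserstein theory but requires bookkeeping with von Neumann's trace inequality. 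I would also check whether the density assumptions~\eqref{eqn:Z-dens} and~\eqref{eqn:Bayes-dens} are genuinely essential for the uniqueness statement or are imposed only to enable a cleaner OT-theoretic argument via Brenier's theorem, since the elementary approach above seems to need only $A, B \succ 0$.
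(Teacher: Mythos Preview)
Your proposal is correct and takes a genuinely different route from the paper's proof. Both start with the Pythagorean decomposition $\Ebb[\|\delta(Z)-\Theta\|^2]=R_\B+\Ebb[\|\delta(Z)-\delta_\B(Z)\|^2]$, but the paper then applies a \emph{Gaussianization trick}: it shows that the reduced problem has the same optimal value as the Gaussian coupling problem
\[
\min_{(D,E)\text{ jointly Gaussian}}\Var(D-E)\quad\text{s.t.}\quad D\sim\Ncal(\Ebb[\Theta],A),\ E\sim\Ncal(\Ebb[\Theta],B),
\]
and then invokes the known Bures--Wasserstein formula for Gaussian OT. You instead bound $\trace(\Ebb[VY^\top])$ directly via the Gelbrich inequality, using the Schur-complement parametrization $C=A^{1/2}KB^{1/2}$ and von Neumann's trace inequality, which is a more elementary matrix-analytic argument that bypasses any appeal to Gaussian measures or OT theory. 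The paper's route is more conceptual and ties the result into the Bures--Wasserstein framework used throughout; yours is self-contained and arguably cleaner.

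Your suspicion about the uniqueness hypotheses is well founded: your equality-case argument (invertibility of $B^{1/2}A^{1/2}$ forces the optimal contraction $K$ to be the unique polar factor, hence $C=\Ebb[VY^\top]=TB$, hence $\Ebb[\|V-V^\ast\|^2]=2\trace(A)-2\trace(TBT)=0$) needs only~\eqref{eqn:pos-def}, whereas the paper obtains uniqueness by citing \citet[Theorem~2.4]{GarciaTrillosSen}, which requires the density assumptions~\eqref{eqn:Z-dens} and~\eqref{eqn:Bayes-dens}. So your approach yields a sharper uniqueness statement than the one in the theorem.
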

The interpretation of the result is simple: The variance-constrained Bayes denoiser $\delta_{\varcstr}$ is an affine function of the unconstrained Bayes denoiser $\delta_{\B}$, and the requisite rescaling is exactly the one that minimizes the OT cost from the covariance of the unconstrained Bayes denoiser $\delta_{\B}(Z)$ to the covariance of the latent variable $\Theta$.

        \begin{remark}[Comparison with \citet{ghosh1999adjusted}]\label{rem:Ghosh-comparison}
        The multivariate problem of optimal denoising subject to constraints on the covariance matrix has been previously studied by \citet{ghosh1999adjusted}. Their proposed denoiser is
        $\delta(z) = (\Cov(\Theta))^{\sfrac{1}{2}}(\Cov(\delta_{\B}(Z)))^{-\sfrac{1}{2}}(\delta_{\B}(z)-\Ebb[\Theta])+\Ebb[\Theta]$.
        Note that this differs from the optimal denoiser given in Theorem~\ref{thm:VCB-solution}, but both are of the form
        \begin{equation}\label{eqn:VCB-linear}
            \delta(z) = T(\delta_{\B}(z)-\Ebb[\Theta])+\Ebb[\Theta]
        \end{equation}
        for a suitable matrix $T\in\Rbb^{m\times m}$; that is, both are an affine transformation of the Bayes denoiser.
        We reconcile this difference by observing that an additional condition is needed in the discussion following \citet[equation~(40)]{ghosh1999adjusted}:
        \citet{ghosh1999adjusted} identify that an optimal denoiser must be of the form \eqref{eqn:VCB-linear}, but omit the condition that $T$ be symmetric.
        Since all subsequent applications (e.g., to the dataset in~\citet{DevineLouisHalloran}) involve matrices $\Cov(\Theta)$ and $\Cov(\delta_{\B}(Z))$ that commute, it holds that
        $
            \transport_{\Cov(\delta_{\B}(Z))}^{\Cov(\Theta)} = (\Cov(\Theta))^{\sfrac{1}{2}}(\Cov(\delta_{\B}(Z)))^{-\sfrac{1}{2}},
        $
        and so, in this case, the denoiser in~\citet{ghosh1999adjusted} is identical to the denoiser of Theorem~\ref{thm:VCB-solution}. Stating the result in the general case when $\Cov(\Theta)$ and $\Cov(\delta_{\B}(Z))$ do not commute requires the machinery of the Bures-Wasserstein geometry as summarized in Section~\ref{sec:prelim}.
        \end{remark}

At this point, a natural goal is to quantify the statistical cost of imposing the moment matching constraint.
More precisely, one may ask how large is the excess risk $R_{\varcstr}-R_{\B}$.
Our next result (proved in Appendix~\ref{app:oracle}) shows that the price to pay is modest.
\begin{corollary}\label{cor:risks}
		We have $R_{\B}\le R_{\varcstr}\le 2R_{\B}$.
\end{corollary}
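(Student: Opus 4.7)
The lower bound $R_{\B} \le R_{\varcstr}$ is immediate: any feasible $\delta$ for \eqref{eqn:pop-varcstr} is feasible for \eqref{eqn:Bayes}, so the infimum over the smaller class can only be larger. All the work goes into the upper bound $R_{\varcstr} \le 2 R_{\B}$, and Theorem~\ref{thm:VCB-solution} already reduces this to a purely Gaussian transport inequality.

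\textbf{Step 1: Rewrite $R_\B$ as a trace.} Since $\delta_\B(z) = \Ebb[\Theta \mid Z = z]$, the law of total covariance gives
\begin{equation*}
  R_{\B} \;=\; \Ebb\bigl[\|\Theta - \delta_\B(Z)\|^2\bigr] \;=\; \trace\bigl(\Ebb[\Cov(\Theta \mid Z)]\bigr) \;=\; \trace\bigl(\Cov(\Theta) - \Cov(\delta_\B(Z))\bigr).
\end{equation*}
Writing $A := \Cov(\Theta)$ and $B := \Cov(\delta_\B(Z))$, this reads $R_\B = \trace(A - B)$, and \eqref{eqn:M-prec-A} guarantees $B \preceq A$.

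\textbf{Step 2: Bound the Bures-Wasserstein correction term.} By Theorem~\ref{thm:VCB-solution}, $R_{\varcstr} - R_\B = W_2^2(\Ncal(0,A), \Ncal(0,B))$. I claim that
\begin{equation*}
  W_2^2(\Ncal(0,A), \Ncal(0,B)) \;\le\; \trace(A - B) \qquad \text{whenever } B \preceq A.
\end{equation*}
To see this, note that $A - B \succeq 0$, so one can construct a valid coupling by taking $X \sim \Ncal(0,B)$ and $W \sim \Ncal(0, A - B)$ independent and setting $Y := X + W$. Then $Y \sim \Ncal(0,A)$, the joint law of $(X,Y)$ is an element of $\Gamma(\Ncal(0,B); \Ncal(0,A))$, and
\begin{equation*}
  W_2^2(\Ncal(0,A), \Ncal(0,B)) \;\le\; \Ebb[\|X - Y\|^2] \;=\; \Ebb[\|W\|^2] \;=\; \trace(A - B),
\end{equation*}
by the definition \eqref{eqn:Kant} of the 2-Wasserstein distance as an infimum over couplings.

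\textbf{Step 3: Combine.} Substituting Steps 1 and 2 into Theorem~\ref{thm:VCB-solution} yields
\begin{equation*}
  R_{\varcstr} \;=\; R_{\B} + W_2^2(\Ncal(0,A), \Ncal(0,B)) \;\le\; R_{\B} + \trace(A - B) \;=\; 2R_{\B},
\end{equation*}
which completes the proof. There is no serious obstacle: the only substantive step is the sub-optimal coupling in Step 2, which is the natural ``add independent Gaussian noise of covariance $A - B$'' construction enabled precisely by $B \preceq A$.
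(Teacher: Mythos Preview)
Your proof is correct and reaches the same key inequality as the paper, namely $W_2^2(\Ncal(0,A),\Ncal(0,B)) \le \trace(A-B)$ for $B \preceq A$, but by a genuinely different and more elementary route. The paper first invokes the Procrustes representation of the Bures--Wasserstein metric to get $\BW^2(A,B) \le \|A^{\sfrac12}-B^{\sfrac12}\|_2^2$, and then the Powers--St{\o}rmer inequality to pass to $\|A-B\|_1 = \trace(A-B)$. Your Step~2 instead exhibits an explicit suboptimal coupling---add independent $\Ncal(0,A-B)$ noise to a $\Ncal(0,B)$ variable---which is available precisely because of the ordering $B \preceq A$ from \eqref{eqn:M-prec-A}, and directly yields the transport-cost bound. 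Your argument is self-contained and avoids citing external matrix inequalities; the paper's chain of inequalities, on the other hand, records the intermediate Frobenius bound $\|A^{\sfrac12}-B^{\sfrac12}\|_2^2$, which is sharper in general (it does not require $B \preceq A$) but is not actually needed for this corollary.
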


    \subsection{Distribution-Constrained Denoising}\label{subsec:pop-distrcstr}
Next we study the distribution-constrained Bayes problem~\eqref{eqn:pop-distrcstr}. This has recently been studied in \cite{GarciaTrillosSen}, and in this subsection we review some of their results and state them in a form that is amenable to our subsequent analyses. We keep in mind the following motivations throughout this subsection.
\begin{example}[Distortion-perception tradeoff]
    The problem of image restoration in computer vision involves a ground truth image $\Theta\in\Rbb^m$ and an observed image $Z\in\Rbb^d$, where the observation has much lower resolution compared to the ground truth or has been corrupted with noise.
    \citet{BlauMichaeli} refer to $\|\delta(Z)-\Theta\|^2$ as \textit{distortion} and $\rho((\delta_{\B})_{\#}F,G)$ as  \textit{perception} for some metric or divergence $\rho$ on the space of probability measures, and they show that simultaneous minimization of distortion and perception is not possible.
    \citet{DistPercTrade} proposed to take $\rho = W_2$ and to try to minimize the distortion subject to a constraint on the perception, which corresponds to problem~\eqref{eqn:pop-distrcstr} when we impose the constraint of perfect perception.
\end{example}

\begin{example}[Errors-in-variables regression continued]
\label{example:errors_in_variables_distribution}
In Example~\ref{example:moment_reconstruction}, we introduced moment reconstruction. \citet{freedman2004new} propose moment reconstruction with the following motivation. Suppose we can come up with a denoiser $\delta(W,Y)$ of $X$ such that,
$
(\delta(W,Y),Y) \overset{\mathcal{D}}{=} (X,Y),
$
then we could learn about the relationship between $(X,Y)$ by studying $(\delta(W,Y),Y)$. Note that the constraint above corresponds to the distribution-constrained Bayes problem in~\eqref{eqn:pop-distrcstr}. 
\citet{freedman2004new} instead propose the moment reconstruction approach described in Example~\ref{example:moment_reconstruction}, arguing that enforcing the full distributional constraint ``is a very difficult problem, but if we content ourselves with the lesser aim of matching just the first two moments of the joint distribution, then a simple solution can be obtained.''
\end{example}

Our main result in this subsection is Theorem~\ref{thm:pop-distrcstr} which says that problem~\eqref{eqn:pop-distrcstr} admits a tight convex relaxation.
To state this, we recall that $F(\diff z) = \int_{\Rbb^m}P_{\theta}(\diff z)\diff G(\theta)$ is the marginal distribution of $Z$, and let us define the (non-standard) cost function $c_{G}: \Rbb^d \times \Rbb^m \to [0,\infty)$ 
    \begin{equation}\label{eqn:costG}
        c_{G}(z,\eta):=\|\eta - \delta_{\B}(z)\|^2
    \end{equation}
    which, as we emphasize in the notation, depends on the unknown distribution $G$ (and also the  known likelihood $P=\{P_{\theta}\}_{\theta}$) via the conditional expectation $\delta_{\B}(z) = \Ebb[\Theta\,|\,Z=z]$.
    We then have the following, which is a re-statement of \citet[Theorem~2.4]{GarciaTrillosSen} and bears some similarity to \citet[Theorem~1]{DistPercTrade}.
    Recall from Section~\ref{sec:prelim} that $\Gamma(F;G)$ represents the space of all couplings of $F$ and $G$, and let us write $R_{\distrcstr}$ for the minimum value of problem~\eqref{eqn:pop-distrcstr}.
    
    \begin{theorem}\label{thm:pop-distrcstr}
        Under assumptions \eqref{eqn:2M}, \eqref{eqn:Z-dens}, and \eqref{eqn:Bayes-dens}, the problem~
        \begin{equation}\label{eqn:pop-distrcstr-Monge}
    	\underset{\pi\in\Gamma(F;G)}{\textnormal{minimize}} \int_{\Rbb^d\times\Rbb^m}c_G(z,\eta) \, \diff \pi(z,\eta),
        \end{equation}
    has a unique solution $\pi_{\distrcstr}$. This solution is concentrated on the graph of a function which we denote by $\delta_{\distrcstr}:\Rbb^d\to\Rbb^m$. This function $\delta_{\distrcstr}$ can be written as $\delta_{\distrcstr} = \nabla \phi \circ \delta_{\B}$ for some convex function $\phi:\Rbb^m\to\Rbb \cup \{+\infty\}$, and we have $R_{\distrcstr} = R_{\B} + W_2^2(G,(\delta_{\B})_{\#}F)$. Consequently, \eqref{eqn:pop-distrcstr-Monge} is a tight convex relaxation of~\eqref{eqn:pop-distrcstr}, and $\delta_{\distrcstr}$ is the $F$-a.s. unique solution to \eqref{eqn:pop-distrcstr}.
    \end{theorem}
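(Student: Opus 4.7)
The plan is to reduce the non-convex distribution-constrained denoising problem \eqref{eqn:pop-distrcstr} to a standard quadratic optimal transport problem on $\Rbb^m$ between $\tilde F := (\delta_\B)_\#F$ and $G$, and then invoke Brenier's theorem.

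First, I would exploit the tower property to peel off the intrinsic Bayes risk. For any denoiser $\delta$ with $\delta(Z)\overset{\mathcal{D}}{=}\Theta$, conditioning on $Z$ makes the cross term in $\|\delta(Z)-\Theta\|^2 = \|(\delta(Z)-\delta_\B(Z))+(\delta_\B(Z)-\Theta)\|^2$ vanish (since $\Ebb[\Theta\mid Z] = \delta_\B(Z)$), yielding
\begin{equation*}
\Ebb[\|\delta(Z)-\Theta\|^2] \;=\; R_\B + \Ebb[\|\delta(Z)-\delta_\B(Z)\|^2].
\end{equation*}
Similarly, for any $\pi\in\Gamma(F;G)$ one has $\int c_G\,\diff\pi = \int \|\eta-\zeta\|^2\,\diff\tilde\pi(\zeta,\eta)$, where $\tilde\pi := (\delta_\B,\id)_\#\pi \in \Gamma(\tilde F;G)$; and every $\tilde\pi\in\Gamma(\tilde F;G)$ lifts back to some $\pi\in\Gamma(F;G)$ by disintegrating $\tilde\pi$ along the first marginal and composing with $\delta_\B$. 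So \eqref{eqn:pop-distrcstr-Monge} has the same value as the standard Kantorovich OT between $\tilde F$ and $G$, namely $W_2^2(\tilde F,G)$, and it is the convex relaxation of the ``reduced Monge problem'' $\inf_\delta\Ebb[\|\delta(Z)-\delta_\B(Z)\|^2]$ over $\delta$ with $\delta(Z)\sim G$, which differs from \eqref{eqn:pop-distrcstr} only by the additive constant $R_\B$.

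Next, I would apply Brenier's theorem on $\Rbb^m$. Assumption \eqref{eqn:Bayes-dens} provides a Lebesgue density for $\tilde F$, while $\int\|\zeta\|^2\,\diff\tilde F(\zeta)\le\int\|\theta\|^2\,\diff G(\theta)<\infty$ by Jensen and \eqref{eqn:2M}, so $\tilde F,G\in\Pcal_2(\Rbb^m)$. Brenier's theorem then yields a (unique up to $\tilde F$-equivalence) convex $\phi:\Rbb^m\to\Rbb\cup\{+\infty\}$ whose gradient pushes $\tilde F$ onto $G$ and solves the Monge problem between $\tilde F$ and $G$; the unique Kantorovich optimizer is $(\id,\nabla\phi)_\#\tilde F$. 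Lifting back produces the optimizer $\pi_{\distrcstr} = (\id,\nabla\phi\circ\delta_\B)_\#F$ of \eqref{eqn:pop-distrcstr-Monge}, which is concentrated on the graph of $\delta_{\distrcstr}:=\nabla\phi\circ\delta_\B$. The denoiser $\delta_{\distrcstr}$ satisfies the distributional constraint (since $\nabla\phi$ is a transport map from $\tilde F$ to $G$) and attains the lower bound $R_\B + W_2^2(\tilde F,G)$ from the tower identity; hence it solves \eqref{eqn:pop-distrcstr}, giving the risk formula $R_{\distrcstr} = R_\B + W_2^2(G,\tilde F)$ and tightness of the convex relaxation. The $F$-a.s.\ uniqueness of $\delta_{\distrcstr}$ is inherited from the $\tilde F$-a.s.\ uniqueness of $\nabla\phi$.

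The main obstacle I anticipate is the bookkeeping for the pushforward step: verifying that $\pi\mapsto(\delta_\B,\id)_\#\pi$ sets up a cost-preserving correspondence between $\Gamma(F;G)$ and $\Gamma(\tilde F;G)$, and, more delicately, that uniqueness of the Kantorovich optimizer on $\Rbb^m\times\Rbb^m$ forces uniqueness of $\pi_{\distrcstr}$ on $\Rbb^d\times\Rbb^m$ by requiring every optimal lift to concentrate on the graph of $\nabla\phi\circ\delta_\B$. Once this reduction is in place, the remainder is a direct application of Brenier's theorem together with the tower-property decomposition.
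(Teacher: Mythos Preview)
Your proposal is correct. The paper does not supply its own proof of this theorem; it is presented as a restatement of \citet[Theorem~2.4]{GarciaTrillosSen}, and the nearest argument in the paper (the proof of Theorem~\ref{thm:gen-cstr-num}) establishes existence via compactness but still defers the map structure and uniqueness to that same citation. Your direct reduction---pushing $\pi\in\Gamma(F;G)$ forward by $(\delta_\B,\id)$ to land in $\Gamma(\tilde F;G)$ in a cost-preserving and surjective way, then invoking Brenier on $\tilde F,G\in\Pcal_2(\Rbb^m)$---is precisely the mechanism underlying the cited result, so there is no substantive difference in approach. Your treatment of the uniqueness-lifting step is sound: once the Kantorovich optimizer on $\Rbb^m\times\Rbb^m$ concentrates on the graph of $\nabla\phi$, any optimal $\pi\in\Gamma(F;G)$ must concentrate on the graph of $\nabla\phi\circ\delta_\B$, and the fixed first marginal $F$ then determines $\pi$ uniquely. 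As an aside, your argument uses only \eqref{eqn:2M} and \eqref{eqn:Bayes-dens}; assumption \eqref{eqn:Z-dens} does not enter directly and is included in the paper chiefly as a natural condition under which \eqref{eqn:Bayes-dens} can be verified.
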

The above result says that, under appropriate conditions, problem~\eqref{eqn:pop-distrcstr} indeed has a unique solution which can be obtained by solving the linear program~\eqref{eqn:pop-distrcstr-Monge}.
Further, the risk of the distribution-constrained denoiser $\delta_{\distrcstr}(Z)$ is the sum of the Bayes risk $R_{\B}$ and the squared Wasserstein distance between the distributions of $\Theta$ and $\delta_{\B}(Z)$. Contrast this with Theorem~\ref{thm:VCB-solution}, where we showed that the risk of the variance-constrained denoiser $\delta_{\varcstr}(Z)$ is sum of the Bayes risk and the squared Bures-Wasserstein distance between the covariance matrices of $\Theta$ and $\delta_\B(Z)$.
    
        \subsection{General-Constrained Denoising}\label{subsec:pop-gencstr}

        Lastly, we consider problem~\eqref{eqn:pop-gencstr} of general-constrained Bayes denoising, which allows enough flexibility to encapsulate problems~\eqref{eqn:pop-varcstr} and~\eqref{eqn:pop-distrcstr} as well as many others.
        Appendix~\ref{subsec:pop-gen-cstr} contains our detailed discussion, including conditions under which problem~\eqref{eqn:pop-gencstr} admits a tight convex relaxation and a unique solution $\delta_{\gencstr}$ (Theorem~\ref{thm:gen-cstr-num}), which is similar to the analogous result  for problem~\eqref{eqn:pop-distrcstr} (Theorem~\ref{thm:pop-distrcstr}).
    As an example of a general-constrained denoising problem relevant to our application in Subsection~\ref{subsec:baseball} (see Appendix~\ref{sec:gen-cstr} for more examples), we consider the following:

        \begin{example}[Variance constraint and support constraint]\label{ex:var-supp-cstr}
            Suppose that $G$ is known to have support contained in a closed set $S\subseteq\Rbb^m$. In that case, the posterior mean satisfies $\delta_{\B}(Z) \in S$ almost surely. However, the variance constrained denoiser $\delta_{\varcstr}(\cdot)$ may take on values in $S$ with positive probability, which can be undesirable.
            Instead,
            one may aim to minimize the denoising risk subject to simultaneous constraints on the variance and the support of $\delta(Z)$, thereby solving
            \begin{equation}    		\underset{\delta:\Rbb^{d}\to\Rbb^{m}}{\textnormal{minimize}} \,\,\Ebb\left[\|\delta(Z)-\Theta\|^2\right]\quad\textnormal{s.t.}\quad\Ebb[\delta(Z)] = \Ebb[\Theta],\,\Cov(\delta(Z)) = \Cov(\Theta),\,\, \Pbb(\delta(Z)\in S) = 1.
        \end{equation}
        To see that this can be cast as a special case of problem \eqref{eqn:pop-gencstr}, take $\psi(\eta) :=\min_{\theta\in S}\|\eta-\theta\|$ and note that $\Pbb(\delta(Z)\in S)=1$ is equivalent to $\Ebb[\psi(\delta(Z))] = \Ebb[\psi(\Theta)] = 0$
        In the case $m=d=1$ an important constraint on the support is nonnegativity, i.e., $S = [0,\infty)$.
        \end{example}

	\section{Empirical Constrained Bayes Denoising}\label{sec:emp}

    In this section, we consider constrained denoising at the empirical level, i.e., we only observe $Z_1,\ldots, Z_n$ from the independent, identically-distributed (i.i.d.) sequence of pairs $(\Theta_1,Z_1),\ldots, (\Theta_n,Z_n)$ coming from model~\eqref{eq:Mdl}.
    Importantly, we assume throughout this section that the distribution $G$ of $\Theta_1,\ldots, \Theta_n$ is unknown; it must therefore be estimated (either implicitly or explicitly) in order to apply the results of Section~\ref{sec:pop} about the corresponding oracle constrained Bayes denoising problems.

    Specifically, we suppose that $ \delta_{\ast}$ is one of $\delta_{\varcstr},\delta_{\distrcstr}$, or $\delta_{\gencstr}$, and we aim to develop an EB estimator $\hat{\delta}_{\ast}$ of $\delta_{\ast}$.
    All of our estimators arise by applying some further processing to a denoiser $\hat{\delta}_{\B}$ which is already known to be a good approximation of the Bayes optimal (that is, unconstrained) denoiser $\delta_{\B}$.
    More precisely, if we assume a priori that $\hat{\delta}_{\B}$ satisfies $\frac{1}{n}\sum_{i=1}^{n}\left\|\hat{\delta}_{\B}(Z_i)-\delta_{\B}(Z_i)\right\|^2 \overset{\Pbb}{\to} 0$, then we can construct $\hat{\delta}_{\ast}$ from $\hat{\delta}_{\B}$ in such a way that we have $\hat{\delta}_{\ast}\to \delta_{\ast}$ as $n\to\infty$; moreover the rate of convergence of $\hat{\delta}_{\ast}$ to $\delta_{\ast}$ depends explicitly on the rate of convergence of $\hat{\delta}_{\B}$ to $\delta_{\B}$.
    This flexibility in choosing $\hat{\delta}_{\B}$ is important for the broad applicability of the methodology; the reader can keep in mind the following examples for how EB denoisers may arise.
    
    \begin{example}[$G$-modeling]\label{ex:g-model}
        If it is possible to derive an estimator $\hat G$ that suitably converges to the unknown $G$, then an EB approximation $\hat{\delta}_{\B}$ of $\delta_{\B}$ arises by computing the posterior mean of $\Theta$ given $Z$, assuming that $\Theta\sim \hat G$.
        Typically, $\hat G$ arises by solving the following nonparametric maximum likelihood estimation (NPMLE) problem
        \begin{equation}\label{eqn:NPMLE}
            \hat G_n \; \in \;\underset{H\in \Pcal(\Rbb^m)}{\arg\max}\,\frac{1}{n}\sum_{i=1}^{n}\log  \left(\int_{\Rbb^m}p_{\theta}(Z_i)\diff H(\theta) \right),
        \end{equation}
        where $p_{\theta}$ is the density of $P_{\theta}$ with respect to some fixed reference measure (usually the Lebesgue measure or the counting measure) for each $\theta$.
        The main applications of $G$-modeling are the setting where $\{P_{\theta}\}_{\theta}$ is a Gaussian location family studied by \citet{JiangZhang,SahaGuntuboyina, Soloff} and the setting where $\{P_{\theta}\}_{\theta}$ is the Poisson family studied by \citet{ShenWu, PolyanskiyWuPoisson}; these cited works show that the convergence of $\hat{\delta}_{\B}$ to $\delta_{\B}$ achieves nearly the parametric rate, that is $n^{-1}$ up to logarithmic factors, whenever $G$ is light-tailed.
        This partially explains the remarkable success of EB: while deconvolution is hard, estimation of the Bayes denoiser is easy.
    \end{example}

    \begin{example}[Smooth $G$-modeling]\label{ex:smooth-f-model}
        If it is known that the prior $G$ is itself a mixture of Gaussians whose components have variance satisfying some lower bound (in the positive semi-definite order), then one can restrict the optimization in  \eqref{eqn:NPMLE} accordingly, leading to the so-called \textit{smooth NPMLE} introduced by \citet{magder1996smooth}.
        When such restrictions are well-specified (i.e., when the user-specified component variance is at least as small as the component variance in $G$), the rate of convergence is identical to that of the unrestricted NPMLE in Example~\ref{ex:g-model}.
        In some settings, the increased regularity of smooth NPMLE over generic NPMLE is desirable.
    \end{example}

    In addition to $G$-modeling and smooth $G$-modeling, we note that other EB approaches are possible, including $F$-modeling and conjugate parametric models.
    In Appendix~\ref{sec:additional-EB} we describe these in more detail, including some explicit formulas in the case of conjugate parametric models.
    Our results in this section are divided into variance constraints (Subsection~\ref{subsec:emp-varcstr}), distributional constraints (Subsection~\ref{subsec:emp-distrcstr}), and general constraints (Subsection~\ref{subsec:emp-gencstr}).
    The proofs of all results in this section can be found in Appendix~\ref{app:emp}.

    \subsection{Variance-Constrained Denoising}\label{subsec:emp-varcstr}

    We begin with problem~\eqref{eqn:pop-varcstr} in which our results are the most complete.
    That is, we consider finding, on the basis of the observations $Z_1,\ldots, Z_n$ and the known likelihood $\{P_{\theta}\}_{\theta}$, an estimator $\hat{\delta}_{\varcstr}$ of the solution $\delta_{\varcstr}$  to problem~\eqref{eqn:pop-varcstr}, defined in~\eqref{eq:varcstr}.
    Note that $\delta_{\varcstr}$ need not be the unique solution to \eqref{eqn:pop-varcstr}, but Theorem~\ref{thm:VCB-solution} guarantees that it always exists and hence it is an oracle denoiser that we may target.

    From the form of $\delta_{\varcstr}$ in~\eqref{eq:varcstr}, it is clear that, in addition to the existence of an EB denoiser $\hat{\delta}_{\B}$, we also need to estimate the matrices $\Cov(\Theta)$ and $\Cov(\delta_\B(Z))$.
    The latter can be estimated via the empirical covariance matrix of $\hat{\delta}_{\B}(Z_1),\ldots, \hat{\delta}_{\B}(Z_n)$, and the former can be estimated from the data $Z_1,\ldots,Z_n$ and the likelihood $\{P_{\theta}\}_{\theta}$ by applying the law of total covariance in a suitable way.
    More precisely, we propose a constrained denoising procedure in Algorithm~\ref{alg:Gaussian-mod-NP-prior} and our main result below will provide an explicit rate of convergence for the resulting constrained denoiser.

    Let us describe the assumptions that will be used in our main result.    
    First, we need some form of integrability of the prior $G$, which is similar to but slightly stronger than what we assumed in \eqref{eqn:2M} for the oracle problem.
    Namely, we assume
    \begin{equation}\label{eqn:FM}
        \int_{\Rbb^m}\|\theta\|^{q}\diff G(\theta)<\infty \qquad\textnormal{for some}\qquad 2 < q \le 4,
        \tag{FM}
    \end{equation}
    and the exponent $q$ will manifest in the rate of convergence below.    
    
    Next, we need some assumptions that allow us to estimate $\Cov(\Theta)$.
    For example, we require the following, which states that  the likelihood $\{P_{\theta}\}_{\theta}$ is parameterized by its mean:
    \begin{equation}\label{eqn:M}
        \Ebb[Z\,|\,\Theta] = \Theta \qquad\textnormal{ almost surely.}
        \tag{M}
    \end{equation}
    This assumption forces $d=m$ and is satisfied for most of the usual likelihoods $\{P_\theta\}_{\theta}$, including the Gaussian and Poisson cases
    (but note that it forces us to use ```scale parameterizations'' rather than ``rate parameterizations'' in, e.g., geometric and exponential models).
    Additionally, we require the existence of an estimator $\hat\Sigma_n$ of $\Sigma:=\Ebb[\Cov(Z\,|\,\Theta)]$ with a parametric rate of convergence
    \begin{equation}\label{eqn:EC}
        \|\hat \Sigma_n-\Sigma\|^2 = O_{\Pbb}\left(n^{-1}\right).
        \tag{EC}
    \end{equation}
    This condition is satisfied in the Gaussian case since $\Sigma$ is exactly known and can be estimated via $\hat \Sigma_n := \Sigma$, and it is satisfied in the Poisson case since \eqref{eqn:M} implies $\Sigma = \Ebb[Z]$ and the latter can be estimated at rate $n^{-1}$ via the sample mean of $Z_1,\ldots, Z_n$, by virtue  of \eqref{eqn:FM} (or even \eqref{eqn:2M}).
    Since the law of total covariance yields
    \begin{equation}\label{eqn:law-total-cov}
        \Cov(Z) = \Ebb[\Cov(Z\,|\,\Theta)] + \Cov(\Ebb[Z\,|\,\Theta]) = \Sigma + \Cov(\Theta),
    \end{equation}
    we may use assumptions \eqref{eqn:M} and \eqref{eqn:EC} to estimate $\Cov(\Theta)$ via $(\hat S_n-\hat\Sigma_n)_{+}$, where $(\,\cdot\,)_{+}$ denotes positive semi-definite truncation and $\hat S_n$ denotes the empirical covariance matrix of $Z_1,\ldots, Z_n$.    
    
    Last, we assume that we are equipped with some EB approximation $\hat{\delta}_{\B}$ of $\delta_{\B}$, and that it satisfies the following quantitative notion of convergence: For some specified sequence $\alpha_n\to 0$, we have        \begin{equation}\label{eqn:EBQ}
        \frac{1}{n}\sum_{i=1}^{n}\left\|\hat{\delta}_{\B}(Z_i)-\delta_{\B}(Z_i)\right\|^2 = O_{\Pbb}(\alpha_n),
        \tag{EBQ}
    \end{equation}
    and the rate $\alpha_n$ will manifest in the rate of convergence below.
    From the results cited in the examples at the beginning of this section, we may determine $\alpha_n$ in many models of interest.
        
    \begin{algorithm}[t]
	\caption{An EB approximation $\hat \delta_{\varcstr}$ of the oracle variance-constrained Bayes denoiser $\delta_{\varcstr}$. 
    }\label{alg:Gaussian-mod-NP-prior}
	
	\begin{algorithmic}[1]
            
		\Procedure{\textbf{VarianceConstrainedEB}}{$Z_1,\ldots, Z_n;\{P_{\theta}\}_{\theta}$}
		  \State \textbf{input:} samples $Z_1,\ldots, Z_n\in\Rbb^m$, likelihood $\{P_{\theta}\}_{\theta}$
		\State \textbf{output:} denoising function $\hat{\delta}_{\varcstr}:\Rbb^m\to\Rbb^m$
            \State $\hat{\delta}_{\B}(\,\cdot\,) \leftarrow$ EB approximation of $\delta_{\B}(\,\cdot\,)$
            \State $\hat{\Sigma}_{n}\leftarrow$ estimate of $\Sigma$

            \State $\hat M \leftarrow$ sample covariance matrix of $\hat{\delta}_{\B}(Z_1),\ldots \hat{\delta}_{\B}(Z_n)$
            \State $\hat \mu \leftarrow$ sample mean of $Z_1,\ldots, Z_n$
            \State $\hat S \leftarrow$ sample covariance matrix of $Z_1,\ldots, Z_n$
            \State $\hat A \leftarrow (\hat S - \hat \Sigma_n)_{+}$
            \State $\hat{\transport}\leftarrow \hat M^{-\sfrac{1}{2}}(\hat M^{\sfrac{1}{2}} \hat A \hat M^{\sfrac{1}{2}})^{\sfrac{1}{2}}\hat M^{-\sfrac{1}{2}}$
            \State $\hat{\delta}_{\varcstr}(\,\cdot\,) \leftarrow \hat{\transport}(\hat{\delta}_{\B}(\,\cdot\,)-\hat \mu) + \hat \mu$

            \State \textbf{return} $\hat{\delta}_{\varcstr}$
		\EndProcedure
	\end{algorithmic}
    \end{algorithm}

    This leads us to the main result, whose proof appears in Appendix~\ref{app:emp-varcstr}:

    \begin{theorem}\label{thm:varcstr-gm-npp}
        Under \eqref{eqn:M}, \eqref{eqn:FM}, \eqref{eqn:EC}, and \eqref{eqn:EBQ}, the denoiser $\hat \delta_{\varcstr}$ from Algorithm~\ref{alg:Gaussian-mod-NP-prior} satisfies
        \begin{equation*}
            \frac{1}{n}\sum_{i=1}^{n}\left\|\hat \delta_{\varcstr}(Z_i)-\delta_{\varcstr}(Z_i)\right\|_2^2 = O_{\Pbb}\left(\alpha_n\vee n^{-2(q-2)/q}\right)\qquad \textnormal{as}\qquad n\to\infty.
        \end{equation*}
    \end{theorem}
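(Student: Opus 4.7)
The plan is to decompose the pointwise error of the variance-constrained denoiser into three algebraically clean pieces. Writing $\mu := \Ebb[\Theta]$, $M := \Cov(\delta_\B(Z))$, $A := \Cov(\Theta)$, $\transport := \transport_M^A$, and $\hat\mu, \hat M, \hat A, \hat{\transport}$ for their empirical analogues from Algorithm~\ref{alg:Gaussian-mod-NP-prior}, direct algebra yields
\begin{equation*}
\hat\delta_{\varcstr}(z) - \delta_{\varcstr}(z) \;=\; \hat{\transport}\bigl[\hat\delta_\B(z) - \delta_\B(z)\bigr] \;+\; \bigl[\hat{\transport} - \transport\bigr]\bigl[\delta_\B(z) - \mu\bigr] \;+\; \bigl[I - \hat{\transport}\bigr]\bigl[\hat\mu - \mu\bigr].
\end{equation*}
Squaring, averaging over $Z_1,\ldots,Z_n$, and applying $(a+b+c)^2 \le 3(a^2+b^2+c^2)$ reduces the theorem to bounding three products, each the operator norm of a matrix factor times an in-sample second-moment factor.

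Once we establish $\hat{\transport} \overset{\Pbb}{\to} \transport$ (so that $\|\hat{\transport}\|_{\mathrm{op}} = O_\Pbb(1)$), the first product is $O_\Pbb(\alpha_n)$ directly by~\eqref{eqn:EBQ}. The third product is dominated by $\|\hat\mu - \mu\|^2$: under~\eqref{eqn:M}, $\Ebb[Z] = \mu$, and under~\eqref{eqn:FM} and~\eqref{eqn:EC}, $\Ebb\|Z\|^2 \le 2\bigl(\Ebb\|\Theta\|^2 + \trace(\Sigma)\bigr) < \infty$, giving $\|\hat\mu - \mu\|^2 = O_\Pbb(n^{-1})$, which is absorbed into $n^{-2(q-2)/q}$ since $q \le 4$. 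The second product has average-norm factor $\tfrac{1}{n}\sum_i \|\delta_\B(Z_i) - \mu\|^2 = O_\Pbb(1)$ by the law of large numbers (Jensen gives $\Ebb\|\delta_\B(Z)\|^2 \le \Ebb\|\Theta\|^2 < \infty$ from~\eqref{eqn:FM}), so its rate is governed entirely by $\|\hat{\transport} - \transport\|_{\mathrm{op}}^2$.

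The central obstacle is thus showing $\|\hat{\transport} - \transport\|_{\mathrm{op}}^2 = O_\Pbb\bigl(\alpha_n \vee n^{-2(q-2)/q}\bigr)$. This splits into two ingredients: (i) the Bures--Wasserstein map $(M, A) \mapsto \transport_M^A = M^{-\sfrac{1}{2}}(M^{\sfrac{1}{2}} A M^{\sfrac{1}{2}})^{\sfrac{1}{2}} M^{-\sfrac{1}{2}}$ is locally Lipschitz on $\covspace(m) \times \covspace(m)$ (standard, by Fréchet differentiability of the matrix square root on the positive-definite cone), and (ii) rates for $\|\hat M - M\|_{\mathrm{op}}$ and $\|\hat A - A\|_{\mathrm{op}}$. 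For $\hat A = (\hat S - \hat\Sigma_n)_+$, we use non-expansiveness of PSD truncation (since $A = A_+ \succ 0$) together with the split $\hat S - \hat\Sigma_n - A = [\hat S - \Cov(Z)] + [\Sigma - \hat\Sigma_n]$: the second summand is $O_\Pbb(n^{-1/2})$ by~\eqref{eqn:EC}, while the first admits the rate $O_\Pbb(n^{-(q-2)/q})$ from a von Bahr--Esseen bound applied entrywise to $(Z_i - \mu)(Z_i - \mu)^T - \Cov(Z)$ under the $q$-th moment of $Z$ (inherited from~\eqref{eqn:FM} together with the standard noise-moment conditions implicit in~\eqref{eqn:EC}). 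For $\hat M$, we bracket via the infeasible sample covariance $\tilde M$ of $\delta_\B(Z_1),\ldots,\delta_\B(Z_n)$: the same von Bahr--Esseen bound (using $\Ebb\|\delta_\B(Z)\|^q \le \Ebb\|\Theta\|^q < \infty$) gives $\|\tilde M - M\|_{\mathrm{op}} = O_\Pbb(n^{-(q-2)/q})$, while a bilinear expansion of $\hat M - \tilde M$ combined with Cauchy--Schwarz and~\eqref{eqn:EBQ} yields $\|\hat M - \tilde M\|_{\mathrm{op}} = O_\Pbb(\sqrt{\alpha_n})$. I expect the main difficulty to lie in the careful bookkeeping for $\hat M - \tilde M$ so that only the in-sample~\eqref{eqn:EBQ} bound is invoked, and in verifying that the local Lipschitz constant of the Bures--Wasserstein map remains $O_\Pbb(1)$ --- which requires $\lambda_{\min}(\hat M)$ and $\lambda_{\min}(\hat A)$ to stay bounded below in probability, itself following from the above rates and Weyl's inequality (using $M \succ 0$ implicit in the well-posedness of $\delta_{\varcstr}$ and $A \succeq M \succ 0$ from~\eqref{eqn:M-prec-A}).
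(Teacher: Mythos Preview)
Your proposal is correct and follows essentially the same route as the paper's proof. The only cosmetic difference is in the three-term decomposition: the paper assumes $\mu=0$ without loss of generality and applies the operator difference $(\hat{\transport}-\transport)$ to $\hat\delta_\B(Z_i)$ rather than to $\delta_\B(Z_i)-\mu$, which then requires the extra step of showing $\tfrac{1}{n}\sum_i\|\hat\delta_\B(Z_i)\|^2=O_\Pbb(1)$; your version sidesteps that by centering at the true $\mu$. Otherwise the ingredients match one-to-one: the paper packages your $\hat M$--$\tilde M$--$M$ bracketing as a separate lemma (their Lemma~\ref{lem:emp-cov-bd}), invokes Marcinkiewicz--Zygmund (Durrett, Theorem~2.5.12) where you cite von Bahr--Esseen for the same $n^{-(q-2)/q}$ rate, and handles local Lipschitzness of $(M,A)\mapsto\transport_M^A$ via the mean-value theorem with an explicit Jacobian bound from \citet{Kroshnin} rather than appealing to Fr\'echet differentiability of the matrix square root.
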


    The conclusion of Theorem~\ref{thm:varcstr-gm-npp} for the denoiser given in Algorithm~\ref{alg:Gaussian-mod-NP-prior} is that the rate of convergence of variance-constrained Bayes denoising is determined by whichever of unconstrained Bayes denoising (line 4) and covariance estimation (line 8) is slower.
    Moreover, Appendix \ref{app:rates} contains a detailed discussion of the rates of convergence in the Gaussian and Poisson models arising from this result, and we observe that it is possible for either of these terms to dominate.
    In the case of a Gaussian likelihood and a subexponential latent variable distribution, we note that the resulting rate is nearly parametric, i.e. $n^{-1}$ up to logarithmic factors.
    In the case of an an exponential family likelihood and a conjugate prior, the resulting rate is exactly parametric.

    \subsection{Distribution-Constrained Denoising}\label{subsec:emp-distrcstr}

    Next we consider the case of distributional constraints, as in problem~\eqref{eqn:pop-distrcstr}.
    Recall that in Subsection~\ref{subsec:pop-distrcstr} we provided sufficient conditions for the existence of a unique solution $\delta_{\distrcstr}$ to problem~\eqref{eqn:pop-distrcstr}; our present goal is to develop an EB methodology which targets this denoiser.

    From the characterization of $\delta_{\distrcstr}$ as the solution to the OT problem~\eqref{eqn:pop-distrcstr-Monge}, it is clear that, in addition to the existence of an EB denoiser $\hat{\delta}_{\B}$, we also need to be able to estimate the distribution $F$, the distribution $G$, and the cost function $c_G$. (Recall its definition in~\eqref{eqn:costG}.) 
    We may estimate $F$ via the empirical distribution of $Z_1,\ldots, Z_n$,  $G$ as $\hat{G}_n$ via $G$-modeling (as in, e.g., Examples~\ref{ex:g-model} and~\ref{ex:smooth-f-model}), and $c_G$ as a function of the given EB denoiser $\hat{\delta}_{\B}$; then we use these approximations in an OT problem paralleling~\eqref{eqn:pop-distrcstr-Monge}, and the barycentric projection of its optimal coupling is an approximation of $\delta_{\distrcstr}$.
    (Recall from Section~\ref{sec:prelim} that the barycentric projection recovers the OT map if the optimal coupling is supported on the graph of a function, and that it provides an approximation of the OT map otherwise.)
    We define this procedure explicitly in Algorithm~\ref{alg:distr-cstr}, and we prove a rate of convergence as our main result below.
    Details for implementing Algorithm~\ref{alg:distr-cstr} in practice can be found in Appendix~\ref{app:comp}.

    Let us describe the assumptions needed in order to state and prove our main result.
    First, as described above, we need a consistent estimator $\hat G_n$ of $G$, meaning that for some specified sequence $\beta_n\to 0$ we have
    \begin{equation}\label{eqn:CM}
        W_2^2(\hat G_n, G) = O_{\Pbb}(\beta_n).
        \tag{D}
    \end{equation}
    In most cases, we take $\hat G_n$ to be the NPMLE of $G$ (or a smoothed version thereof), in which case many recent results (e.g., \citet{Soloff, PolyanskiyWuPoisson}) can be used to determine $\beta_n$. 
    More generally, estimation of $G$ is referred to as \textit{deconvolution} and it is known that optimal rates of convergence can be slow \citep{CarollHallDeconvolution, FanDeconvolution, ZhangDeconvolution}.
    Second, we need to quantify convergence of the empirical distribution of $\Theta_1,\ldots, \Theta_n$ to $G$ in the Wasserstein metric; that is, for some sequence $\gamma_n\to 0$, we assume
    \begin{equation}\label{eqn:EW}
        W_2^2\left(\bar G_n,\; G\right) = O_{\Pbb}(\gamma_n) \qquad \textnormal{ where }\qquad \bar G_n := \frac{1}{n}\sum_{i=1}^{n}\delta_{\Theta_i}.
        \tag{EW}
    \end{equation}
    It is known that $\gamma_n$ depends on the dimension $m$, and
    explicit forms of $\gamma_n$ can be found in \citet{FournierGuillin} and \citet[Section~2.9]{StatisticalOT}.
    Third, recall that, under assumptions~\eqref{eqn:2M}, \eqref{eqn:Z-dens}, and \eqref{eqn:Bayes-dens}, Theorem~\ref{thm:pop-distrcstr} guarantees that problem~\eqref{eqn:pop-distrcstr} admits a unique solution $\delta_{\distrcstr} = \nabla \phi\circ \delta_{\B}$; presently, we further assume the regularity that
    \begin{equation}\label{eqn:L}
        \phi \textnormal{ is $\lambda$-strongly convex and $L$-smooth for some } \lambda,L>0.
        \tag{R}
    \end{equation}
    While this condition may be hard to verify in practice, it is indeed commonly assumed in literature which proves rates of convergence for estimation of OT maps (e.g., \citet{ManoleNilesWeed, SlawskiSen, DebGhosalSen}).

    \begin{algorithm}[t]
	\caption{An EB estimator $\hat \delta_{\distrcstr}$ of the oracle distribution-constrained Bayes denoiser $\delta_{\distrcstr}$. 
    }\label{alg:distr-cstr}
	
	\begin{algorithmic}[1]
            
		\Procedure{\textbf{DistributionConstrainedEB}}{$Z_1,\ldots, Z_n;\{P_{\theta}\}_{\theta}$}
		  \State \textbf{input:} samples $Z_1,\ldots, Z_n$, likelihood $\{P_{\theta}\}_{\theta}$
		\State \textbf{output:} denoising function $\hat{\delta}_{\distrcstr}:\{Z_1,\ldots, Z_n\}\to\Rbb^m$
            \State $\hat{\delta}_{\B}(\,\cdot\,) \leftarrow$ EB approximation of $\delta_{\B}(\,\cdot\,)$
            \State $\hat{G}_n\leftarrow$ estimate of $G$
            \State $\hat c_{n}(Z_i,\eta) \leftarrow \|\hat{\delta}_{\B}(Z_i)-\eta\|^2$ for all $1\le i \le n$ and $\eta\in \Rbb^m$
            \State $\hat{\pi}_{\distrcstr} \leftarrow$ \textbf{minimize} $\int_{\Rbb^d\times \Rbb^m}\hat{c}_{n}(z,\eta)\diff \pi(z,\eta)$
            \State \hspace{3.5em} \textbf{over} \hspace{2.1em} probability measures $\pi\in\Pcal(\Rbb^d\times\Rbb^m)$
            \State \hspace{3.5em} \textbf{with}\hspace{2.3em} $\pi(\{Z_i\}\times\Rbb^m) = \frac{1}{n}$ for all $1\le i \le n$
            \State \hspace{3.5em} \textbf{and} \hspace{2.3em} $\pi(\Rbb^d\times\diff\eta) = \hat{G}_n(\diff\eta)$
            \State $\hat{\delta}_{\distrcstr}(Z_i) \leftarrow \int_{\Rbb^m}\eta\,\diff \hat{\pi}_{\distrcstr}(\eta\,|\,Z_i)$ for all $1\le i \le n$

            \State \textbf{return} $\hat{\delta}_{\distrcstr}$
		\EndProcedure
	\end{algorithmic}
    \end{algorithm}

    These considerations lead us to our next main result.

    \begin{theorem}\label{thm:emp-distrcstr}
        Under assumptions \eqref{eqn:Z-dens}, \eqref{eqn:Bayes-dens}, \eqref{eqn:L}, \eqref{eqn:CM}, \eqref{eqn:EW}, and \eqref{eqn:EBQ}, the denoiser $\hat{\delta}_{\distrcstr}$ from Algorithm~\ref{alg:distr-cstr} satisfies
        \begin{equation}\label{eqn:distr-cstr-regret}
            \frac{1}{n}\sum_{i=1}^{n}\left\|\hat{\delta}_{\distrcstr}(Z_i)-\delta_{\distrcstr}(Z_i)\right\|^2 =O_{\Pbb}\left(\alpha_n^{\sfrac{1}{2}}\vee \beta_n\vee \gamma_n\right) \qquad \textnormal{as}\qquad n\to\infty.
        \end{equation}
    \end{theorem}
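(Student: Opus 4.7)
My plan is to recast Algorithm~\ref{alg:distr-cstr} as an OT problem on the ``denoised'' side, to compare it against the oracle OT formulation of Theorem~\ref{thm:pop-distrcstr}, and then to invoke OT-map stability under~\eqref{eqn:L}. Set $\hat{\nu}_n := \frac{1}{n}\sum_{i=1}^{n}\delta_{\hat{\delta}_{\B}(Z_i)}$, $\mu := (\delta_{\B})_{\#}F$, and $T^* := \nabla\phi$, so that the oracle reads $\delta_{\distrcstr}(Z_i) = T^*(\delta_{\B}(Z_i))$. Pushing the optimization in line~7 of Algorithm~\ref{alg:distr-cstr} forward through $(\hat{\delta}_{\B},\mathrm{id})$ identifies the algorithm's output as $\hat{\delta}_{\distrcstr}(Z_i) = \hat{T}_n(\hat{\delta}_{\B}(Z_i))$, where $\hat{T}_n$ is the barycentric projection of the quadratic-cost OT coupling between $\hat{\nu}_n$ and $\hat{G}_n$.

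I would then split the error by a triangle inequality:
\begin{equation*}
    \frac{1}{n}\sum_{i=1}^{n}\|\hat{\delta}_{\distrcstr}(Z_i)-\delta_{\distrcstr}(Z_i)\|^2 \le 2 A_n + 2 B_n,
\end{equation*}
where $A_n := \int \|\hat{T}_n - T^*\|^2 \, \diff\hat{\nu}_n$ captures OT-map stability and $B_n := \frac{1}{n}\sum_i \|T^*(\hat{\delta}_{\B}(Z_i)) - T^*(\delta_{\B}(Z_i))\|^2$ is the downstream effect of the EB error on the oracle map. The $L$-smoothness in~\eqref{eqn:L} makes $T^*$ be $L$-Lipschitz, so that $B_n \le L^2\cdot\frac{1}{n}\sum_i \|\hat{\delta}_{\B}(Z_i) - \delta_{\B}(Z_i)\|^2 = O_{\Pbb}(\alpha_n)$ by~\eqref{eqn:EBQ}.

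For $A_n$, I would invoke an OT-map stability bound available under the joint $\lambda$-strong convexity and $L$-smoothness of $\phi$ (in the style of~\citet{ManoleNilesWeed}, \citet{SlawskiSen}, or \citet{DebGhosalSen}), yielding essentially $A_n \lesssim W_2^2(\hat{G}_n, G) + W_2(\hat{\nu}_n, \mu)\cdot\mathrm{const}$. The $\hat{G}_n$ contribution is $O_{\Pbb}(\beta_n)$ by~\eqref{eqn:CM}. For $W_2^2(\hat{\nu}_n, \mu)$, a triangle inequality through $\tilde{\nu}_n := \frac{1}{n}\sum_i \delta_{\delta_{\B}(Z_i)}$ gives a first leg $W_2^2(\hat{\nu}_n, \tilde{\nu}_n) \le \frac{1}{n}\sum_i\|\hat{\delta}_{\B}(Z_i)-\delta_{\B}(Z_i)\|^2 = O_{\Pbb}(\alpha_n)$ via the index-wise coupling, while the second leg $W_2^2(\tilde{\nu}_n, \mu)$ is an i.i.d.\ empirical Wasserstein rate in $\Rbb^m$ that, since $\mu = (T^*)^{-1}_{\#}G$ inherits moment bounds from $G$ via the $\sfrac{1}{\lambda}$-Lipschitzness of $(T^*)^{-1}$, is governed by the same $\gamma_n$ appearing in~\eqref{eqn:EW}. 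The $\alpha_n^{\sfrac{1}{2}}$ in the final rate (rather than $\alpha_n$) traces back to a Cauchy--Schwarz cross-term in the stability inequality, in which $W_2(\hat{\nu}_n, \mu)$ enters unsquared.

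The main technical obstacle is the OT-stability step: one needs an inequality strong enough to simultaneously absorb perturbations of both the source $\hat{\nu}_n$ and the target $\hat{G}_n$, while tracking which error terms enter squared versus linearly, so as to recover exactly $\alpha_n^{\sfrac{1}{2}}\vee \beta_n \vee \gamma_n$ and nothing worse. Obtaining such a bound requires both sides of~\eqref{eqn:L}: the strong convexity of $\phi$ ensures that the barycentric projection of the empirical OT coupling is well-approximated by $T^*$ on the support of $\hat{\nu}_n$, while the smoothness propagates source-side perturbations through $T^*$ without rate loss. A secondary bookkeeping issue is verifying that $W_2^2(\tilde{\nu}_n,\mu)$ indeed inherits the rate $\gamma_n$ from~\eqref{eqn:EW}, which should follow from standard Fournier--Guillin-type bounds combined with the Lipschitz pushforward relating $\mu$ and $G$.
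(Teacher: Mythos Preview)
Your strategy—recasting Algorithm~\ref{alg:distr-cstr} as a standard quadratic OT problem between $\hat\nu_n$ and $\hat G_n$ and then invoking map-stability under~\eqref{eqn:L}—is essentially the paper's approach, viewed from a slightly different angle. The paper keeps the coupling on the $(Z,\eta)$ space and treats the EB error as a perturbation of the cost ($c_G$ versus $\hat c_n$), while you absorb the EB error into the source measure. Both rest on the same dual-potential inequality for $\phi^*$ in the style of \citet{DebGhosalSen} and \citet{SlawskiSen}.

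The concrete gap is in your rate bookkeeping. You assert that the source-side error enters the stability bound as $W_2(\hat\nu_n,\mu)$ \emph{unsquared}, and you use this to explain the $\alpha_n^{1/2}$. But the same unsquared $W_2(\hat\nu_n,\mu)$ also carries the empirical-Wasserstein leg $W_2(\tilde\nu_n,\mu)$, so your argument would yield $\gamma_n^{1/2}$ rather than the $\gamma_n$ claimed in the theorem—strictly worse. In fact, under~\eqref{eqn:L} the stability inequality one actually obtains (and that the paper derives) is
\[
\int \|\hat T_n - T^*\|^2\,\diff\hat\nu_n \;\le\; \frac{L}{\lambda}\, W_2^2\bigl((T^*)_{\#}\hat\nu_n,\ \hat G_n\bigr),
\]
so the Wasserstein term enters squared; triangling through $G$ then produces $\beta_n + L^2 W_2^2(\hat\nu_n,\mu) \lesssim \beta_n + \alpha_n + \gamma_n$. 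The paper's $\alpha_n^{1/2}$ arises for a different reason: it does not pull back to $\hat\nu_n$ but keeps the estimated cost $\hat c_n$ on the $(Z,\eta)$ coupling, and the difference $\hat c_n - c_G$ generates inner-product cross terms $\int\langle \delta_{\B}-\hat\delta_{\B},\,\cdot\,\rangle\,\diff\bar F_n$ that are bounded via Cauchy--Schwarz. Your decomposition sidesteps that cross term and, done correctly, would yield $\alpha_n$ rather than $\alpha_n^{1/2}$ (consistent with the paper's own remark that the $\alpha_n^{1/2}$ is likely an artifact).

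A secondary point: \eqref{eqn:EW} is stated for the i.i.d.\ sample $\Theta_1,\ldots,\Theta_n$ from $G$, and the paper invokes it for the i.i.d.\ sample $\delta_{\distrcstr}(Z_1),\ldots,\delta_{\distrcstr}(Z_n)$, which is again a sample from $G$. You instead need the rate for samples from $\mu = (\nabla\phi^*)_{\#}G$; your bi-Lipschitz transfer via $(T^*)^{-1}$ does work here, but the cleaner route is to stay on the $G$-side by noting that $(T^*)_{\#}\tilde\nu_n$ is exactly the empirical measure of $\delta_{\distrcstr}(Z_i)$.
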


    The conclusion of Theorem~\ref{thm:emp-distrcstr}  for the denoiser given in Algorithm~\ref{alg:distr-cstr} is that the rate of convergence of distribution-constrained Bayes denoising is determined by whichever of unconstrained Bayes denoising (line 4), deconvolution (line 5), and convergence of the empirical Wasserstein distance is slowest.
    While in Appendix~\ref{app:distrcstr-discussion} we give a detailed discussion of the resulting rates of convergence in some concrete settings of interest, we emphasize that $\beta_n$ typically dominates in nonparametric problems and that $\gamma_n$ typically dominates in parametric problems. The term $\alpha_n^{\sfrac{1}{2}}$ is likely an artifact of the proof and we expect that it can be improved to $\alpha_n$ with further work, but we do not pursue this since it does not affect the rates in cases of interest.
            
    \subsection{General-Constrained Denoising}\label{subsec:emp-gencstr}

    In Appendix~\ref{subsec:emp-gen-cstr}, we develop an EB methodology for targeting the oracle general-constrained denoiser $\delta_{\gencstr}$.
    Methodologically, we regard \eqref{eqn:pop-gencstr} as a way to tradeoff between problems \eqref{eqn:pop-varcstr} and \eqref{eqn:pop-distrcstr}; the former only encodes limited information about the latent variable distribution (the first two moments), the latter suffers from slow rates of convergence (due to the difficulty of the deconvolution problem) and may have larger denoising risk. Often some intermediate behavior may be desired. The rate of convergence of Theorem~\ref{thm:gen-cstr-emp} reflects the fact that \eqref{eqn:pop-gencstr} can interpolate between \eqref{eqn:pop-varcstr} and \eqref{eqn:pop-distrcstr}.

    \section{Heterogeneous (Empirical) Constrained Bayes Denoising}

    Our results and methods so far have been 
    developed under the assumption that all $n$ observations follow model~\eqref{eq:Mdl}. While this setting covers many important applications, it excludes settings with heterogeneity in the likelihood, e.g., settings with varying precision in the observations, as well as settings with heterogeneity in the prior. To accommodate such heterogeneity, we extend model~\eqref{eq:Mdl} by including a heterogeneity parameter $\Xi\in\Rbb^p$ in addition to the latent variable $\Theta$ and the observation $Z$:
    \begin{equation}\label{eq:Mdl-hetrogen}
        (\Theta\,|\,\Xi=\xi)\sim G_{\xi}\qquad \mbox{and} \qquad (Z \mid \Theta = \theta,\Xi=\xi) \sim P_{\theta,\xi}.
    \end{equation}
    Here, $\{G_{\xi}\}_{\xi}$  is an unknown family of probability measures on $\Rbb^m$ and $\{P_{\theta,\xi}\}_{\theta,\xi}$ is a known family of probability distributions on $\Rbb^d$.     We assume that $\Theta$ is unobserved while $(Z,\Xi)$ is observed.   We also consider the EB setting where we have i.i.d. samples $(\Theta_1,\Xi_1,Z_1),\ldots, (\Theta_n,\Xi_n,Z_n)$ from~\eqref{eq:Mdl-hetrogen} and we observe $(\Xi_1,Z_1),\ldots, (\Xi_n,Z_n)$. 

    In our application to astronomy in Subsection~\ref{subsec:astro}, we model measurements as $Z_i \sim \mathcal{N}(\Theta_i, \Sigma_i)$ where the covariance matrices $\Sigma_i$ vary across observations. This fits framework~\eqref{eq:Mdl-hetrogen} by setting $\Xi_i = \Sigma_i$ and $P_{\theta,\xi} = \mathcal{N}(\theta, \xi)$, allowing us to handle the heterogeneous measurement uncertainty in astronomical data. In our baseball application in Subsection~\ref{subsec:baseball}, part of our modeling involves Poisson observations with heterogeneous exposure, where we set $\Xi_i = \lambda_i$ and $P_{\theta,\xi} = \textnormal{Poi}(\xi\theta)$. We call $\lambda_i$ the exposure~\citep{norberg1979credibility}, representing the number of games played by player $i$, so that $\Theta_i = \mathbb E[Z_i/\lambda_i]$ is the rate normalized by exposure time. In both applications, we assume that $\{G_{\xi}\}_{\xi}$ does not depend on $\xi$, i.e., there is some unknown $G$ satisfying $G_{\xi} = G$ for all $\xi$.\footnote{Allowing for dependence of $G_{\xi}$ on $\xi$ is important in some applications, see Appendix~\ref{subsec:heterogeneous_applications} and~\citet{ChenEB}.}

    In Appendix~\ref{sec:heterosked}, we discuss applications of~\eqref{eq:Mdl-hetrogen} in more detail (Appendix~\ref{subsec:heterogeneous_applications}) and
    describe modifications of the EB procedures in Section~\ref{sec:emp} to construct constrained denoisers under model~\eqref{eq:Mdl-hetrogen}.
    Constraints on the denoising procedure have two possible forms, namely a constraint on the marginal distribution of $\delta(Z,\Xi)$ or a constraint on the conditional distribution of $\delta(Z,\Xi)$ given $\{\Xi=\xi\}$ for almost every $\xi$.
    In particular, we discuss marginal and conditional variance constraints (Appendix~\ref{subsec:het-varcstr}) leading to denoisers
    $\delta_{\mathcal{M}\varcstr}$ and $\delta_{\mathcal{C}\varcstr}$ respectively, marginal distribution constraints (Appendix~\ref{subsec:marg-distr-cstr}) leading to a denoiser $\delta_{\mathcal{M}\distrcstr}$, and marginal general constraints (Appendix~\ref{subsec:marg-gen}), leading to a denoiser $\delta_{\mathcal{M}\gencstr}$; as usual, we write $\hat{\delta}_{\ast}$ to denote an EB approximation of an oracle denoiser $\delta_{\ast}$.
    See also Section~\ref{subsec:marg-v-cond-illustration} for a numerical illustration of the difference between the marginal conditional variance constraints.
    
\section{Applications}\label{sec:app}

    In this section, we consider applications of our methodology to two important areas which have  motivated much of the EB theory throughout its development.
    \subsection{Astronomy}\label{subsec:astro}

    First we consider an application to the problem of denoising the relative chemical abundances present in a catalog of stars.
    In short, the relative abundances of various chemicals influence stellar formation, evolution, and dynamics, hence their estimation is a fundamental task in astronomy.

    Our data comes from the Apache Point Observatory Galactic Evolution Experiment survey (APOGEE); see \citet{Majewski_2017, Abolfathi_2018,Ratcliffe_2020,Soloff} for further detail.
    More specifically, we focus on Oxygen-Iron and Nitrogen-Iron relative abundances (denoted [O/Fe] and [N/Fe], respectively) for stars in the red clump catalog, following the analysis of \cite{Ratcliffe_2020}.
    For each star ($n = 2{,}000$ stars sampled at random from the full catalog of $2.7\times 10^4$ stars), we have an estimate of the values [O/Fe] and [N/Fe], and the measurement error of these estimates is known.
    
    To state our model explicitly, we make the following modeling assumptions:
    \begin{enumerate*}
        \item Every star $i$ has latent chemical abundances $\Theta_i = (\Theta_i^{\textnormal{[O/Fe]}},\Theta_i^{\textnormal{[N/Fe]}})$, and $\Theta_1,\ldots, \Theta_n$ are i.i.d. from an unknown distribution $G$ on $\Rbb^2$.
        \item The measurement of every star $i$ admits error given by a diagonal covariance matrix $\Sigma_i$, the matrices $\Sigma_1,\ldots, \Sigma_n$ are i.i.d. from some unknown distribution on $\covspace(2)$, and $\Theta_1,\ldots, \Theta_n$ are independent of $\Sigma_1,\ldots, \Sigma_n$.
        \item For each star $i$, the observed pair of measurements $Z_i = (Z_i^{\textnormal{[O/Fe]}},Z_i^{\textnormal{[N/Fe]}})$ has conditional distribution $\Ncal(\Theta_i,\Sigma_i)$ given $\Theta_1,\ldots,\Theta_n,\Sigma_1,\ldots, \Sigma_n$, and $Z_1,\ldots, Z_n$ are conditionally independent given $\Theta_1,\ldots,\Theta_n,\Sigma_1,\ldots, \Sigma_n$.
    \end{enumerate*}
    The assumptions above are common in existing literature, e.g., each $\Sigma_i$ being diagonal \citep[page 16]{TingWeinberg}, and $\Theta_i$ and $\Sigma_i$ being independent \citep[Section~1.1]{Soloff}.

    Next we describe the denoising methods of interest.
    All of our approaches will be based on smooth $G$-modeling, as in Example~\ref{ex:smooth-f-model}, where we assume that $G$ is a Gaussian location mixture model, where the covariance matrix of each Gaussian component is bounded below, in the PSD order, by $\tau^2I_2$, where $\tau^2 = 0.0025$.
    Then, we compute the heteroskedastic multivariate NPMLE of \citet{Soloff} over this restricted class of $G$, yielding an unconstrained EB denoiser $\hat{\delta}_{\B}$.
    Last, we use $\hat{\delta}_{\B}$ to compute the marginal variance-constrained denoiser $\hat{\delta}_{\mathcal{M}\varcstr}$ and the marginal distribution-constrained denoiser $\hat{\delta}_{\mathcal{M}\distrcstr}$, as described in Section~\ref{sec:heterosked}.

\begin{figure}[t]
        \centering
        \includegraphics[width=1.0\linewidth]{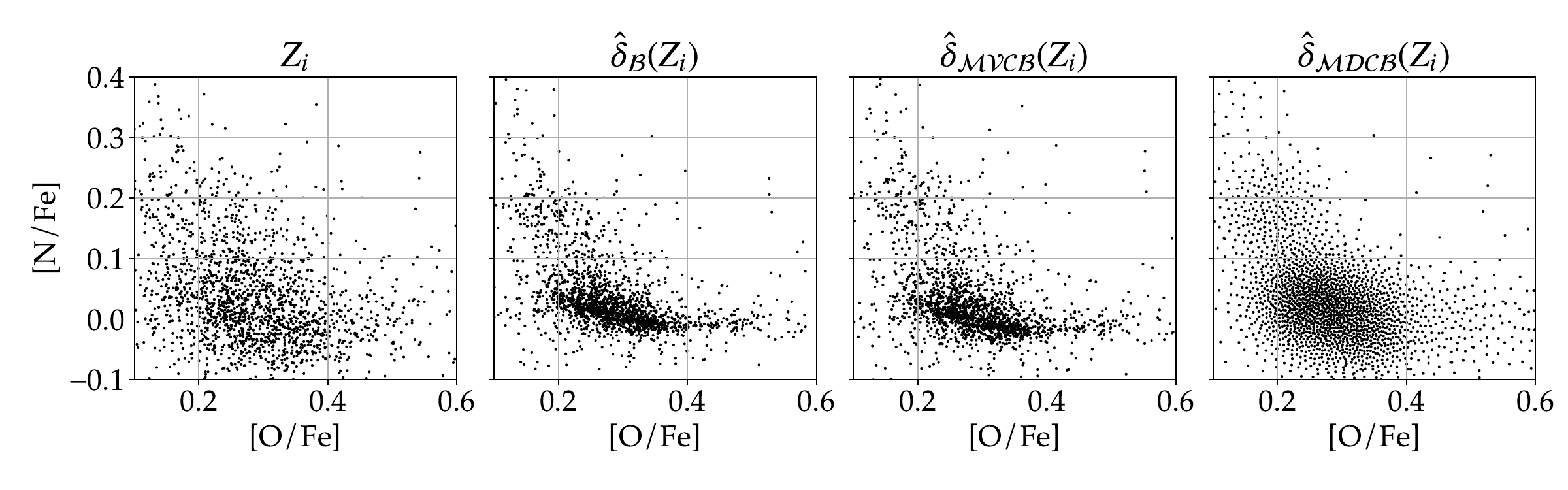}
        \caption{Denoising the stellar relative chemical abundances dataset of Subsection~\ref{subsec:astro}. We plot the raw data (first), the unconstrained EB denoised data (second), the marginal variance-constrained EB denoised data (third), and the marginal distribution-constrained EB denoised data (fourth).}
        \label{fig:astro_2}
    \end{figure}

    The results are shown in Figure~\ref{fig:astro_2}, which deserves some discussion.
    First, note that the EB denoiser $\hat{\delta}_{\B}$ recovers some interesting latent low-dimensional structure from the data.
    However, the difference between the EB denoiser $\hat{\delta}_{\B}$ and the variance-constrained EB denoiser $\hat{\delta}_{\varcstr}$ illustrates that the shrinkage of $\hat{\delta}_{\B}$ may yield overestimation of this latent structure.
    Also, we note that the transformation from the EB denoiser $\hat{\delta}_{\B}$ to the variance-constrained EB denoiser $\hat{\delta}_{\varcstr}$ is an affine but it is not merely scaling; the upper branch slightly changes its angle, which suggests that the correlation between the two components may be underestimated by the EB denoiser.
    Lastly, we note that the distribution-constrained EB denoiser $\hat{\delta}_{\distrcstr}$ reveals a similar latent low-dimensional structure, but that the resulting distribution is more equispaced and this allows one identify more subtle features in the distribution (e.g., the separated cluster appearing at the ends of the top branch, and the positions of points outside the bulk of the data set).
    \subsection{Baseball}\label{subsec:baseball}

    Second, we consider an application to denoising the joint distribution of minor-league batting skill and major-league batting skill for rookie baseball players.\footnote{Strictly speaking, a ``rookie'' player is one who is in his first season meeting at least one criterion involving the number of games played or the number of at-bats.
    In this paper, we simply mean a player that is in his first major league appearance, since number of games played will be a part of our model.}
    For the sake of simplicity, we focus on the statistic RBI (runs batted in) which is a common measure of batter performance.
    
    Our data consists of all batters in their first season of major league play during the 2022 to 2024 season of the MLB which is publicly available through the website FanGraphs.\footnote{Available at: \url{https://www.fangraphs.com/}}
    For each player ($n=324$), we observe his RBIs and his number of games played, in both their final minor league season and their initial major league season.
    To state our model explicitly, we make the following modeling assumptions:
    \begin{enumerate*}
        \item Every player $i$ has a latent bivariate skill $\Theta_i = (\Theta_i^{\textnormal{min}},\Theta_i^{\textnormal{maj}})$, and $\Theta_1,\ldots,\Theta_n$ are i.i.d. from an unknown distribution $G$ on $[0,\infty)\times[0,\infty)$.
        \item Every player $i$ participates in a random proportion $\lambda_i=(\lambda_i^{\textnormal{min}},\lambda_i^{\textnormal{maj}})$ of all games per minor- and major-league season, and $\lambda_1,\ldots, \lambda_n$ are i.i.d. from an unknown distribution on $[0,\infty)\times[0,\infty)$, and $\lambda_1,\ldots, \lambda_n$ are independent of $\Theta_1,\ldots, \Theta_n$.\footnote{   The assumption of independence between $\Theta_1,\ldots, \Theta_n$ and $\lambda_1,\ldots, \lambda_n$ is debatable and warrants further investigation in future work.
    On the one hand, good batters are more likely to be placed in a batting lineup than bad batters.
    On the other hand, players who are desired for non-batting skills (catching, fielding, etc.) must be placed in the batting lineup in order to exercise these other skills. }
        \item For each plater $i$, his RBI count $Z_i = (Z_i^{\textnormal{min}},Z_i^{\textnormal{maj}})$ has conditional distribution $\textnormal{Poi}(\lambda_i^{\textnormal{min}}\Theta_i^{\textnormal{min}})\otimes \textnormal{Poi}(\lambda_i^{\textnormal{maj}}\Theta_i^{\textnormal{maj}})$ given $\Theta_1,\ldots, \Theta_n,\lambda_1,\ldots,\lambda_n$, and $Z_1,\ldots, Z_n$ are conditionally independent given $\Theta_1,\ldots, \Theta_n,\lambda_1,\ldots,\lambda_n$.
    \end{enumerate*}

    Next we describe our denoising methods.
    All of our methods are based on $G$-modeling (without smoothness assumptions) where we posit that $G$ a bivariate heterogeneous Poisson mixture model with conditionally independent components; this is similar to existing studies on the univariate Poisson case \citep{PolyanskiyWuPoisson, JanaPoliyanskiyWu} and the multivariate heteroskedastic Gaussian case \citep{Soloff}.
    First, we compute the NPMLE for $G$, yielding an unconstrained EB denoiser $\hat{\delta}_{\B}$.
    Second, as outlined in Section~\ref{sec:heterosked}, we compute the marginal variance-constrained denoiser $\hat{\delta}_{\mathcal{M}\varcstr}$ and the marginal general-constrained denoiser $\hat{\delta}_{\mathcal{M}\gencstr}$ corresponding to constraints on the first and second moment and on component-wise nonnegativity.

    The results are shown in Figure~\ref{fig:baseball_bivariate}, which we now explain.
    First, note that the (unconstrained) EB denoiser $\hat{\delta}_{\B}$ reveals a latent structure with a positive trend between the two coordinates and with most players lying below the diagonal dashed line; this reflects the intuitive facts that major league skill is correlated with minor league skill and that most players have lower major league skill than their minor league skill.
    Second, we compare the unconstrained EB denoiser $\hat{\delta}_{\B}$ with the marginal variance-constrained denoiser $\hat{\delta}_{\mathcal{M}\varcstr}$ and the marginal distribution-constrained denoiser $\hat{\delta}_{\mathcal{M}\distrcstr}$; the shrinkage in the former leads one to underestimate variability between players.
    Lastly, compare the marginal variance-constrained denoiser $\hat{\delta}_{\mathcal{M}\varcstr}$ with the marginal general-constrained denoiser $\hat{\delta}_{\mathcal{M}\gencstr}$, and notice that most of the denoised data set is similar, except that $\hat{\delta}_{\mathcal{M}\varcstr}$ produces some negative estimates while $\hat{\delta}_{\mathcal{M}\gencstr}$ does not.

    \begin{figure}[t]
        \centering
        \includegraphics[width=1.0\linewidth]{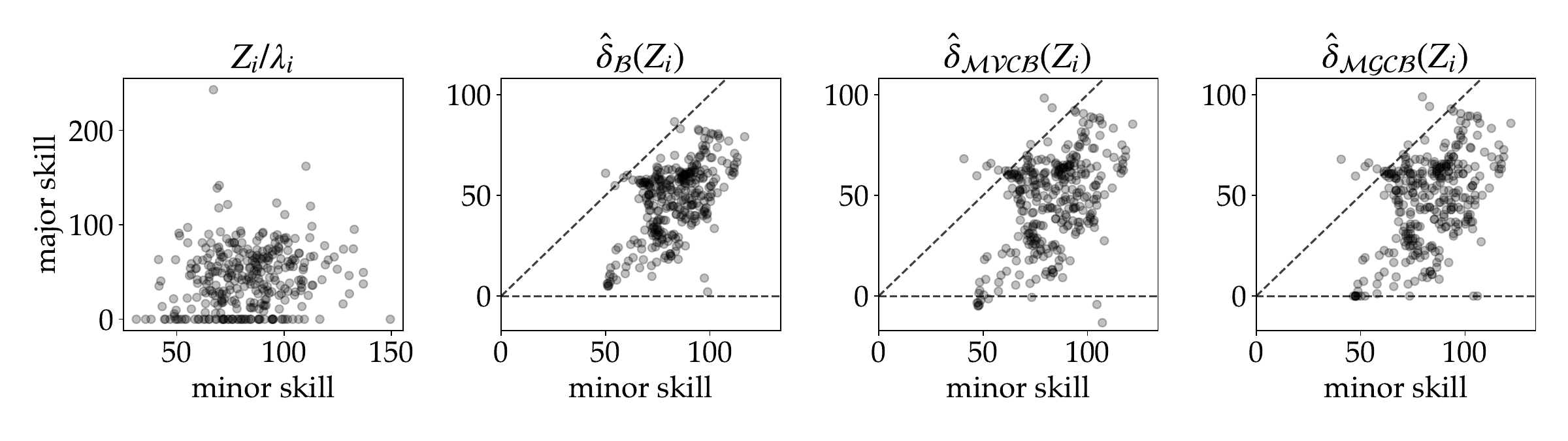}
        \caption{Denoising the rookie batters' dataset of Subsection~\ref{subsec:baseball}.
        We plot the exposure-standardized observations (first), the unconstrained EB denoised data (second), and the marginal variance-constrained EB denoised data (third), and the marginal general-constrained EB denoised data (fourth) corresponding to variance constraints and component-wise nonnegativity constraints.} 
        \label{fig:baseball_bivariate}
    \end{figure}

    \section{Discussion}

    Constrained denoising is by now a classical idea that has found use across applications. The optimal transport framework developed in this paper substantially extends the scope of constrained denoising by showing that it is fully compatible with nonparametric empirical Bayes estimators, extending it to higher-dimensional settings, and introducing constraints beyond variance---including distributional and general constraints, as well as conditional constraints under heterogeneity. 
    
    For practitioners, several natural questions arise: Which empirical Bayes denoiser should be used among a suite of parametric and nonparametric options? Which type of constraint is most appropriate---variance, distributional, or general? And if general constraints are chosen, which functionals of the latent distribution should be matched? We do not provide definitive answers to these questions, as the optimal choices depend on both the specific application domain and the goals of downstream analyses. \citet{lee2024improving} conducted a detailed study providing guidelines for choosing among denoising methods (including variance-constrained approaches) for the analysis of multisite trials. With the expanded framework developed here, we envisage similar guidelines emerging for other concrete application areas, guided by the requirements of domain experts.

    From the point of view of statistical theory, an important outstanding question (which is beyond the scope of this paper) concerns lower bounds for our rates of convergence in Section~\ref{sec:emp}.
    That is, are the rates of convergence in our Theorem~\ref{thm:varcstr-gm-npp} and Theorem~\ref{thm:emp-distrcstr} optimal?
    We believe such questions will be interesting for future work, especially in the setting of a nonparametric latent variable distribution and a Gaussian likelihood.

    \subsection*{Reproducibility}

    All figures in the paper can be reproduced in Python with the Jupyter notebooks available at \url{https://github.com/aqjaffe/constrained-denoising-EB-OT}.

    \subsection*{Acknowledgments}
    We would like to thank Stephen Raudenbush and Tom Loredo for helpful discussions and for pointing us to the literature on constrained EB estimation.
    We also thank Brad Ross for providing detailed comments on an early version of this manuscript.
    Finally, we thank Ellen Woods for many useful conversations about baseball.

    \setlength{\bibsep}{0.2pt plus 0.3ex}
    \bibliography{refs}

\begin{thebibliography}{96}
\providecommand{\natexlab}[1]{#1}
\providecommand{\url}[1]{\texttt{#1}}
\expandafter\ifx\csname urlstyle\endcsname\relax
  \providecommand{\doi}[1]{doi: #1}\else
  \providecommand{\doi}{doi: \begingroup \urlstyle{rm}\Url}\fi

\bibitem[Abolfathi et~al.(2018)]{Abolfathi_2018}
B.~Abolfathi et~al.
\newblock The fourteenth data release of the {S}loan {D}igital {S}ky {S}urvey:
  First spectroscopic data from the extended baryon oscillation spectroscopic
  survey and from the second phase of the {A}pache {P}oint {O}bservatory
  {G}alactic {E}volution {E}xperiment.
\newblock \emph{The Astrophysical Journal Supplement Series}, 235\penalty0
  (2):\penalty0 42, 2018.

\bibitem[Agarwal et~al.(2020)Agarwal, Wang, and Zhang]{agarwal2020data}
D.~Agarwal, J.~Wang, and N.~R. Zhang.
\newblock Data denoising and post-denoising corrections in single cell {{RNA}}
  sequencing.
\newblock \emph{Statistical Science}, 35\penalty0 (1):\penalty0 112--128, 2020.

\bibitem[ApS(2025)]{mosek}
M.~ApS.
\newblock \emph{MOSEK Optimization Toolbox}, 2025.
\newblock URL \url{https://www.mosek.com/}.

\bibitem[Armstrong et~al.(2022)Armstrong, Koles{\'a}r, and
  {Plagborg-M{\o}ller}]{armstrong2022robust}
T.~B. Armstrong, M.~Koles{\'a}r, and M.~{Plagborg-M{\o}ller}.
\newblock Robust empirical {{Bayes}} confidence intervals.
\newblock \emph{Econometrica}, 90\penalty0 (6):\penalty0 2567--2602, 2022.

\bibitem[Banerjee et~al.(2023)Banerjee, Fu, James, Mukherjee, and
  Sun]{banerjee2023nonparametric}
T.~Banerjee, L.~J. Fu, G.~M. James, G.~Mukherjee, and W.~Sun.
\newblock Nonparametric empirical {{Bayes}} estimation on heterogeneous data,
  2023.
\newblock Pre-print available on ar{X}iv.

\bibitem[Barbehenn and Zhao(2023)]{barbehenn2023nonparametric}
A.~Barbehenn and S.~D. Zhao.
\newblock A nonparametric regression alternative to empirical {{Bayes}}
  approaches to simultaneous estimation, 2023.
\newblock Pre-print available on ar{X}iv.

\bibitem[Bay(1997)]{bay1997adjusting}
J.~M. Bay.
\newblock \emph{Adjusting Data for Measurement Error}.
\newblock PhD thesis, North Carolina State University, 1997.

\bibitem[Blau and Michaeli(2018)]{BlauMichaeli}
Y.~Blau and T.~Michaeli.
\newblock The perception-distortion tradeoff.
\newblock In \emph{2018 IEEE/CVF Conference on Computer Vision and Pattern
  Recognition}, pages 6228--6237, 2018.

\bibitem[Bloom et~al.(2017)Bloom, Raudenbush, Weiss, and
  Porter]{bloom2017using}
H.~S. Bloom, S.~W. Raudenbush, M.~J. Weiss, and K.~Porter.
\newblock Using multisite experiments to study cross-site variation in
  treatment effects: A hybrid approach with fixed intercepts and a random
  treatment coefficient.
\newblock \emph{Journal of Research on Educational Effectiveness}, 10\penalty0
  (4):\penalty0 817--842, 2017.

\bibitem[Brenier(1991)]{Brenier1991}
Y.~Brenier.
\newblock Polar factorization and monotone rearrangement of vector-valued
  functions.
\newblock \emph{Communications on Pure and Applied Mathematics}, 44\penalty0
  (4):\penalty0 375--417, 1991.

\bibitem[Brown(2008)]{brown2008inseason}
L.~D. Brown.
\newblock In-season prediction of batting averages: A field test of empirical
  {{Bayes}} and {{Bayes}} methodologies.
\newblock \emph{The Annals of Applied Statistics}, 2\penalty0 (1):\penalty0
  113--152, 2008.

\bibitem[Brown and Greenshtein(2009)]{brown2009nonparametric}
L.~D. Brown and E.~Greenshtein.
\newblock Nonparametric empirical {{Bayes}} and compound decision approaches to
  estimation of a high-dimensional vector of normal means.
\newblock \emph{The Annals of Statistics}, pages 1685--1704, 2009.

\bibitem[Carroll and Hall(1988)]{CarollHallDeconvolution}
R.~J. Carroll and P.~Hall.
\newblock Optimal rates of convergence for deconvolving a density.
\newblock \emph{Journal of the American Statistical Association}, 83\penalty0
  (404):\penalty0 1184--1186, 1988.

\bibitem[Carroll and Stefanski(1990)]{carroll1990approximate}
R.~J. Carroll and L.~A. Stefanski.
\newblock Approximate quasi-likelihood estimation in models with surrogate
  predictors.
\newblock \emph{Journal of the American Statistical Association}, 85\penalty0
  (411):\penalty0 652--663, 1990.

\bibitem[Chen(2022)]{ChenEB}
J.~Chen.
\newblock Empirical {B}ayes when estimation precision predicts parameters,
  2022.
\newblock Pre-print available on ar{X}iv.

\bibitem[Chen et~al.(2025)Chen, Gu, and Kwon]{chen2025empirical}
J.~Chen, J.~Gu, and S.~Kwon.
\newblock Empirical {{Bayes}} shrinkage (mostly) does not correct the
  measurement error in regression, 2025.
\newblock Pre-print available on ar{X}iv.

\bibitem[Chewi et~al.(2025)Chewi, Niles-Weed, and Rigollet]{StatisticalOT}
S.~Chewi, J.~Niles-Weed, and P.~Rigollet.
\newblock \emph{Statistical Optimal Transport}.
\newblock Lectures Notes in Mathematics: {\'E}cole d'{\'E}t{\'e} de
  Probabilités de Saint-Flour XLIX. Springer International Publishing, Cham,
  2025.

\bibitem[Cohen et~al.(2013)Cohen, Greenshtein, and Ritov]{cohen2013empirical}
N.~Cohen, E.~Greenshtein, and Y.~Ritov.
\newblock Empirical {{Bayes}} in the presence of explanatory variables.
\newblock \emph{Statistica Sinica}, pages 333--357, 2013.

\bibitem[Cressie(1982)]{cressie1982useful}
N.~Cressie.
\newblock A useful empirical {{Bayes}} identity.
\newblock \emph{The Annals of Statistics}, 10\penalty0 (2):\penalty0 625--629,
  1982.

\bibitem[Deb et~al.(2021)Deb, Ghosal, and Sen]{DebGhosalSen}
N.~Deb, P.~Ghosal, and B.~Sen.
\newblock Rates of estimation of optimal transport maps using plug-in
  estimators via barycentric projections.
\newblock In \emph{Advances in Neural Information Processing Systems},
  volume~34, pages 29736--29753, 2021.

\bibitem[Dedecker and Michel(2013)]{DedeckerMichel}
J.~Dedecker and B.~Michel.
\newblock Minimax rates of convergence for wasserstein deconvolution with
  supersmooth errors in any dimension.
\newblock \emph{Journal of Multivariate Analysis}, 122:\penalty0 278--291,
  2013.

\bibitem[Devine and Louis(1994)]{devine1994constrained}
O.~J. Devine and T.~A. Louis.
\newblock A constrained empirical {{Bayes}} estimator for incidence rates in
  areas with small populations.
\newblock \emph{Statistics in Medicine}, 13\penalty0 (11):\penalty0 1119--1133,
  1994.

\bibitem[Devine et~al.(1994)Devine, Louis, and Halloran]{DevineLouisHalloran}
O.~J. Devine, T.~A. Louis, and M.~E. Halloran.
\newblock Empirical {B}ayes methods for stabilizing incidence rates before
  mapping.
\newblock \emph{Epidemiology}, 5\penalty0 (6):\penalty0 622--630, 1994.

\bibitem[Diaconis and Ylvisaker(1979)]{DiaconisYlvisaker}
P.~Diaconis and D.~Ylvisaker.
\newblock Conjugate priors for exponential families.
\newblock \emph{The Annals of Statistics}, 7\penalty0 (2):\penalty0 269--281,
  1979.

\bibitem[Diamond and Boyd(2016)]{diamond2016cvxpy}
S.~Diamond and S.~Boyd.
\newblock {CVXPY}: {A} {P}ython-embedded modeling language for convex
  optimization.
\newblock \emph{Journal of Machine Learning Research}, 17\penalty0
  (83):\penalty0 1--5, 2016.

\bibitem[Dowson and Landau(1982)]{DowsonLandau}
D.~Dowson and B.~Landau.
\newblock The {F}r{\'e}chet distance between multivariate normal distributions.
\newblock \emph{Journal of Multivariate Analysis}, 12\penalty0 (3):\penalty0
  450--455, 1982.

\bibitem[Durrett(2019)]{DurrettPTE}
R.~Durrett.
\newblock \emph{Probability: Theory and Examples}.
\newblock Cambridge Series in Statistical and Probabilistic Mathematics.
  Cambridge University Press, 5th edition, 2019.

\bibitem[Dyson(1926)]{dyson1926method}
F.~Dyson.
\newblock A method for correcting series of parallax observations.
\newblock \emph{Monthly Notices of the Royal Astronomical Society},
  86:\penalty0 686, 1926.

\bibitem[Eddington(1940)]{eddington1940correction}
A.~S. Eddington.
\newblock The correction of statistics for accidental error.
\newblock \emph{Monthly Notices of the Royal Astronomical Society},
  100\penalty0 (5):\penalty0 354--361, 1940.

\bibitem[Efron(2011)]{efron2011tweedie}
B.~Efron.
\newblock Tweedie's formula and selection bias.
\newblock \emph{Journal of the American Statistical Association}, 106\penalty0
  (496):\penalty0 1602--1614, 2011.

\bibitem[Efron(2014)]{efron2014two}
B.~Efron.
\newblock Two modeling strategies for empirical {{Bayes}} estimation.
\newblock \emph{Statistical Science}, 29\penalty0 (2):\penalty0 285--301, 2014.

\bibitem[Efron(2019)]{efron2019bayes}
B.~Efron.
\newblock Bayes, oracle {{Bayes}} and empirical {{Bayes}}.
\newblock \emph{Statistical Science}, 34\penalty0 (2):\penalty0 177--201, 2019.

\bibitem[Efron and Morris(1975)]{EfronMorris}
B.~Efron and C.~Morris.
\newblock Data analysis using {S}tein's estimator and its generalizations.
\newblock \emph{Journal of the American Statistical Association}, 70\penalty0
  (350):\penalty0 311--319, 1975.

\bibitem[Fan(1991)]{FanDeconvolution}
J.~Fan.
\newblock On the optimal rates of convergence for nonparametric deconvolution
  problems.
\newblock \emph{The Annals of Statistics}, 19\penalty0 (3):\penalty0
  1257--1272, 1991.

\bibitem[Fay~III and Herriot(1979)]{fayiii1979estimates}
R.~E. Fay~III and R.~A. Herriot.
\newblock Estimates of income for small places: An application of
  {{James-Stein}} procedures to census data.
\newblock \emph{Journal of the American Statistical Association}, 74\penalty0
  (366a):\penalty0 269--277, 1979.

\bibitem[Flamary et~al.(2021)]{POT}
R.~Flamary et~al.
\newblock {P}{O}{T}: Python optimal transport.
\newblock \emph{Journal of Machine Learning Research}, 22\penalty0
  (78):\penalty0 1--8, 2021.

\bibitem[Fournier and Guillin(2015)]{FournierGuillin}
N.~Fournier and A.~Guillin.
\newblock On the rate of convergence in wasserstein distance of the empirical
  measure.
\newblock \emph{Probability Theory and Related Fields}, 162\penalty0
  (3):\penalty0 707--738, 2015.

\bibitem[Freedman et~al.(2004)Freedman, Fainberg, Kipnis, Midthune, and
  Carroll]{freedman2004new}
L.~S. Freedman, V.~Fainberg, V.~Kipnis, D.~Midthune, and R.~J. Carroll.
\newblock A new method for dealing with measurement error in explanatory
  variables of regression models.
\newblock \emph{Biometrics}, 60\penalty0 (1):\penalty0 172--181, 2004.

\bibitem[Freirich et~al.(2021)Freirich, Michaeli, and Meir]{DistPercTrade}
D.~Freirich, T.~Michaeli, and R.~Meir.
\newblock A theory of the distortion-perception tradeoff in {W}asserstein
  space.
\newblock In \emph{Advances in Neural Information Processing Systems},
  volume~34, pages 25661--25672, 2021.

\bibitem[Frey and Cressie(2003)]{FreyCressie}
J.~Frey and N.~Cressie.
\newblock Some results on constrained {B}ayes estimators.
\newblock \emph{Statistics \& Probability Letters}, 65\penalty0 (4):\penalty0
  389--399, 2003.

\bibitem[García-Trillos and Sen(2024)]{GarciaTrillosSen}
N.~García-Trillos and B.~Sen.
\newblock A new perspective on denoising based on optimal transport.
\newblock \emph{Information and Inference: A Journal of the IMA}, 13\penalty0
  (4):\penalty0 iaae029, 2024.

\bibitem[Ghosh(1992)]{Ghosh}
M.~Ghosh.
\newblock Constrained {B}ayes estimation with applications.
\newblock \emph{Journal of the American Statistical Association}, 87\penalty0
  (418):\penalty0 533--540, 1992.

\bibitem[Ghosh and Maiti(1999)]{ghosh1999adjusted}
M.~Ghosh and T.~Maiti.
\newblock Adjusted {B}ayes estimators with applications to small area
  estimation.
\newblock \emph{Sankhy{\=a}: The Indian Journal of Statistics, Series B}, pages
  71--90, 1999.

\bibitem[Ghosh and Rao(1994)]{GhoshRao}
M.~Ghosh and J.~N.~K. Rao.
\newblock Small area estimation: An appraisal.
\newblock \emph{Statistical Science}, 9\penalty0 (1):\penalty0 55--76, 1994.

\bibitem[Ghosh et~al.(2025)Ghosh, Ignatiadis, Koehler, and
  Lee]{ghosh2025steins}
S.~Ghosh, N.~Ignatiadis, F.~Koehler, and A.~Lee.
\newblock Stein's unbiased risk estimate and {{Hyv{\"a}rinen}}'s score
  matching, 2025.
\newblock Pre-print available on ar{X}iv.

\bibitem[Gu and Koenker(2017{\natexlab{a}})]{gu2017empirical}
J.~Gu and R.~Koenker.
\newblock Empirical {{Bayesball}} remixed: Empirical {{Bayes}} methods for
  longitudinal data.
\newblock \emph{Journal of Applied Econometrics}, 32\penalty0 (3):\penalty0
  575--599, 2017{\natexlab{a}}.

\bibitem[Gu and Koenker(2017{\natexlab{b}})]{gu2017unobserved}
J.~Gu and R.~Koenker.
\newblock Unobserved heterogeneity in income dynamics: An empirical {{Bayes}}
  perspective.
\newblock \emph{Journal of Business \& Economic Statistics}, 35\penalty0
  (1):\penalty0 1--16, 2017{\natexlab{b}}.
\newblock \doi{10.1080/07350015.2015.1052457}.

\bibitem[Ignatiadis and Sen(2025)]{ignatiadis2025empirical}
N.~Ignatiadis and B.~Sen.
\newblock Empirical partially {{Bayes}} multiple testing and compound $\chi^2$
  decisions.
\newblock \emph{The Annals of Statistics}, 53\penalty0 (1):\penalty0 1--36,
  2025.

\bibitem[Ignatiadis and Wager(2019)]{ignatiadis2019covariatepowered}
N.~Ignatiadis and S.~Wager.
\newblock Covariate-powered empirical {{Bayes}} estimation.
\newblock In H.~Wallach, H.~Larochelle, A.~Beygelzimer, F.~{D'Alch{\'e}-Buc},
  E.~Fox, and R.~Garnett, editors, \emph{Advances in Neural Information
  Processing Systems}, volume~32, 2019.

\bibitem[Ignatiadis et~al.(2023)Ignatiadis, Saha, Sun, and
  Muralidharan]{ignatiadis2023empiricala}
N.~Ignatiadis, S.~Saha, D.~L. Sun, and O.~Muralidharan.
\newblock Empirical {{Bayes}} mean estimation with nonparametric errors via
  order statistic regression on replicated data.
\newblock \emph{Journal of the American Statistical Association}, 118\penalty0
  (542):\penalty0 987--999, 2023.

\bibitem[Jana et~al.(2024)Jana, Polyanskiy, and Wu]{JanaPoliyanskiyWu}
S.~Jana, Y.~Polyanskiy, and Y.~Wu.
\newblock Optimal empirical {B}ayes estimation for the {P}oisson model via
  minimum-distance methods, 2024.
\newblock Pre-print available on ar{X}iv.

\bibitem[Jiang and Zhang(2009)]{JiangZhang}
W.~Jiang and C.-H. Zhang.
\newblock {General maximum likelihood empirical Bayes estimation of normal
  means}.
\newblock \emph{The Annals of Statistics}, 37\penalty0 (4):\penalty0
  1647--1684, 2009.

\bibitem[Jiang and Zhang(2010)]{jiang2010empirical}
W.~Jiang and C.-H. Zhang.
\newblock Empirical {{Bayes}} in-season prediction of baseball batting
  averages.
\newblock In \emph{Borrowing Strength: Theory Powering Applications---a
  {{Festschrift}} for {{Lawrence D}}. {{Brown}}}, volume~6 of \emph{Institute
  of {{Mathematical Statistics Collections}}}, pages 263--273. Institute of
  Mathematical Statistics, Beachwood, Ohio, USA, 2010.

\bibitem[Koenker and Mizera(2014)]{koenker2014convex}
R.~Koenker and I.~Mizera.
\newblock Convex optimization, shape constraints, compound decisions, and
  empirical {{Bayes}} rules.
\newblock \emph{Journal of the American Statistical Association}, 109\penalty0
  (506):\penalty0 674--685, 2014.

\bibitem[Kroshnin et~al.(2021)Kroshnin, Spokoiny, and Suvorikova]{Kroshnin}
A.~Kroshnin, V.~Spokoiny, and A.~Suvorikova.
\newblock {Statistical inference for Bures--Wasserstein barycenters}.
\newblock \emph{The Annals of Applied Probability}, 31\penalty0 (3):\penalty0
  1264--1298, 2021.

\bibitem[Kwon(2023)]{kwon2023optimal}
S.~Kwon.
\newblock Optimal shrinkage estimation of fixed effects in linear panel data
  models, 2023.
\newblock Pre-print available on ar{X}iv.

\bibitem[Lahiri(1990)]{lahiri1990adjusted}
P.~Lahiri.
\newblock ``{{Adjusted}}" {{Bayes}} and empirical {{Bayes}} estimation in
  finite population sampling.
\newblock \emph{Sankhy{\=a}: The Indian Journal of Statistics, Series B},
  52\penalty0 (1):\penalty0 50--66, 1990.

\bibitem[Lee et~al.(2024)Lee, Che, {Rabe-Hesketh}, Feller, and
  Miratrix]{lee2024improving}
J.~Lee, J.~Che, S.~{Rabe-Hesketh}, A.~Feller, and L.~Miratrix.
\newblock Improving the estimation of site-specific effects and their
  distribution in multisite trials.
\newblock \emph{Journal of Educational and Behavioral Statistics}, 2024.

\bibitem[Leyland and Davies(2005)]{LeylandDavies}
A.~H. Leyland and C.~A. Davies.
\newblock Empirical {B}ayes methods for disease mapping.
\newblock \emph{Statistical Methods in Medical Research}, 14\penalty0
  (1):\penalty0 17--34, 2005.

\bibitem[Lockwood et~al.(2018)Lockwood, Castellano, and
  Shear]{lockwood2018flexible}
J.~R. Lockwood, K.~E. Castellano, and B.~R. Shear.
\newblock Flexible {{Bayesian}} models for inferences from coarsened,
  group-level achievement data.
\newblock \emph{Journal of Educational and Behavioral Statistics}, 43\penalty0
  (6):\penalty0 663--692, 2018.

\bibitem[Loredo(2007)]{loredo2007analyzing}
T.~J. Loredo.
\newblock Analyzing data from astronomical surveys: {{Issues}} and directions.
\newblock In \emph{Statistical Challenges in Modern Astronomy {{IV}}}, volume
  371, page 121, 2007.

\bibitem[Louis(1984)]{LouisI}
T.~A. Louis.
\newblock Estimating a population of parameter values using {B}ayes and
  empirical {B}ayes methods.
\newblock \emph{Journal of the American Statistical Association}, 79\penalty0
  (386):\penalty0 393--398, 1984.

\bibitem[Lyles et~al.(1997)Lyles, Kupper, and Rappaport]{lyles1997prediction}
R.~H. Lyles, L.~L. Kupper, and S.~M. Rappaport.
\newblock On prediction of lognormal-scale mean exposure levels in
  epidemiologic studies.
\newblock \emph{Journal of Agricultural, Biological, and Environmental
  Statistics}, 2\penalty0 (4):\penalty0 417, 1997.

\bibitem[Magder and Zeger(1996)]{magder1996smooth}
L.~S. Magder and S.~L. Zeger.
\newblock A smooth nonparametric estimate of a mixing distribution using
  mixtures of {{Gaussians}}.
\newblock \emph{Journal of the American Statistical Association}, 91\penalty0
  (435):\penalty0 1141--1151, 1996.

\bibitem[Majewski et~al.(2017)]{Majewski_2017}
S.~R. Majewski et~al.
\newblock The {A}pache {P}oint {O}bservatory galactic evolution experiment
  ({APOGEE}).
\newblock \emph{The Astronomical Journal}, 154\penalty0 (3):\penalty0 94, 2017.

\bibitem[Manole and Niles-Weed(2024)]{ManoleNilesWeed}
T.~Manole and J.~Niles-Weed.
\newblock {Sharp convergence rates for empirical optimal transport with smooth
  costs}.
\newblock \emph{The Annals of Applied Probability}, 34\penalty0 (1B):\penalty0
  1108 -- 1135, 2024.

\bibitem[Meister(2009)]{meister2009deconvolution}
A.~Meister.
\newblock \emph{Deconvolution Problems in Nonparametric Statistics}, volume 193
  of \emph{Lecture {{Notes}} in {{Statistics}}}.
\newblock Springer Berlin Heidelberg, Berlin, Heidelberg, 2009.

\bibitem[Moore et~al.(2010)Moore, Lyles, and Manatunga]{moore2010empirical}
R.~H. Moore, R.~H. Lyles, and A.~K. Manatunga.
\newblock Empirical constrained {{Bayes}} predictors accounting for non-detects
  among repeated measures.
\newblock \emph{Statistics in Medicine}, 29\penalty0 (25):\penalty0 2656--2668,
  2010.

\bibitem[Morris(1982)]{NEF_QVF}
C.~N. Morris.
\newblock Natural exponential families with quadratic variance functions.
\newblock \emph{The Annals of Statistics}, 10\penalty0 (1):\penalty0 65--80,
  1982.

\bibitem[Norberg(1979)]{norberg1979credibility}
R.~Norberg.
\newblock The credibility approach to experience rating.
\newblock \emph{Scandinavian Actuarial Journal}, 1979\penalty0 (4):\penalty0
  181--221, 1979.

\bibitem[Olkin and Pukelsheim(1982)]{OlkinPukelsheim}
I.~Olkin and F.~Pukelsheim.
\newblock The distance between two random vectors with given dispersion
  matrices.
\newblock \emph{Linear Algebra and its Applications}, 48:\penalty0 257--263,
  1982.

\bibitem[Panaretos and Zemel(2020)]{StatisticsWassersteinSpace}
V.~M. Panaretos and Y.~Zemel.
\newblock \emph{An Invitation to Statistics in Wasserstein Space}.
\newblock SpringerBriefs in Probability and Mathematical Statistics. Springer
  International Publishing, Cham, 2020.

\bibitem[Polyanskiy and Wu(2021)]{PolyanskiyWuPoisson}
Y.~Polyanskiy and Y.~Wu.
\newblock Sharp regret bounds for empirical {B}ayes and compound decision
  problems, 2021.
\newblock Pre-print available on ar{X}iv.

\bibitem[Powers and St{\o}rmer(1970)]{PowersStormer}
R.~T. Powers and E.~St{\o}rmer.
\newblock Free states of the canonical anticommutation relations.
\newblock \emph{Communications in Mathematical Physics}, 16\penalty0
  (1):\penalty0 1--33, 1970.

\bibitem[Rachev and R\"{u}schendorf(1998)]{RachevRdorf}
S.~T. Rachev and L.~R\"{u}schendorf.
\newblock \emph{Mass transportation problems, {V}ol. {I}: Theory}.
\newblock Probability and its Applications. Springer-Verlag, New York, 1998.

\bibitem[Rao and Molina(2015)]{rao2015small}
J.~Rao and I.~Molina.
\newblock \emph{Small Area Estimation}.
\newblock Wiley Series in Survey Methodology. Wiley, Hoboken, 2nd edition,
  2015.

\bibitem[Ratcliffe et~al.(2020)Ratcliffe, Ness, Johnston, and
  Sen]{Ratcliffe_2020}
B.~L. Ratcliffe, M.~K. Ness, K.~V. Johnston, and B.~Sen.
\newblock Tracing the assembly of the {M}ilky {W}ay's disk through abundance
  clustering.
\newblock \emph{The Astrophysical Journal}, 900\penalty0 (2):\penalty0 165,
  2020.

\bibitem[Raudenbush and Bloom(2015)]{raudenbush2015learning}
S.~W. Raudenbush and H.~S. Bloom.
\newblock Learning about and from a distribution of program impacts using
  multisite trials.
\newblock \emph{American Journal of Evaluation}, 36\penalty0 (4):\penalty0
  475--499, 2015.

\bibitem[Raudenbush and Bryk(2002)]{raudenbush2002hierarchical}
S.~W. Raudenbush and A.~S. Bryk.
\newblock \emph{Hierarchical Linear Models: Applications and Data Analysis
  Methods}.
\newblock Number~1 in Advanced Quantitative Techniques in the Social Sciences.
  Sage Publications, Thousand Oaks, 2nd edition, 2002.

\bibitem[Robbins(1956)]{Robbins1956}
H.~Robbins.
\newblock An empirical {B}ayes approach to statistics.
\newblock In \emph{Proceedings of the Third Berkeley Symposium on Mathematical
  Statistics and Probability, Volume I}, pages 157--164, Berkeley, 1956.
  University of California Press.

\bibitem[Robbins(1982)]{robbins1982estimating}
H.~Robbins.
\newblock Estimating many variances.
\newblock In \emph{Statistical {{Decision Theory}} and {{Related Topics III}}},
  pages 251--261. Elsevier, 1982.

\bibitem[Saha and Guntuboyina(2020)]{SahaGuntuboyina}
S.~Saha and A.~Guntuboyina.
\newblock {On the nonparametric maximum likelihood estimator for Gaussian
  location mixture densities with application to Gaussian denoising}.
\newblock \emph{The Annals of Statistics}, 48\penalty0 (2):\penalty0 738--762,
  2020.

\bibitem[Santoro and Panaretos(2023)]{PanaretosSantoroBW}
L.~V. Santoro and V.~M. Panaretos.
\newblock Large sample theory for {B}ures-{W}asserstein barycenters, 2023.
\newblock Pre-print available on ar{X}iv.

\bibitem[Shen and Louis(2000)]{shen2000triplegoal}
W.~Shen and T.~A. Louis.
\newblock Triple-goal estimates for disease mapping.
\newblock \emph{Statistics in Medicine}, 19\penalty0 (17-18):\penalty0
  2295--2308, 2000.

\bibitem[Shen and Wu(2022)]{ShenWu}
Y.~Shen and Y.~Wu.
\newblock Empirical {B}ayes estimation: When does $g$-modeling beat
  $f$-modeling in theory (and in practice)?, 2022.
\newblock Pre-print available on ar{X}iv.

\bibitem[Slawski and Sen(2024)]{SlawskiSen}
M.~Slawski and B.~Sen.
\newblock Permuted and unlinked monotone regression in {$R^d$}: an approach
  based on mixture modeling and optimal transport.
\newblock \emph{Journal of Machine Learning Research}, 25\penalty0
  (183):\penalty0 1--57, 2024.

\bibitem[Soloff et~al.(2024)Soloff, Guntuboyina, and Sen]{Soloff}
J.~A. Soloff, A.~Guntuboyina, and B.~Sen.
\newblock {Multivariate, heteroscedastic empirical Bayes via nonparametric
  maximum likelihood}.
\newblock \emph{Journal of the Royal Statistical Society Series B: Statistical
  Methodology}, page qkae040, 2024.

\bibitem[Stephens(2017)]{stephens2017false}
M.~Stephens.
\newblock False discovery rates: A new deal.
\newblock \emph{Biostatistics}, 18\penalty0 (2):\penalty0 275--294, 2017.

\bibitem[Thomas et~al.(2011)Thomas, Stefanski, and
  Davidian]{thomas2011momentadjusteda}
L.~Thomas, L.~Stefanski, and M.~Davidian.
\newblock A moment-adjusted imputation method for measurement error models.
\newblock \emph{Biometrics}, 67\penalty0 (4):\penalty0 1461--1470, 2011.

\bibitem[Ting and Weinberg(2022)]{TingWeinberg}
Y.-S. Ting and D.~H. Weinberg.
\newblock How many elements matter?
\newblock \emph{The Astrophysical Journal}, 927\penalty0 (2):\penalty0 209,
  2022.

\bibitem[Villani(2009)]{Villani}
C.~Villani.
\newblock \emph{Optimal Transport: Old and New}, volume 338 of
  \emph{Grundlehren der Mathematischen Wissenschaften}.
\newblock Springer-Verlag, Berlin, 2009.

\bibitem[Wainwright and Jordan(2008)]{WainwrightJordan}
M.~J. Wainwright and M.~I. Jordan.
\newblock Graphical models, exponential families, and variational inference.
\newblock \emph{Foundations and Trends in Machine Learning}, 1\penalty0
  (1--2):\penalty0 1--305, 2008.

\bibitem[Walters(2024)]{walters2024empirical}
C.~Walters.
\newblock Empirical {{Bayes}} methods in labor economics.
\newblock In \emph{Handbook of {{Labor Economics}}}, volume~5, pages 183--260.
  Elsevier, 2024.

\bibitem[Weinstein et~al.(2018)Weinstein, Ma, Brown, and
  Zhang]{weinstein2018grouplinear}
A.~Weinstein, Z.~Ma, L.~D. Brown, and C.-H. Zhang.
\newblock Group-linear empirical {{Bayes}} estimates for a heteroscedastic
  normal mean.
\newblock \emph{Journal of the American Statistical Association}, 113\penalty0
  (522):\penalty0 698--710, 2018.

\bibitem[Zhang(1990)]{ZhangDeconvolution}
C.-H. Zhang.
\newblock Fourier methods for estimating mixing densities and distributions.
\newblock \emph{The Annals of Statistics}, 18\penalty0 (2):\penalty0 806--831,
  1990.

\bibitem[Zhang(1997)]{zhang1997empirical}
C.-H. Zhang.
\newblock Empirical {{Bayes}} and compound estimation of normal means.
\newblock \emph{Statistica Sinica}, 7\penalty0 (1):\penalty0 181--193, 1997.

\end{thebibliography}
    \bibliographystyle{abbrvnat}

    \appendix
    
    \section{Additional Empirical Bayes Approaches}\label{sec:additional-EB}

    While the main body of the paper is primarily focused on nonparametric EB methods, we note that our results apply to (unconstrained) EB denoisers $\hat{\delta}_{\B}$ arising from many different approaches.
    In this section we briefly review some alternative approaches which we will believe will be important in other applications.

    \subsection{$F$-Modeling}\label{subsec:f-mod}

    In some settings, the Bayes denoiser $\delta_{\B}$ depends only on the marginal distribution $F$ of $Z_1,\ldots, Z_n$.
    Then an EB approximation $\hat{\delta}_{\B}$ of $\delta_{\B}$ arises by modeling $F$ directly, rather than $G$.
    This is possible when $P$ is a continuous exponential family with natural parameterization, due to \textit{Eddington's/Tweedie's formula} \citep{dyson1926method, efron2011tweedie}, but it is also possible in many special cases where similar formulas are available~\citep{Robbins1956, robbins1982estimating, cressie1982useful}.
    Here we follow~\citet{efron2014two} and distinguish between $G$-modeling and $F$-modeling approaches, although we note that $F$-modeling can also be viewed as a special case of $G$-modeling.

    \subsection{Conjugate Parametric Models}\label{subsec:conj-param}

    Suppose in $d=m=1$ that $\{P_{\theta}\}_{\theta}$ is a mean-parameterized exponential family and that $G$ is a conjugate prior for $\{P_{\theta}\}_{\theta}$.
        We also assume that $\{P_{\theta}\}_{\theta}$ has a \textit{quadratic variance function}, meaning there exists a quadratic polynomial $V:\Rbb\to\Rbb$ such that $P_{\theta}$ has variance $V(\theta)$ for all $\theta$; see \cite{NEF_QVF} for more on such models.
        It is well-known (see \cite{DiaconisYlvisaker}) that the Bayes denoiser $\delta_{\B}$ is a linear function in this setting; we have
        \begin{equation*}
            \delta_{\B}(z) = \frac{1}{1+\frac{1}{2}V''(0)+a^{-2}V(\mu)}z+\left(1-\frac{1}{1+\frac{1}{2}V''(0)+a^{-2}V(\mu)}\right)\mu
        \end{equation*}
        where $\mu=\Ebb[\Theta]$ and $a^2 =\Var(\Theta)$ are the mean and variance of $G$.
        Although $\mu$ and $a^2$ are not known, one can estimate them via the marginal distribution of $Z$ alone, leading to an EB denoiser of the form
        \begin{equation*}
		\hat{\delta}_{\B}(z) = \frac{1-\hat s^{-2}V(\hat\mu)}{1+\frac{1}{2}V''(0)}z + \left(1-\frac{1-\hat s^{-2}V(\hat\mu)}{1+\frac{1}{2}V''(0)}\right)\hat\mu,
	\end{equation*}
        where $\hat \mu:=\frac{1}{n}\sum_{i=1}Z_i$ and $\hat s^2 := \frac{1}{n-1}\sum_{i=1}^{n}(Z_i-\hat \mu)^2$ are the empirical mean and variance of $Z_1,\ldots, Z_n$.
        One can easily show in this setting that the convergence of $\hat{\delta}_{\B}$ to $\delta_{\B}$ has the parametric rate $n^{-1}$.

    In this setting, we may use Theorem~\ref{thm:VCB-solution} to see that the variance-constrained Bayes denoiser is given explicitly by
    \begin{align*}
        \delta_{\varcstr}(z) &= \frac{1}{\sqrt{1+\frac{1}{2}V''(0)+a^{-2}V(\mu)}}z+\left(1-\frac{1}{\sqrt{1+\frac{1}{2}V''(0)+a^{-2}V(\mu)}}\right)\mu \\
        &= \sqrt{\frac{1- s^{-2}V(\mu)}{1+\frac{1}{2}V''(0)}}z + \left(1-\sqrt{\frac{1-s^{-2}V(\mu)}{1+\frac{1}{2}V''(0)}}\right)\mu,
    \end{align*}
    and also that $\hat{\delta}_{\varcstr}$ is given by
    \begin{equation*}
        \hat{\delta}_{\varcstr}(z) = \sqrt{\frac{1- \hat s^{-2}V(\hat \mu)}{1+\frac{1}{2}V''(0)}}z + \left(1-\sqrt{\frac{1-\hat s^{-2}V(\hat \mu)}{1+\frac{1}{2}V''(0)}}\right)\hat\mu.
    \end{equation*}
    In the remainder of this section, we give some explicit formulas for the denoisers above, for the usual parametric models of interest.
    
    \subsubsection*{Gaussian}
 
    As we already discussed in the introduction, a fundamental example is $G=\mathcal{N}(\mu,a^2)$ and $P_{\theta} = \mathcal{N}(\theta,\sigma^2)$, where $\mu\in\Rbb$ and $a^2>0$ are unknown and $\sigma^2>0$ is known.
    Of course, this fits into the framework above with constant variance function $V\equiv\sigma^2.$
    As such, the Bayes denoiser is
	\begin{equation*}
		\delta_{\varcstr}(z) = \frac{a}{\sqrt{a^2+\sigma^2}}z + \left(1-\frac{a}{\sqrt{a^2+\sigma^2}}\right)\mu,
	\end{equation*}
	and its EB approximation is
	\begin{equation*}
		\hat{\delta}_{\varcstr}(z) = \frac{\sqrt{(\hat s^2-\sigma^2)_{+}}}{\hat s}z + \left(1- \frac{\sqrt{(\hat s^2-\sigma^2)_{+}}}{\hat s}\right)\hat\mu.
	\end{equation*}
        This coincides with the formulas given in \citet{LouisI} and \citet{Ghosh}.
        
	\subsubsection*{Poisson}

        Suppose that $G$ is a Gamma distribution with unknown shape $k>0$ and scale $\alpha>0$, and that $P_{\theta}=\textnormal{Poi}(\theta)$; this fits into the setting above with linear variance function $V(\theta) = \theta$.
        Consequently, we can compute the oracle variance-constrained Bayes denoiser and the empirical variance-constrained Bayes denoiser to be
	\begin{equation*}
		\delta_{\varcstr}(z) = \sqrt{\frac{\alpha}{\alpha+1}}z  + \left(1-\sqrt{\frac{\alpha}{\alpha+1}}\right)k\alpha,
	\end{equation*}
	and
	\begin{equation*}
		\hat{\delta}_{\varcstr}(z) = \frac{\sqrt{(\hat s^2-\hat \mu)_{+}}}{\hat s}z + \left(1- \frac{\sqrt{(\hat s^2-\hat \mu)_{+}}}{\hat s}\right)\hat\mu.
	\end{equation*}

	\subsubsection*{Exponential}

        Suppose $G$ is an inverse Gamma distribution with unknown shape $k>0$ and scale $\alpha>1$, and that $P_{\theta}$ is an exponential distribution with mean $\theta$.
        (Note that we are using the scale-parameterization of the exponential distribution, rather than the usual rate parameterization.)
        This fits into the setting above with $V(\theta) = \theta^2$.
        Thus, the oracle and empirical variance-constrained Bayes denoisers are given by
	\begin{equation*}
		\delta_{\varcstr}(z) = \frac{1}{\sqrt{\alpha}}z + \left(1- \frac{1}{\sqrt{\alpha}}\right)\frac{k}{\alpha-1}
	\end{equation*}
	and
	\begin{equation*}
		\hat{\delta}_{\varcstr}(z) = \frac{\sqrt{\hat s^2 - \hat \mu^2}}{\hat s \sqrt{2}}z + \left(1-\frac{\sqrt{\hat s^2 - \hat \mu^2}}{\hat s \sqrt{2}}\right)\hat\mu.
	\end{equation*}
	respectively.
	
	\subsubsection*{Geometric}

        Suppose that $P_{\theta}$ is a Geometric distribution on $\{1,2,\ldots\}$ with success probability equal to $1/(1+\theta)$ so that its variance function is $V(\theta) =\theta + \theta^2$.
        (Note that this is the mean parameterization rather than the natural parameterization.)
	It is easy to check in this case that the conjugate prior is $\{G_{k,\alpha}\}_{k,\alpha}$ given by
	\begin{equation*}
		\frac{\diff G_{k,\alpha}}{\diff \lambda}(\theta) = \frac{\Gamma(k+\alpha)}{\Gamma(k+1)\Gamma(\alpha-1)}\cdot\frac{\theta^k}{(\theta+1)^{k+\alpha}},
	\end{equation*}
	since one can indeed verify that the conditional distribution of $\Theta$ given $\{Z=z\}$ is $G_{k+z-1,\alpha+1}$.
	We can also compute that, for $\alpha>3$, the first two non-central moments of $G_{k,\alpha}$ are
	\begin{align*}
		\Ebb[\Theta] &= \int_{0}^{\infty}\theta\diff G_{k,\alpha}(\theta) = \frac{k+1}{\alpha-2} \\
		\Ebb[\Theta^2] &= \int_{0}^{\infty}\theta^2\diff G_{k,\alpha}(\theta) = \frac{(k+2)(k+1)}{(\alpha-3)(\alpha-2)},
	\end{align*}
	hence we have
	\begin{equation*}
		\Var(\Theta) = \frac{(k+1)(k+\alpha-1)}{(\alpha-3)(\alpha-2)^2}.
	\end{equation*}
	In particular, we compute the oracle variance-constrained Bayes denoiser to be
	\begin{equation*}
		\delta_{\varcstr}(z) = \frac{1}{\sqrt{\alpha-1}}z + \left(1-\frac{1}{\sqrt{\alpha-1}}\right)\frac{k+1}{\alpha-2},
	\end{equation*}
	and the empirical variance-constrained Bayes denoiser to be
	\begin{equation*}
		\hat{\delta}_{\varcstr}(z) = \frac{\sqrt{(\hat s^2-\hat \mu- \hat \mu^2)_{+}}}{\hat s\sqrt{2}}z + \left(1- \frac{\sqrt{(\hat s^2-\hat \mu - \hat \mu^2)_{+}}}{\hat s\sqrt{2}}\right)\hat\mu.
	\end{equation*}

        \section{Discussion of Rates of Convergence}\label{app:rates}

        \subsection{Variance-Constrained Denoising}

        This section contains a detailed discussion of the rates of convergence that Theorem~\ref{thm:varcstr-gm-npp} implies for the convergence of $\hat{\delta}_{\varcstr}$ to $\delta_{\varcstr}$, where $\delta_{\varcstr}$ is the solution to problem \eqref{eqn:pop-varcstr} defined in~\eqref{eq:varcstr} and $\hat{\delta}_{\varcstr}$ is its EB approximation defined in Algorithm~\ref{alg:Gaussian-mod-NP-prior}.

        First we consider the setting of Gaussian model and nonparametric prior, as in Example~\ref{ex:g-model}, where we use $G$-modeling to estimate $\hat{\delta}_{\B}$ via the NPMLE.
    We apply the rates of convergence in \cite[Theorem~9]{Soloff} in two further subcases:
    \begin{itemize}
        \item If $G$ is sub-exponential, then we may take $q=4$ and $\alpha_n = n^{-1}(\log n)^{c_m}$, for some constant $c_m>0$ depending only on the dimension $m$, hence $\hat{\delta}_{\varcstr}$ achieves the rate $n^{-1}(\log n)^{c_m}$.
        \item If $G$ is sufficiently heavy-tailed that \eqref{eqn:FM} is satisfied only for $2<q<2+\sqrt{m^2+4}-m$, then we take $\alpha_n=n^{-q(q+m)^{-1}}(\log n)^{c_m}$, for some constant $c_m>0$ depending only on the dimension $m$, hence $\hat{\delta}_{\varcstr}$ achieves the rate $n^{-2(q-2)/q}$.
    \end{itemize}
    We emphasize that the results from \citet{Soloff} may be used to derive further rates of convergence for $\hat{\delta}_{\varcstr}$ under the Gaussian model, but where $G$ has some other structural properties of interest (e.g., sparse support).

    Second, we consider the setting of a Poisson model and nonparametric prior, where we can use either the $G$-modeling approach of the NPMLE (Example~\ref{ex:g-model}) or an $F$-modeling approach (Appendix~\ref{subsec:f-mod}) in particular the Robbins estimator \cite{Robbins1956}.
    By the results of \citet{JanaPoliyanskiyWu, PolyanskiyWuPoisson}, and \citet{ShenWu}, we may derive the following rates:
    \begin{itemize}
        \item If $G$ is compactly-supported, then we may take $q=4$ and $\alpha_n=n^{-1}(\log n/\log\log n)^2$, which shows that $\hat{\delta}_{\varcstr}$ achieves the rate of convergence $n^{-1}(\log n/\log\log n)^2$.
        \item If $G$ is sub-exponential, then we may take $q=4$ and $\alpha_n=n^{-1}(\log n)^3$, so $\hat{\delta}_{\varcstr}$ achieves the rate of convergence $n^{-1}(\log n)^3$.
        \item If $G$ is sufficiently heavy-tailed that \eqref{eqn:FM} is satisfied only for $2<q<4$, then it is known that the $G$-modeling approach leads to the rate $\alpha_n = n^{-(q-2)(q+1)^{-1}}(\log n)^c$ for some $c>0$ and that the $F$-modeling approach leads to the rate $\alpha_n = n^{-(2q-2)(2q+1)^{-1}}(\log n)^{c'}$ for some $c'>0$.
        If $2 < q < \sqrt{3}+1\approx 2.7321$ then the rate of covariance estimation dominates, and both the $G$-modeling and $F$-modeling approaches lead to $\hat{\delta}_{\varcstr}$ achieving the rate $n^{-2(q-2)/q}$.
    \end{itemize}
    Despite the fact that the rates above for $G$-modeling and $F$-modeling agree in many cases, we remark that $G$-modeling is generally preferable in practice.
 
    Last, let us consider the setting of an exponential family with its conjugate prior, as in Appendix~\ref{subsec:conj-param}.
    As we already discussed, the denoiser $\hat{\delta}_{\varcstr}$ from Algorithm~\ref{alg:Gaussian-mod-NP-prior} has the following formula:
    \begin{equation*}
	\hat{\delta}_{\varcstr}(z) = \sqrt{\frac{(1-\hat s^{-2}V(\hat\mu))_{+}}{1+\frac{1}{2}V''(0)}}z + \left(1-\sqrt{\frac{(1-\hat s^{-2}V(\hat\mu))_{+}}{1+\frac{1}{2}V''(0)}}\right)\hat\mu,
    \end{equation*}
    where $\hat \mu:=\frac{1}{n}\sum_{i=1}Z_i$ and $\hat \sigma^2 := \frac{1}{n-1}\sum_{i=1}^{n}(Z_i-\hat \mu)^2$ are the empirical mean and variance of $Z_1,\ldots, Z_n$.
    Since we can take $\alpha_n = n^{-1}$ and $q=4$, we obtain the expected behavior, that $\hat{\delta}_{\varcstr}$ achieves the parametric convergence rate $n^{-1}$.

    \subsection{Distribution-Constrained Denoising}\label{app:distrcstr-discussion}

    This section contains a brief discussion of the rates of convergence yielded by Theorem~\ref{thm:emp-distrcstr} for the convergence of $\hat{\delta}_{\distrcstr}$ to $\delta_{\distrcstr}$, where $\delta_{\distrcstr}$ is the unique solution to problem \eqref{eqn:pop-distrcstr} guaranteed by Theorem~\ref{thm:pop-distrcstr} and $\hat{\delta}_{\distrcstr}$ is its EB estimator defined in Algorithm~\ref{alg:distr-cstr}.

    First, consider the setting of a Gaussian model and nonparametric compactly supported latent variable distribution $G$.
    We already saw in Appendix~\ref{app:rates} that assumptions~\eqref{eqn:2M},~\eqref{eqn:Z-dens}, and~\eqref{eqn:Bayes-dens} are satisfied, and that \citet[Theorem~9]{Soloff} allows us to take $\alpha_n=n^{-1}(\log n)^{c_m}$ for some constant $c_m>0$ depending only on the dimension $m$.
    We may also use~\citet[Theorem~11]{Soloff} to take $\beta_n=(\log n)^{-1}$ which is known to be the minimax optimal rate for Gaussian deconvolution. (See \citet[Theorem~1]{DedeckerMichel}.)
    Lastly, we use \citet[equation~(2.21)]{StatisticalOT} to take
    \begin{equation}\label{eqn:emp-wass-rate}
        \gamma_n := \begin{cases}
            n^{-\sfrac{1}{2}} &\textnormal{ if } 1\le m\le 3 \\
            n^{-\sfrac{1}{2}}\log n &\textnormal{ if } m = 4 \\
            n^{-\sfrac{2}{m}} &\textnormal{ if } m\ge 5.
        \end{cases}
    \end{equation}
    For any dimension $m$, the deconvolution term dominates and we conclude that $\hat{\delta}_{\distrcstr}$ achieves the slow rate $(\log n)^{-1}$.

    Second, consider the setting of an exponential family with its conjugate prior, as in Subsection~\ref{subsec:conj-param}.
    As discussed in Appendix~\ref{app:rates}, assumptions~\eqref{eqn:2M},~\eqref{eqn:Z-dens}, and~\eqref{eqn:Bayes-dens} are satisfied, and we may take $\alpha_n=n^{-1}$.
    Also, it is expected that one can typically take $\beta_n = n^{-1}$.
    Lastly, note that we may also apply \cite[equation~(2.21)]{StatisticalOT} to determine that $\gamma_n$ is given by~\eqref{eqn:emp-wass-rate}.
    For any dimension $m$, the empirical Wasserstein distance dominates (since it is always at least as slow as $n^{-\sfrac{1}{2}}$), and we conclude that $\hat{\delta}_{\distrcstr}$ achieves the rate $\gamma_n$.

    \section{Results on General-Constrained Denoising}\label{sec:gen-cstr}

    In this section we detail the oracle- and empirical-level versions of problem~\eqref{eqn:pop-gencstr}, which were mentioned but not precisely studied in the main body of the paper.

    \subsection{Oracle General-Constrained Denoising}\label{subsec:pop-gen-cstr}

    We begin with the oracle version of problem~\eqref{eqn:pop-gencstr} of general-constrained Bayes denoising.
        That is, we fix a linearly-independent collection $\psi_1,\ldots,\psi_k:\Rbb^m\to\Rbb$ of continuous functions (which are also assumed to be $G$-integrable).
        For convenience, we write $\Psi:=\{\psi_1,\ldots, \psi_k\}$.
        Roughly speaking, any constraint on the distribution of $\delta(Z)$ can be cast in the form above (although we may need to take $k$ to be infinite); indeed, \eqref{eqn:pop-gencstr} reduces to \eqref{eqn:Bayes} if $\Psi$ is empty, \eqref{eqn:pop-varcstr} if $\Psi$ consists of all quadratic forms, and \eqref{eqn:pop-distrcstr} if $\Psi$ consists of all bounded continuous functions.
        In addition to subsuming these earlier examples, we are motivated by the following.

        \begin{example}[Higher moment constraints]\label{ex:moment-cstr}
            In the univariate case $m=d=1$ and for some fixed positive integer $k$, one may aim to minimize the denoising risk while matching the first $k$ moments of $\delta(Z)$ to those of $\Theta$, thereby solving the problem
            \begin{equation}\label{eqn:gen-moments}
    	\begin{cases}
    		\textnormal{minimize} &\Ebb\left[|\delta(Z)-\Theta|^2\right] \\
    		\textnormal{over}& \delta:\Rbb\to\Rbb \\
    		\textnormal{with}& \Ebb\left[(\delta(Z))^{\ell}\right] = \Ebb\left[\Theta^{\ell}\right] \quad \textnormal{ for all } \ell=1,\ldots, k.
            \end{cases}
        \end{equation}
        This, of course, coincides with the case of variance constraints when $k=2$, and higher moment constraints force the distribution of $\delta(Z)$ to be even closer to the distribution $G$ of $\Theta$.
        Along these lines it is shown in \cite{armstrong2022robust}, in the context of constructing nonparametric EB confidence intervals, that imposing constraints on the fourth moment leads to EB procedures which nearly adapt to the parametric (Gaussian) setting.
        \end{example}

\begin{example}[Errors-in-variables regression continued: moment adjusted imputation]
\label{example:errors_in_variables_mai}
\citet{bay1997adjusting} and \citet{thomas2011momentadjusteda} propose an intermediate approach called moment adjusted imputation that bridges the methods described in Examples~\ref{example:moment_reconstruction} and~\ref{example:errors_in_variables_distribution}. Their approach uses a denoiser that matches a controlled number of moments and cross-moments between $X$ and $Y$, extending beyond the two moments in moment reconstruction but without enforcing full distributional equivalence. For instance, when $X \in \mathbb{R}$, they aim to estimate $X$ while preserving multiple moments such as $\mathbb{E}[X]$, $\mathbb{E}[X^2]$, $\mathbb{E}[X^3]$, $\mathbb{E}[X^4]$, as well as cross-moments like $\mathbb{E}[XY]$ and $\mathbb{E}[X^2Y]$. Such a denoiser is given by problem~\eqref{eqn:pop-gencstr} with the choices $Z = (W,Y)$, $\Theta = (X,Y)$, $\psi_\ell(x,y) = x^{\ell}$ for $\ell=1,\dotsc,4$, and $\psi_{5}(x,y)=xy$, $\psi_{6}(x,y)=x^2y$. 
\end{example}  

        We need some further assumptions and notation in order to state our main result on problem~\eqref{eqn:pop-gencstr}.
        First, let us define $\Gamma(F;G,\Psi)$ to be the set of all joint distributions $\pi$ on $\mathbb{R}^d \times \mathbb{R}^m$ such that if $(Z,D)\sim \pi$ then $Z\sim F$ and $\Ebb[\psi_{\ell}(D)]=\Ebb[\psi_{\ell}(\Theta)]$ for all $1\le \ell\le k$.

    Furthermore, we assume that
    \begin{equation}\label{eqn:cty}
            \delta_{\B}:\Rbb^d\to\Rbb^m\textnormal{ is continuous on a set of full } F\textnormal{-measure}.\tag{C}
    \end{equation}
    We note that \eqref{eqn:cty} is a mild condition, and that it is satisfied whenever $\{P_{\theta}\}_{\theta}$ is a suitable exponential family, regardless of $G$ \cite[Remark~2.2]{GarciaTrillosSen}; for example, it holds for Gaussian or Poisson families.
    (It holds trivially whenever $F$ has discrete support.)
    Lastly, we assume that 
    \begin{equation}\label{eqn:quadratic-growth}
        \Psi\textnormal{ contains a non-negative function with at least quadratic growth.}
        \tag{QG}
    \end{equation}
    Here, a function $\psi:\Rbb^m\to\Rbb$ is said to have \textit{at least quadratic growth} if there exist constants $C,R>0$ such that we have $\|\eta\|^2\le C\psi(\eta)$ for all $\eta\in\Rbb^m$ with $\|\eta\|>R$.
        
    Finally, we get the following.
    We remind the reader that $\delta_{\B,H}$ is the Bayes denoiser when $\Theta$ comes from distribution $H\in\mathcal{P}(\Rbb^m)$, and that the cost function $c_G(\cdot,\cdot)$ is defined in~\eqref{eqn:costG}.

    \begin{theorem}\label{thm:gen-cstr-num}
        Under assumptions \eqref{eqn:Z-dens}, \eqref{eqn:Bayes-dens}, \eqref{eqn:cty}, and \eqref{eqn:quadratic-growth}, the problem \begin{equation}\label{eqn:gen-str-Kant}
    	\underset{\pi\in\Gamma(F;G,\Psi)}{\textnormal{minimize}} \int_{\Rbb^d\times\Rbb^m}c_G(z,\eta)\, \diff \pi(z,\eta),
    \end{equation}
    has a unique solution $\pi_{\gencstr}$, and this solution is concentrated on the graph of a function which we denote $\delta_{\gencstr}:\Rbb^d\to\Rbb^m$.
    If $H$ denotes the marginal distribution of $D$ when $(Z,D)\sim \pi_{\gencstr}$, then $\delta_{\gencstr}$ can be written as $\delta_{\gencstr} = \nabla \phi \circ \delta_{\B,H}$ for some convex function $\phi:\Rbb^m\to\Rbb \cup \{+\infty\}$, and we have $R_{\gencstr} = R_{\B} + W_2^2(H,(\delta_{\B})_{\#}F)$.
    Consequently, \eqref{eqn:gen-str-Kant} is a tight convex relaxation of~\eqref{eqn:pop-gencstr}, and $\delta_{\gencstr}$ is the $F$-a.s. unique solution to \eqref{eqn:pop-gencstr}.
    \end{theorem}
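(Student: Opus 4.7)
The plan is to decompose \eqref{eqn:gen-str-Kant} into a nested pair of optimizations, closely mirroring the strategy behind Theorem~\ref{thm:pop-distrcstr}. Setting
\begin{equation*}
\mathcal{H} := \Bigl\{H \in \Pcal_2(\Rbb^m) : \int \psi_\ell\,\diff H = \int \psi_\ell\,\diff G \text{ for all } 1\le\ell\le k\Bigr\},
\end{equation*}
I observe that $\pi \in \Gamma(F;G,\Psi)$ if and only if its second marginal belongs to $\mathcal{H}$, so
\begin{equation*}
\inf_{\pi \in \Gamma(F;G,\Psi)} \int c_G\,\diff \pi \;=\; \inf_{H \in \mathcal{H}} \; \inf_{\pi \in \Gamma(F;H)} \int \|\eta - \delta_\B(z)\|^2\,\diff \pi(z,\eta).
\end{equation*}
For fixed $H$, the inner problem is standard OT in disguise: pushing $\pi$ forward by $(\delta_\B,\id)$ yields a coupling with marginals $(\delta_\B)_{\#}F$ and $H$ whose quadratic-cost integral agrees with $\int c_G\,\diff \pi$. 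Since $(\delta_\B)_{\#}F$ has a density by \eqref{eqn:Bayes-dens}, Brenier's theorem provides a unique optimal coupling concentrated on the graph of $\nabla\phi_H$ for a convex $\phi_H$, so the inner-optimal coupling is $(\id, \nabla\phi_H\circ\delta_\B)_{\#}F$ with value $W_2^2((\delta_\B)_{\#}F, H)$.

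For the outer problem, I would minimize $H \mapsto W_2^2((\delta_\B)_{\#}F, H)$ over $\mathcal{H}$. Nonemptiness is immediate ($G\in\mathcal{H}$); tightness and weak closedness follow from \eqref{eqn:quadratic-growth} together with $G$-integrability of each $\psi_\ell$, which yields a uniform second-moment bound over $\mathcal{H}$; and lower semi-continuity together with convexity of $W_2^2((\delta_\B)_{\#}F,\cdot)$ then produce an optimizer $H^\ast$. For uniqueness, suppose $H_1, H_2 \in \mathcal{H}$ both achieve the minimum value $m$; form $H_t := (1-t)H_1 + tH_2 \in \mathcal{H}$ and
\begin{equation*}
\pi_t := (1-t)(\id, \nabla\phi_{H_1}\circ\delta_\B)_{\#}F + t\,(\id, \nabla\phi_{H_2}\circ\delta_\B)_{\#}F.
\end{equation*}
The coupling $\pi_t$ has second marginal $H_t$ and $c_G$-cost equal to $m$, while convexity of $W_2^2$ gives $W_2^2((\delta_\B)_{\#}F, H_t) \le m$ and optimality of $m$ on $\mathcal{H}$ gives the reverse inequality. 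Hence the $(\delta_\B,\id)$-pushforward of $\pi_t$ is an optimal coupling between $(\delta_\B)_{\#}F$ and $H_t$, which Brenier forces to be graph-supported; this graph-support is only compatible with the convex-combination structure if $\nabla\phi_{H_1} = \nabla\phi_{H_2}$ on a set of full $(\delta_\B)_{\#}F$-measure, giving $H_1 = H_2$.

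Finally, to connect back to \eqref{eqn:pop-gencstr} and extract the risk formula, I would invoke the orthogonality identity $\Ebb[\|\delta(Z)-\Theta\|^2] = \Ebb[\|\delta(Z)-\delta_\B(Z)\|^2] + R_{\B}$, which holds because $\delta_\B(Z) = \Ebb[\Theta\mid Z]$ and the cross-term vanishes. This shows that \eqref{eqn:pop-gencstr} is, up to the additive constant $R_\B$, the restriction of \eqref{eqn:gen-str-Kant} to couplings of the form $(\id,\delta)_{\#}F$. Since the unique minimizer $\pi_{\gencstr}$ of \eqref{eqn:gen-str-Kant} is exactly of this form with $\delta_{\gencstr} = \nabla\phi \circ \delta_\B$ (taking $\phi = \phi_{H^\ast}$), both tightness of the convex relaxation and $F$-a.s. uniqueness in \eqref{eqn:pop-gencstr} follow, and $R_{\gencstr} = R_\B + W_2^2(H^\ast,(\delta_\B)_{\#}F)$. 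The step I expect to be the main obstacle is the uniqueness argument for $H^\ast$, since $W_2^2((\delta_\B)_{\#}F,\cdot)$ is not strictly convex in the linear sense on $\Pcal_2(\Rbb^m)$; the key idea, as outlined, is to transport the uniqueness supplied by Brenier through the mixture $\pi_t$, rather than attempt a direct strict-convexity argument on the outer functional.
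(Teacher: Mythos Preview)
Your proposal is correct and uses the same core ingredients as the paper's proof---tightness from \eqref{eqn:quadratic-growth}, Brenier's theorem for the map structure, and the ``a nontrivial mixture of two graph-supported couplings cannot be graph-supported'' trick for uniqueness---but you organize them differently. The paper works directly on the coupling space $\Gamma(F;G,\Psi)$: it establishes existence of a minimizer by showing the objective is weakly lower semicontinuous (via \eqref{eqn:cty}) on a weakly compact feasible set (via \eqref{eqn:quadratic-growth} and Prokhorov), then invokes \citet[Theorem~2.4 and Remark~4.2]{GarciaTrillosSen} as a black box to conclude that every minimizer is induced by a map, after which uniqueness follows immediately from the mixture argument applied at the coupling level. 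Your nested inner/outer decomposition makes the Brenier step explicit rather than citing it, and your uniqueness argument for $H^\ast$ is exactly the paper's coupling-level uniqueness argument transposed to the level of second marginals. What your route buys is self-containment (no external reference for the graph structure) and a transparent derivation of the risk identity $R_{\gencstr} = R_\B + W_2^2(H,(\delta_\B)_{\#}F)$ as the value of the outer problem; what the paper's route buys is brevity, since the structural conclusions are deferred to the citation. Both arguments are equally brief about the weak closedness of $\mathcal{H}$ (as opposed to mere tightness), which is the one place where care is needed if some $\psi_\ell$ has super-quadratic growth.
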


    As in the case of problem~\eqref{eqn:pop-distrcstr}, this result provides conditions under which problem~\eqref{eqn:pop-gencstr} has a unique solution which can be found by solving the linear program~\eqref{eqn:gen-str-Kant}.
    We refer to the probability measure as $H$ as the \textit{projected latent variable distribution} since, among all probability measures whose integrals with $\Psi$ match those of $G$, it is the nearest to $(\delta_{\B})_{\#}F$ with respect to the distance $W_2$. 

    \begin{example}[Higher moment constraints continued]\label{ex:moment-cstr-continued}
        To provide further intuition on the projected latent variable distribution $H$, we characterize it in some particular cases of Example~\ref{ex:moment-cstr}.
        If $k = 0$ or $k=1$, then $\delta_{\B}$ is feasible hence optimal for \eqref{eqn:gen-moments}, so $H$ equals $(\delta_{\B})_{\#}F$.
        If $k=2$, then \eqref{eqn:gen-moments} is equivalent to \eqref{eqn:pop-varcstr}, so $H$ equals the pushforward of $F$ by an affine function of $\delta_{\B}$.
        If $G$ is compactly supported and $H_k$ denotes the projected latent variable distribution for constraints on $k$ moments, then $H_k$ converges weakly to $G$ as $k\to\infty$.
    \end{example}

    \subsection{Empirical General-Constrained Denoising}\label{subsec:emp-gen-cstr}

    Since in Theorem~\ref{thm:gen-cstr-num} we provided sufficient conditions for the existence of a unique solution to problem~\eqref{eqn:pop-gencstr}, we presently develop an EB methodology for targeting this oracle denoiser.
    Methodologically, we regard \eqref{eqn:pop-gencstr} as a way to tradeoff between problems \eqref{eqn:pop-varcstr} and \eqref{eqn:pop-distrcstr}; the former only encodes limited information about the latent variable distribution (the first two moments), the latter suffers from slow rates of convergence (due to the difficulty of the deconvolution problem) and may have larger denoising risk. Often some intermediate behavior may be desired.
    
    We describe our proposed procedure in Algorithm~\ref{alg:gen-cstr}, which is similar to the approach we previously used for the distribution-constrained problem~\eqref{eqn:pop-distrcstr}.
    Since we can estimate the marginal distribution $F$ via $\bar F_n:=\frac{1}{n}\sum_{i=1}^{n}\delta_{Z_i}$ and the cost function $c_G$ via $\hat c_n$ through $\hat{\delta}_{\B}$, it only remains to approximate the constraints of the linear program~\eqref{eqn:gen-str-Kant}.
    Assuming that this is possible (that is, that the integrals of $\Psi$ with $\hat{G}_n$ converge to the integrals of $\Psi$ with $G$), we may let $\hat{\delta}_{\gencstr}$ denote the barycentric projection of an optimal solution to the corresponding linear program.
    Note also that we restrict the $\eta$ variable of the linear program to lie in a compact set $K\subseteq\Rbb^m$ (line 6).

    In order to give a rate of convergence of $\hat{\delta}_{\gencstr}$ to $\delta_{\gencstr}$, we need to present a set of assumptions.
    To set this up, we first assume \eqref{eqn:Z-dens}, \eqref{eqn:Bayes-dens}, \eqref{eqn:cty}, and \eqref{eqn:quadratic-growth} so that Theorem~\ref{thm:gen-cstr-num} is in  effect; then, we recall that there is a well-defined projected latent variable distribution which we denote $H$.
    Second, we assume    \begin{equation}\label{eqn:CS}
        \textnormal{the support of } H \textnormal{ is contained in the compact set } K\subseteq\Rbb^m
        \tag{CS}
    \end{equation}
    which implies that $H$ is compactly supported and that the support constraint of Algorithm~\ref{alg:gen-cstr} is well-specified.
    Third, we need
    \begin{equation}\label{eqn:int}
        \begin{pmatrix}
            \int_{\Rbb^m}\psi_1\diff G \\ \vdots \\\int_{\Rbb^m}\psi_k\diff G
        \end{pmatrix} \in \left\{\begin{pmatrix}
            \int_{\Rbb^m}\psi_1\diff G' \\ \vdots \\\int_{\Rbb^m}\psi_k\diff G'
        \end{pmatrix}: G'\in\Pcal(\Rbb^m) \right\}^{\circ}=:\mathcal{M}^{\circ}(\Psi).\tag{INT}
    \end{equation}
    where $A^{\circ}$ denotes the interior of a set $A\subseteq\Rbb^k$;
    this condition states that the integrals of $\psi_1,\ldots, \psi_k$ under the latent variable distribution $G$ do not lie on the boundary of all possible values of the integrals.
    Fourth, we impose assumption \eqref{eqn:EC} and that $\hat{G}_n$ therein (typically taken to be the NPMLE or a smoothed version thereof) additionally satisfies
    \begin{equation}\label{eqn:CC}
        \int_{\Rbb^m}\psi_{\ell}(\eta) \diff \hat G_n(\eta) \to \int_{\Rbb^m}\psi_{\ell}(\eta) \diff G(\eta) \textnormal{ for all } 1\le \ell \le k
        \tag{CC}
    \end{equation}
    in probability as $n\to\infty$.
    For $\psi_{\ell}$ with at most quadratic growth, such convergence follows from~\eqref{eqn:EC} because of \citet[Definition~6.8]{Villani}.
    For $\psi_{\ell}$ whose growth is polynomial of degree $p\ge 2$, it is sufficient to additionally require $W_p(\hat G_n,G) = o_{\Pbb}(1)$ on top of~\eqref{eqn:EC}.

    Next we introduce some assumptions which are analogous to assumptions used in Theorem~\ref{thm:emp-distrcstr}.
    First, write $\hat{H}_n$ for the distribution of the $\Rbb^m$-coordinate under the optimal coupling $\hat{\pi}_{\gencstr}$ from Algorithm~\ref{alg:gen-cstr}.
    We require that, for some specified sequence $\delta_n\to 0$, we have
    \begin{equation}\label{eqn:CM-H}
        W_2^2(\hat H_n, H) = O_{\Pbb}(\delta_n).
        \tag{D$'$}
    \end{equation}
    It will be shown during the course of our main theorem that \eqref{eqn:CM-H} follows from \eqref{eqn:CS}, \eqref{eqn:int}, and \eqref{eqn:CC}, so this assumption only amounts to quantifying the rate of convergence.
    Second, we assume
    \begin{equation}\label{eqn:L-H}
        \delta_{\gencstr} = \nabla\phi\circ\delta_{\B,H} \textnormal{ where }\phi \textnormal{ is $\lambda$-strongly convex and $L$-smooth for some } \lambda,L>0,
        \tag{R$'$}
    \end{equation}
     which is analogous to assumption~\eqref{eqn:L} but with $\delta_{\B,H}$ in place of $\delta_{\B,G} \equiv \delta_{\B}$.
    Lastly, we make the following assumption on the convergence of the empirical Wasserstein distance for i.i.d.~samples $\eta_1,\ldots, \eta_n$ from $H$ defined via $\eta_i := \delta_{\gencstr}(Z_i)$: for some specified sequence $\varepsilon_n\to 0$ we have
    \begin{equation}\label{eqn:EW-H}
        W_2^2\left(\bar H_n,\; H\right) = O_{\Pbb}(\varepsilon_n) \qquad \textnormal{ where }\qquad \bar H_n := \frac{1}{n}\sum_{i=1}^{n}\delta_{\eta_i} 
        \tag{EW$'$}
    \end{equation}
    which is analogous to \eqref{eqn:EW}.
    
    \begin{algorithm}[t]
	\caption{An EB estimator $\hat \delta_{\gencstr}$ of the oracle general-constrained Bayes denoiser $\delta_{\gencstr}$. 
    }\label{alg:gen-cstr}
	
	\begin{algorithmic}[1]
            
		\Procedure{\textbf{GeneralConstrainedEB}}{$Z_1,\ldots, Z_n;\{P_{\theta}\}_{\theta}$}
		  \State \textbf{input:} samples $Z_1,\ldots, Z_n$, likelihood $\{P_{\theta}\}_{\theta}$, compact $K\subseteq \Rbb^m$
		\State \textbf{output:} denoising function $\hat{\delta}_{\gencstr}:\{Z_1,\ldots, Z_n\}\to\Rbb^m$
            \State $\hat{\delta}_{\B}(\,\cdot\,) \leftarrow$ EB approximation of $\delta_{\B}(\,\cdot\,)$
            \State $\hat{G}_n\leftarrow$ estimate of $G$
            \State $\hat c_{n}(Z_i,\eta) \leftarrow \|\hat{\delta}_{\B}(Z_i)-\eta\|^2$ for all $1\le i \le n$ and $\eta\in K$

            \State $\hat{\pi}_{\gencstr} \leftarrow$ \textbf{minimize} $\int_{\Rbb^d\times \Rbb^m}\hat{c}_{n}(z,\eta)\diff \pi(z,\eta)$
            \State \hspace{3.375em} \textbf{over} \hspace{2.1em} probability measures $\pi\in\Pcal(\Rbb^d\times K)$
            \State \hspace{3.375em} \textbf{with}\hspace{2.1em} $\pi(\{Z_i\}\times K) = \frac{1}{n}$ for all $1\le i \le n$
            \State \hspace{3.375em} \textbf{and} \hspace{2.1em} $\int_{\Rbb^d\times K}\psi_{\ell}(\eta)\diff \pi(z,\eta) = \int_{K}\psi_{\ell}\diff \hat{G}_n$ for all $1\le \ell \le k$ 
            \State $\hat{\delta}_{\gencstr}(Z_i) \leftarrow \int_{K}\eta\,\diff \hat{\pi}_{\gencstr}(\eta\,|\,Z_i)$ for all $1\le i \le n$
            \State \textbf{return} $\hat{\delta}_{\gencstr}$
		\EndProcedure
	\end{algorithmic}
    \end{algorithm}

    Finally, these considerations lead us to the following result.

    \begin{theorem}\label{thm:gen-cstr-emp}
        Under assumptions \eqref{eqn:Z-dens}, \eqref{eqn:Bayes-dens}, \eqref{eqn:cty}, \eqref{eqn:quadratic-growth}, \eqref{eqn:CS}, \eqref{eqn:int}, \eqref{eqn:CC}, \eqref{eqn:CM-H}, \eqref{eqn:L-H}, \eqref{eqn:EW-H},  and \eqref{eqn:EBQ}, the denoiser $\hat \delta_{\gencstr}$ from Algorithm~\ref{alg:gen-cstr} satisfies 
        \begin{equation}\label{eqn:gen-cstr-regret}
            \frac{1}{n}\sum_{i=1}^{n}\left\|\hat{\delta}_{\gencstr}(Z_i)-\delta_{\gencstr}(Z_i)\right\|^2 =O_{\Pbb}(\alpha_n^{\sfrac{1}{2}}\vee \delta_{n}\vee \varepsilon_n)
        \end{equation}
        as $n\to\infty$.
    \end{theorem}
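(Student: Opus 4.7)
The plan is to adapt the argument for Theorem~\ref{thm:emp-distrcstr}, treating the projected latent variable distribution $H$ from Theorem~\ref{thm:gen-cstr-num} as the effective target of the OT problem underlying general-constrained denoising. By Theorem~\ref{thm:gen-cstr-num}, the oracle denoiser $\delta_{\gencstr}$ is the Monge map of the OT between $F$ and $H$ under the cost $c_G(z,\eta)=\|\delta_{\B}(z)-\eta\|^2$, and by assumption \eqref{eqn:L-H} this map is of the form $\nabla\phi\circ\delta_{\B,H}$ for a $\lambda$-strongly convex, $L$-smooth potential $\phi$. Meanwhile, Algorithm~\ref{alg:gen-cstr} outputs the barycentric projection of the optimal coupling $\hat{\pi}_{\gencstr}$ of the analogous discrete OT between $\bar{F}_n=\frac{1}{n}\sum_i\delta_{Z_i}$ and a data-dependent target $\hat{H}_n$ with cost $\hat{c}_n(z,\eta)=\|\hat{\delta}_{\B}(z)-\eta\|^2$. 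The task thus reduces to comparing two OT barycentric projections under simultaneous perturbations of the cost and of both marginals.

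I would decompose the error $\frac{1}{n}\sum_i\|\hat{\delta}_{\gencstr}(Z_i)-\delta_{\gencstr}(Z_i)\|^2$ into three contributions. First, a cost-perturbation term: expanding $\|\hat{\delta}_{\B}(Z_i)-\eta\|^2 - \|\delta_{\B}(Z_i)-\eta\|^2$ and integrating against a suitable coupling reduces to bounding cross-terms of the form $\langle\hat{\delta}_{\B}(Z_i)-\delta_{\B}(Z_i),\eta\rangle$, which via Cauchy--Schwarz and the compactness from \eqref{eqn:CS} turn the $O_{\Pbb}(\alpha_n)$ rate from \eqref{eqn:EBQ} into an $O_{\Pbb}(\alpha_n^{\sfrac{1}{2}})$ contribution, exactly as in the proof of Theorem~\ref{thm:emp-distrcstr}. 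Second, a target-marginal term of size $\delta_n$: the bound $W_2^2(\hat{H}_n,H)=O_{\Pbb}(\delta_n)$ from \eqref{eqn:CM-H}, combined with the strong-convexity/smoothness in \eqref{eqn:L-H} (which yields a standard OT-map stability bound in the spirit of \citet{ManoleNilesWeed}), translates into a pointwise $L^2$ bound of the same order on the barycentric projections. Third, an empirical--population sampling term of size $\varepsilon_n$: since the error is averaged over $Z_1,\ldots,Z_n$ rather than $Z\sim F$, one must compare $\bar{H}_n=\frac{1}{n}\sum_i\delta_{\delta_{\gencstr}(Z_i)}$ to $H$, which is precisely \eqref{eqn:EW-H}. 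Combining these three pieces yields the stated rate $\alpha_n^{\sfrac{1}{2}}\vee\delta_n\vee\varepsilon_n$.

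The main obstacle is handling the data-dependent nature of $\hat{H}_n$, whose definition intertwines with the OT optimization via the moment-matching constraint involving $\hat{G}_n$, rather than being specified a priori as $\hat{G}_n$ is in Theorem~\ref{thm:emp-distrcstr}. To establish the qualitative consistency $W_2^2(\hat{H}_n,H)\to 0$ that underlies the quantitative rate \eqref{eqn:CM-H}, I would invoke: the interior condition \eqref{eqn:int} as a Slater-type constraint qualification ensuring that small perturbations of the moment vector $\bigl(\int\psi_\ell\,dG\bigr)_\ell$ remain feasible for the OT problem; the compact support \eqref{eqn:CS} to obtain $W_2$-tightness of the sequence $\{\hat{H}_n\}$; and the moment-consistency \eqref{eqn:CC} to pin down the moment vector of any weak subsequential limit. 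The uniqueness of $H$ provided by Theorem~\ref{thm:gen-cstr-num} would then promote subsequential convergence to convergence of the whole sequence, with \eqref{eqn:CM-H} supplying the explicit rate $\delta_n$ needed to close the argument.
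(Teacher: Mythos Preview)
Your high-level plan is correct and mirrors the paper's approach: reduce to the proof of Theorem~\ref{thm:emp-distrcstr} with the projected latent variable distribution $H$ playing the role of $G$, once the qualitative consistency $W_2(\hat{H}_n,H)\to 0$ is established (the rate $\delta_n$ then being assumed via \eqref{eqn:CM-H}). Your decomposition into cost-perturbation, target-marginal, and sampling contributions is exactly right, and the use of Cauchy--Schwarz together with the compact support \eqref{eqn:CS} to convert the $O_{\Pbb}(\alpha_n)$ EB error into an $O_{\Pbb}(\alpha_n^{\sfrac{1}{2}})$ contribution matches the paper.

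There is, however, a gap in your final paragraph. You write that tightness of $\{\hat H_n\}$, moment-consistency of any subsequential limit, and ``uniqueness of $H$ provided by Theorem~\ref{thm:gen-cstr-num}'' together identify the limit as $H$. But Theorem~\ref{thm:gen-cstr-num} gives uniqueness of $H$ among \emph{optimizers} of the population problem \eqref{eqn:gen-str-Kant}, not among all measures satisfying the moment constraints---of which there are many. To close the argument you must show that any subsequential limit of $\hat{\pi}_n$ is \emph{optimal} for \eqref{eqn:gen-str-Kant}, not merely feasible. The paper handles this via a separate construction (Lemma~\ref{lem:prelim-cvg-gen}): for every $\pi\in\Gamma_K(F;G,\Psi)$ it builds competitor couplings $\pi_n\in\Gamma_K(\bar{F}_n;\hat{G}_n,\Psi)$ with $W_2(\pi_n,\pi)\to 0$, by tilting the second marginal within an exponential family generated by $\psi_1,\dots,\psi_k$ so as to hit the perturbed moment vector $\bigl(\int\psi_\ell\,d\hat G_n\bigr)_\ell$ exactly; the interior condition \eqref{eqn:int} is precisely what makes this tilt well-defined and continuous. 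This approximation, combined with lower semicontinuity of the cost and a cost-perturbation estimate, forces the subsequential limit to achieve the optimal value and hence equal $\pi_{\gencstr}$. Your mention of \eqref{eqn:int} as a ``Slater-type constraint qualification'' is the right instinct, but the optimality step via approximating feasible couplings must be made explicit---feasibility of the limit alone does not suffice.
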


    The conclusion of Theorem~\ref{thm:gen-cstr-emp} for the denoiser $\hat{\delta}_{\gencstr}$ given in Algorithm~\ref{alg:gen-cstr} is that the rate of convergence of general-constrained Bayes denoising is determined by whichever of unconstrained Bayes denoising, convergence of the projected latent variable distributions, and convergence of the empirical Wasserstein distance is slowest.
    The rate of convergence of the projected latent variable distributions, described by $\delta_n$, exactly quantifies the statistical tradeoff of interpolating between problems \eqref{eqn:pop-varcstr} and \eqref{eqn:pop-distrcstr}. In the cases we consider in Example~\ref{ex:moment-cstr-continued}, we may anticipate $\delta_n$ to decay at nearly parametric rates when $k=0$ (no constraints), or when $k=2$ (variance constraints) and the empirical Bayes denoiser may be estimated at fast rates (as discussed in Example~\ref{ex:g-model}), or inversely polylogarithmic in $n$ when $k\to \infty$ (and we effectively need to deconvolve the latent variable distribution).

    \section{Details for Heterogeneous (Empirical) Constrained Bayes}\label{sec:heterosked}

    \noindent The unconstrained denoising problem in model~\eqref{eq:Mdl-hetrogen} is as follows:    \begin{equation}\label{eqn:heterosked}
	\begin{cases}
		\textnormal{minimize} &\Ebb\left[\|\delta(Z;\Xi)-\Theta\|^2\right] \\
		\textnormal{over}& \delta:\Rbb^{d}\times \Rbb^p\to\Rbb^{m}.
        \end{cases}
    \end{equation}
    Its solution is the conditional mean of $\Theta$ given $(Z,\Xi)$, which we denote by
    \begin{equation*}
        \delta_{\B}(z;\xi) = \Ebb[\Theta\,|Z=z,\Xi=\xi].
    \end{equation*}
    In the next subsections, we consider adding constraints to problem~\eqref{eqn:heterosked}. While we will not prove rigorous results regarding asymptotic theory, we extend the strategies of the previous sections to accommodate the heterogeneity.
    An important feature of heterogeneous constrained denoising is that we now must distinguish between two kinds of constraints:
    the so-called \textit{marginal constraints} on the distribution of $\delta_{\B}(Z;\Xi)$, and the so-called \textit{conditional constraints} on the distribution of $\delta_{\B}(Z;\Xi)$ given $\{\Xi=\xi\}$.

    Before explaining how to extend our constrained denoising methods to model~\eqref{eq:Mdl-hetrogen}, we first discuss possible applications of the model in more detail.

    \subsection{Applications of the heterogeneous setting}
    \label{subsec:heterogeneous_applications}
    Model~\eqref{eq:Mdl-hetrogen}, allows for heterogeneity in both the prior $(G_{\xi})$ and the likelihood $P_{\theta,\xi}$. 
    Two important examples for heterogeneity in the likelihood are the following:
    \begin{itemize}
        \item (Heteroskedastic Gaussian.) For $m=d$, assume that the parameter $\theta$ is a vector in $\Rbb^m$ and the parameter $\xi$ is a positive semi-definite $m\times m$ matrix, which we choose to denote by $\Sigma$. Let $P_{\theta,\Sigma} = \Ncal(\theta,\Sigma)$. As one concrete application, suppose that conditional on $\Theta_i$, $Z_i$ is a sample average of $K_i$ i.i.d. observations $Z_{ij}$ with $Z_{ij} \sim \mathcal{N}(\Theta_i, \sigma^2)$. Then $Z_i \sim \mathcal{N}(\Theta_i, \sigma_i^2)$  with $\sigma_i^2=\sigma^2/K_i$.
        The heteroskedastic Gaussian setting encompasses this situation by taking $m=d=1$ and $\Xi_i = \sigma^2/K_i$.
        \item (Poisson with heterogeneous exposure.) For $m=d=1$, the parameter $\theta$ is a non-negative real number and the parameter $\xi$ is a non-negative real number, which we choose to denote by $\lambda$. Then let $P_{\theta,\lambda} = \textnormal{Poi}(\lambda\theta)$. We call $\lambda$ the exposure~\citep{norberg1979credibility}. The intuition here is that we are interested in the mean of the $i$-th observation normalized by the exposure time, that is, $\Theta_i = \mathbb E[Z_i/\lambda_i]$; see Subsection~\ref{subsec:baseball} for an application to baseball in which $\lambda_i$ is the number of games played by the $i$-th player. 
    \end{itemize}

    Moreover, in several applications, the distribution of $\Theta$ may depend on $\Xi$. Such dependence implies that $\Theta_1,\ldots,\Theta_n$ are no longer exchangeable. Examples where this situation occurs are as follows.
    \begin{itemize}
        \item (Heteroskedastic Gaussian, continued.) In our application to astronomy (Subsection~\ref{subsec:astro}), we assume that $\Theta$ is independent of $\Xi \equiv \Sigma$, and as mentioned therein, this is a common assumption in the literature. However, several authors have noted that this assumption can be violated~\citep{stephens2017false, gu2017unobserved, weinstein2018grouplinear,ChenEB, ignatiadis2025empirical}.  If we model $G_{\xi}$ as a function of $\xi$ (using, e.g., the G-modeling approach of~\citet{ChenEB}), then we can address this issue (and the general model in~\eqref{eq:Mdl-hetrogen} accommodates this dependence).
        \item (Side-information.) In several common applications, we observe  covariates (features) $X_i$ in addition to the noisy observations $Z_i$ for each $i$. The covariates $X_i$ are informative about $\Theta_i$, though commonly they do not directly impact the likelihood, that is, $Z_i$ is independent of $X_i$ conditionally on $\Theta_i$. The setting with side-information falls under our framework by letting $\Xi_i=X_i$ and $P_{\theta,\xi}=P_{\theta}$. Examples of empirical Bayes modeling with side-information include~\citet{fayiii1979estimates, cohen2013empirical, ignatiadis2019covariatepowered, banerjee2023nonparametric}.
        \item (Heteroskedasticity and side-information.) It is possible to have both side-information and heterogeneity in the likelihood at the same time, such as described in~\citet{ChenEB, kwon2023optimal, ghosh2025steins}. We can model this situation, e.g., in the heteroskedastic Gaussian setting, by letting $\Xi_i = (\Sigma_i, X_i)$. Herein, $P_{\theta,\xi}$ only depends on the likelihood covariance (not on the side-information), while the prior can depend on both $\Sigma_i$ and $X_i$. 
    \end{itemize}

    \subsection{Variance-Constrained Denoising}\label{subsec:het-varcstr}

    We first seek to extend the variance-constrained denoising problem~\eqref{eqn:pop-varcstr} to the heterogeneous setting and consider two approaches; imposing a marginal variance constraint and imposing a conditional variance constraint. 

    \subsubsection{Marginal Variance-Constrained Denoising}\label{subsec:marg-var-cstr}
        
    For the first approach, we require 
    that the marginal mean and variance of the denoised data match the marginal mean and variance of the latent variable.
    That is, we solve problem~\eqref{eqn:heterosked} subject to the constraints
    \begin{equation}\label{eqn:marg-varcstr}
        \Ebb[\delta(Z;\Xi)] = \Ebb[\Theta] \qquad \textnormal{ and }\qquad\Cov(\delta(Z;\Xi)) = \Cov(\Theta).
    \end{equation}
    This approach was also followed by~\citet{Ghosh} in a small area application with heterogeneous sample sizes across areas.
    By noticing that this is closely related to the setting of Section~\ref{subsec:pop-varcstr} but with observation $\tilde{Z} := (Z,\Xi)$, it can be shown that the oracle denoiser is
    \begin{equation*}
        \delta_{\mathcal{M}\varcstr}(z;\xi):= \transport_{\Cov(\delta_{\B}(Z;\Xi))}^{\Cov(\Theta)}(\delta_{\B}(z;\xi)-\mu) + \mu
    \end{equation*}
    where $\mu = \Ebb[\Theta]$. 
    In order to implement this in practice, we can apply Algorithm~\ref{alg:Gaussian-mod-NP-prior} but with the requisite terms $\Ebb[\Theta]$, $\Cov(\Theta)$, and $\Cov(\delta_{\B}(Z;\Xi))$ estimated through modified approaches.
    It is clear that the third term can be estimated via the sample covariance of $\hat{\delta}_{\B}(Z_1;\Xi_1),\ldots, \hat{\delta}_{\B}(Z_n;\Xi_n)$ where $\hat{\delta}_{\B}(\,\cdot\,;\,\cdot\,)$ is any EB denoiser which accommodates the heterogeneity.
    When $\hat{\delta}_{\B}(\,\cdot\,;\,\cdot\,)$ arises from $G$-modeling, then one has an estimate $\hat G_n$ of $G$ and one can estimate $\Ebb[\Theta]$ and $\Cov(\Theta)$ via
    \begin{equation}\label{eqn:moment-NPMLE}
        \hat \mu:=\int_{\Rbb^m}\theta\diff \hat G_n(\theta)\qquad \text{ and}\qquad \,      \hat A:=\int_{\Rbb^m}(\theta-\hat\mu)(\theta-\hat\mu)^{\top}\diff \hat G_n(\theta).
    \end{equation}

    
    \subsubsection{Conditional Variance-Constrained Denoising}
    \label{subsec:conditional_variance}

    Another possible constraint is that the conditional mean and covariance of $\delta(Z;\Xi)$ match the conditional mean and covariance of $\Theta$ given $\Xi$.
    In other words, we may solve problem~\eqref{eqn:heterosked} subject to the constraints
    \begin{equation}\label{eqn:cond-varcstr}
        \Ebb[\delta(Z;\Xi)\,|\,\Xi]= \Ebb[\Theta\,|\,\Xi]\qquad\textnormal{ and } \qquad\Cov(\delta(Z;\Xi)\,|\,\Xi) = \Cov(\Theta\,|\,\Xi)\qquad \textnormal{ almost surely.}
    \end{equation}
    In this setting, it can be shown that the oracle denoiser is
    \begin{equation*}
        \delta_{\mathcal{C}\varcstr}(z,\xi) = \transport_{\Cov(\delta_{\B}(Z;\Xi)\,|\,\Xi=\xi)}^{\Cov(\Theta\,|\,\Xi=\xi)}(\delta_{\B}(z;\xi)-\mu) + \mu,
    \end{equation*}
    where $\mu=\Ebb[\Theta]$.
    Our EB approach will target the denoiser $\delta_{\mathcal{C}\varcstr}$ by plugging in suitable estimates for $\Ebb[\Theta], \Cov(\Theta\,|\,\Xi=\xi)$, and $\Cov(\delta_{\B}(Z,\Xi)\,|\,\Xi=\xi)$.
    If we additionally assume $\Theta\Perp \Xi$ (as we will assume in our later applications), then we may estimate $\Ebb[\Theta]$ and $\Cov(\Theta) \equiv \Cov(\Theta\,|\,\Xi)$ in the same way as above, via the corresponding moments of the NPMLE given in~\eqref{eqn:moment-NPMLE}.   
    However, $\Cov(\delta_{\B}(Z,\Xi)\,|\,\Xi=\xi)$ must be handled more carefully.
    If we are in the setting of $G$-modeling and the measure $\hat G_n$ is finitely supported (which is true for the NPMLE), then we may approximate
    \begin{equation*}
        \Cov(\delta_{\B}(Z;\Xi)\,|\,\Xi=\xi) =\int_{\Rbb^m}\int_{\Rbb^d}(\delta_{\B}(z;\xi)-\mu)(\delta_{\B}(z;\xi)-\mu)^{\top}P_{\theta,\xi}(\diff z)\diff G(\theta).
    \end{equation*}
   via
    \begin{equation}\label{eqn:CVC-numerical-integral}
        \sum_{j=1}^{r}w_j\int_{\Rbb^d}(\hat\delta_{\B}(z,\xi)-\mu)(\hat\delta_{\B}(z,\xi)-\mu)^{\top}P_{\theta_j,\xi}(\diff z).
    \end{equation}
    whenever
     $\hat G_n = \sum_{j=1}^{r}w_j\delta_{\theta_j}$.
    Lastly, the finitely-many integrals appearing on the right side may be estimated in many different ways, since the likelihood $\{P_{\theta,\xi}\}_{\theta}$ is known, e.g., by numerical integration, or by Monte Carlo.

    \subsubsection{Numerical Illustration}\label{subsec:marg-v-cond-illustration}

    In order to highlight the difference between the marginal variance-constrained and conditional variance-constrained denoisers $\hat{\delta}_{\mathcal{M}\varcstr}$ and $\hat{\delta}_{\mathcal{C}\varcstr}$, we consider a simulation in the univariate heteroskedastic Gaussian setting.

    More precisely, we suppose that $\sigma^2_1,\ldots, \sigma^2_n$ are i.i.d. which are equal to $0.5$ or $8$ with equal probability, and we let $\Theta_1,\ldots,\Theta_n$ be i.i.d from a standard Gaussian distribution $\mathcal{N}(0,1)$.
    Conditional on these values, we let $Z_1,\ldots, Z_n$ be independent with $Z_i$ possessing distribution $\mathcal{N}(\theta_i,\sigma_i^2)$ for all $1\le i \le n$ with $n=1,500$.
    An observation $Z_i$ is called a \textit{low-variance} observation if $\sigma^2_i = 0.5$ and it is called a \textit{high-variance} observation if $\sigma^2_i = 8$.
    
    In Figure~\ref{fig:heterosked-EB} we show the result of applying three denoisers to this data set.
    The marginal variance-constrained denoiser (third column) leads to the incorrect variance when applied only to the low-variance observations (second row) or the high-variance observations (third row).
    Contrarily, the conditional variance-constrained denoiser (fourth column) obtains the correct variance for both subsets of the data.

    \subsection{Marginal Distribution-Constrained Denoising}\label{subsec:marg-distr-cstr}

    Next we seek to extend the distribution-constrained problem~\eqref{eqn:pop-distrcstr} to the heterogeneous setting.
    To state the problem precisely, we aim to solve problem~\eqref{eqn:heterosked} subject to the constraint that the marginal distribution of the denoised data matches the marginal distribution of the latent variable, meaning
    \begin{equation}\label{eqn:marg-varcstr-2}
        \delta(Z;\Xi) \overset{\mathcal{D}}{=} \Theta.
    \end{equation}
    As before, we observe that this is similar to Section~\ref{subsec:pop-distrcstr} with observation $\tilde{Z} := (Z,\Xi)$, hence the oracle denoiser is the composition of the Bayes denoiser $\delta_{\B}$ with the OT map from the distribution of $\delta_{\B}(Z,\Xi)$ to the distribution $G$ of $\Theta$.
    In order to modify Algorithm~\ref{alg:distr-cstr} to implement this, we need to replace $Z_i$ with $(Z_i,\Xi_i)$ everywhere.
    For example, we replace the definition of the cost function (line 7) by
    \begin{equation*}
        \hat c_{n}((Z_i,\Xi_i),\eta) \leftarrow \|\hat{\delta}_{\B}(Z_i,\Xi_i)-\eta\|^2 \quad \textnormal{ for all } 1\le i \le n\quad \textnormal{ and } \eta\in \Rbb^m
    \end{equation*}
    and we replace barycentric projection (line 12) by
    \begin{equation*}
        \hat{\delta}_{\mathcal{M}\distrcstr}(Z_i,\Xi_i) \leftarrow \int_{\Rbb^m}\eta\,\diff \hat{\pi}_{\mathcal{M}\distrcstr}(\eta\,|\,(Z_i,\Xi_i))\quad \textnormal{ for all } 1\le i \le n,
    \end{equation*}
    where $\hat{\pi}_{\mathcal{M}\distrcstr}$ is the solution to the resulting OT problem, which is still a linear program.

    While we focus only on the case of marginal distribution constraints, we remark that an extension to conditional distributional constraints is likely possible, but with further statistical and computational complications.

    \begin{figure}
        \centering
        \includegraphics[width=1.0\linewidth]{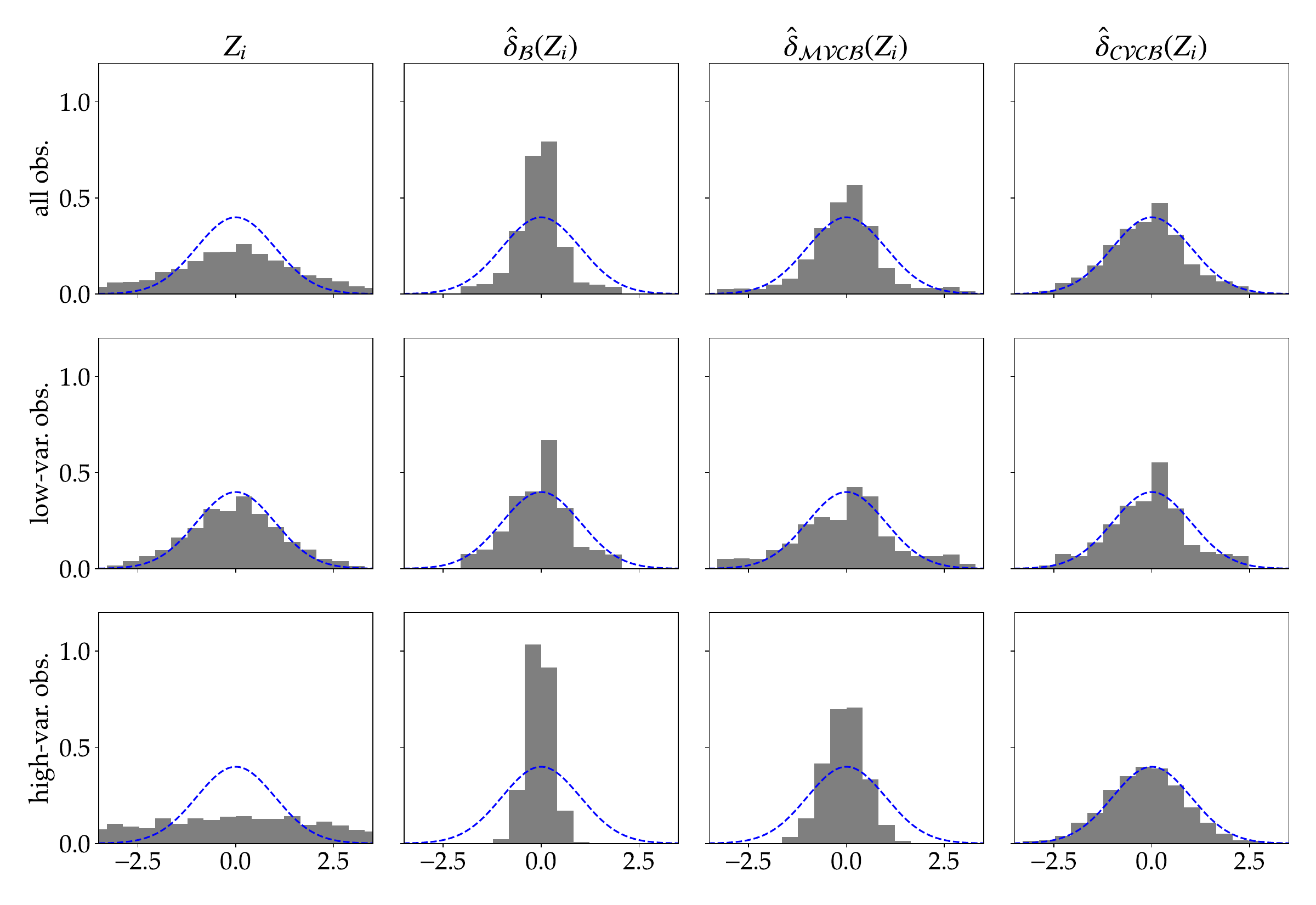}
        \caption{Denoising in the heterogeneous setting of Section~\ref{sec:heterosked} to a toy data set.
        By rows, we show histograms of all of the observations (first row), the subset of low-variance observations (second row), and the subset of high-variance observations (third row).
        By columns, we show the raw data (first column), the EB denoised data set (second column), the marginal variance-constrained EB denoised data set (third column), and the conditional variance-constrained EB denoised data set (fourth column).
        In each plot we show the density of the latent variable distribution (blue) for convenience.}
        \label{fig:heterosked-EB}
    \end{figure}

    \subsection{Marginal General-Constrained Denoising}\label{subsec:marg-gen}

    Lastly, we consider general constraints on the marginal distribution of $\delta(Z,\Xi)$, meaning we require
    \begin{equation}
        \Ebb[\psi_{\ell}(\delta(Z,\Xi))] = \Ebb[\psi_{\ell}(\Theta)] \quad \textnormal{ for all }1 \le \ell \le k,
    \end{equation}
    where $\psi_1,\ldots, \psi_k$ are some fixed functions.
    As in Subsection~\ref{subsec:pop-gencstr}, the oracle denoiser in this problem is the composition of the Bayes denoiser $\delta_{\B}$ with an OT map from the distribution of $\delta_{\B}(Z,\Xi)$ to some distribution $H$ (whose integrals with respect to $\Psi$ agree with those of $G$) but which is not a priori known.
    Nonetheless, this denoiser can be approximated via applying the same modifications of Subsection~\ref{subsec:marg-distr-cstr} to Algorithm~\ref{alg:gen-cstr}.
    



\section{Computational Considerations}\label{app:comp}

    In this section, we briefly describe some computational considerations that arise in the implementation of various algorithms described in the main body of the paper. Our procedures also inherit some computational complexity from the unconstrained EB denoisers which they take as input. 
    Standard OT problems are solved with the Python Optimal Transport (POT) package \citep{POT}, and other convex optimization problems are solved via CVXPY~\citep{diamond2016cvxpy} using the Mosek~
    \citep{mosek} interior point convex programming solver.

    \subsection{$G$-modeling via nonparametric maximum likelihood}
    \label{subsec:npmle_computation}

    Our $G$-modeling methods with a discrete NPMLE are based on the software implementation in the Python package \texttt{npeb}\footnote{Available at: \url{https://github.com/jake-soloff/npeb}}. This implementation approximates the NPMLE through a two-step procedure:
    \begin{enumerate}
    \item Computing the NPMLE over a fixed discrete set of atoms using CVXPY/Mosek following the framework of~\citet{koenker2014convex} (cf.~\citet{Soloff} for a theoretical study of discretization considerations).
    \item Running expectation-maximization to improve atoms and weights jointly.
    \end{enumerate}
    In the Gaussian models (Figure~\ref{fig:simulation}, Figure~\ref{fig:heterosked-EB}, and Figure~\ref{fig:astro_2}) we inialize the NPMLE atoms in Step 1 to be the data points themselves, following the so-called \textit{exemplar$+$} method of \cite{Soloff}.
    In the Poisson model of Figure~\ref{fig:baseball_bivariate} we initialize the NPMLE atoms of step 1 to be a $50\times 50$ equispaced grid for the minimal bounding box of the data.
    
    For our $G$-modeling (e.g., in Section~\ref{subsec:astro}) with a smooth prior $G$ that is a Gaussian mixture model with a lower-bounded covariance matrix, we employ the strategy of~\citet[Section 2]{magder1996smooth}. This approach reduces the computation of the NPMLE over the mixture class to that of the discrete NPMLE, allowing us to again apply the \texttt{npeb} package.  Here we also choose the number of grid points and their location for Step 1.\ of \texttt{npeb} to be  a $150\times 150$ equispaced grid for the minimal bounding box of the data.

    \subsection{Variance-Constrained Denoising}
    \label{subsec:variance_constrained_computation}
    First we consider problem~\eqref{eqn:pop-varcstr} on denoising subject to variance constraints.
    Because of the explicit formulas afforded by the geometry of the Bures-Wasserstein space, we can avoid numerically solving an optimization problem in this case.
    Indeed, each step of Algorithm~\ref{alg:Gaussian-mod-NP-prior} has an explicit formulation (except for line 4, which can be implemented by existing EB approaches).

    Second, we consider the two variance-constrained denoising problems in the heterogeneous setting of Section~\ref{sec:heterosked}.
    The marginal variance-constrained denoiser $\hat{\delta}_{\mathcal{M}\varcstr}$ does not require any further computational burden compared to $\hat{\delta}_{\varcstr}$, at least in the case where $\hat{\delta}_{\B}$ arises from $G$-modeling.
    This is the case in all of our applications except for Figure~\ref{fig:heterosked-EB}.
    The conditional variance-constrained denoiser $\hat{\delta}_{\mathcal{C}\varcstr}$ is more complicated since it requires the numerical integration of the terms in \eqref{eqn:CVC-numerical-integral} for each evaluation of $\hat{\delta}_{\mathcal{C}\varcstr}(Z_i,\Xi_i)$; we compute these integrals via Monte Carlo with 100 trials each.

    \subsection{Distribution- and General-Constrained Denoising}
    \label{subsec:distribution_computation}
    Our procedures Algorithm~\ref{alg:distr-cstr} and Algorithm~\ref{alg:gen-cstr} for approximating the solution to the distribution-constrained problem~\eqref{eqn:pop-distrcstr} and the general-constrained problem~\eqref{eqn:pop-gencstr} both require numerically solving a linear program (lines 7 to 10), so we briefly discuss some details of this.
    For problem~\eqref{eqn:pop-distrcstr} we use the solver from POT which is specifically designed to handle OT problems, and for problem~\eqref{eqn:pop-gencstr} we use Mosek's general-purpose solvers for linear programming in CVXPY.
    In either case, some discretization of the $\eta\in\Rbb^m$ coordinate is necessary.
    We always choose the discretization to be an equispaced grid on some axis-aligned bounding box, and the main difficulty is choosing an appropriate discretization.

    In the cases where $m=1$ (i.e., Figure~\ref{fig:heterosked-EB}), the complexity of the linear program problem scales linearly in the number of discretization points $k$.
    So, it is possible to take $k$ quite large without computational burden; for the sake of simplicity.
    In these cases, we fixed the discretization of $\eta$ to consist of $k=200$ equispaced points in the minimal bounding interval of the observations.
    
    In the cases where $m\ge 2$ (i.e., Figure~\ref{fig:simulation}, Figure~\ref{fig:baseball_bivariate}, and Figure~\ref{fig:astro_2}) the computation is more burdensome.
    We take $k=300$ in problem \eqref{eqn:pop-distrcstr} since the POT package is highly optimized for this task, and we take $k=200$ for problem~\eqref{eqn:pop-gencstr} although computation is a bit slow.
        
    Additionally, we remark that naively discretizing the linear program in Algorithm~\ref{alg:gen-cstr} may lead to infeasibility; this is because $\hat G_n$ witnesses feasibility for arbitrary probability measures but not necessarily for probability measures supported on the prescribed grid points.
    To avoid this, we augment the discretization by adding the NPMLE atoms in addition to the grid points.

        \section{Proofs from Section~\ref{sec:pop}}\label{app:oracle}

        This section contains the proofs of the main results in Section~\ref{sec:pop} on the population-level denoising problems.

        \begin{proof}[Proof of Theorem~\ref{thm:VCB-solution}]
		We begin by performing some manipulation on the objective of \eqref{eqn:pop-varcstr}. Note that by assumption~\eqref{eqn:2M} that all terms below are well-defined.
	For feasible $\delta$, let us expand
        		\begin{align*}
			& \Ebb\left[\|\delta(Z)-\Theta\|^2\right] = \Ebb\left[\left\|\delta_\B(Z)-\Theta+\delta(Z)-\delta_\B(Z)\right\|^2\right] \\
			&= \Ebb\Big[\left\|\delta_\B(Z)-\Theta\right\|^2\Big]+\Ebb\Big[ \left\|\delta(Z)-\delta_\B(Z)\right\|^2\Big] + 2\Ebb\Big[\left(\delta_\B(Z)-\Theta\right)^{\top}(\delta(Z)-\delta_\B(Z))\Big].
		\end{align*}
		Notice that the first term on the right side is exactly the Bayes risk and that it is irreducible since it does not depend on $\delta$.
		The second term on the right side is exactly equal to
		\begin{equation*}
			\Ebb\left[\left\|\delta(Z)-\delta_\B(Z)\right\|^2\right] =\Var\left(\delta(Z)-\delta_\B(Z)\right),
		\end{equation*}
		where $$\Var(X) :=\trace(\Cov(X)) = \sum_{i=1}^{m}\Var(X_i)$$ denotes the variance of a random vector $X = (X_1,\ldots, X_m)$, 
		since $\delta$ being feasible implies $\Ebb[\delta(Z)-\delta_\B(Z)] = 0$.
		Finally, the third term on the right vanishes, since we can use the law of total expectation to compute:
		\begin{align*}
			\Ebb\left[\left(\delta_\B(Z)-\Theta\right)^{\top}(\delta(Z)-\delta_\B(Z))\right] & = \Ebb\left[\left(\delta_\B(Z)-\Ebb[\Theta\,|\,Z]\right)^{\top}(\delta(Z)-\delta_\B(Z))\right] \\
			&= \Ebb\left[0^{\top}(\delta(Z)-\delta_\B(Z))\right] = 0.
		\end{align*}
		In particular, we have shown
		\begin{equation*}
			\Ebb\left[\|\delta(Z)-\Theta\|^2\right] = R_{\B} +\Var\left(\delta(Z)-\delta_\B(Z)\right),
		\end{equation*}
		which we regard as a decomposition of the objective of \eqref{eqn:pop-varcstr} into the Bayes risk and a certain excess risk that we must minimize.
		
		The preceding paragraph shows that the solution set to problem \eqref{eqn:pop-varcstr} is equal to the solution set of the optimization problem
		\begin{equation}\label{eqn:solution-1}
			\begin{cases}
				\textnormal{minimize} &\Var(\delta(Z)-\delta_\B(Z)) \\
				\textnormal{over}& \delta:\Rbb^{d}\to\Rbb^{m} \\
				\textnormal{with}& \Ebb[\delta(Z)] = \Ebb[\Theta] \\
				\textnormal{and}& \Cov(\delta(Z)) = \Cov(\Theta).
			\end{cases}
		\end{equation}
		Now we apply a ``Gaussianization trick'' whereby we show that problem \eqref{eqn:solution-1} has the same optimal value as the following problem:
		\begin{equation}\label{eqn:solution-2}
			\begin{cases}
				\textnormal{minimize} &\Var(D-E) \\
				\textnormal{over}& \textnormal{jointly Gaussian } (D,E) \\
				\textnormal{with}& \textnormal{marginal distribution } D\sim \Ncal(\Ebb[\Theta],\Cov(\Theta)) \\
				\textnormal{and}&  \textnormal{marginal distribution }E\sim \Ncal(\Ebb[\Theta],\Cov(\delta_\B(Z))) \\
			\end{cases}
		\end{equation}
		To see that \eqref{eqn:solution-2} $\le$ \eqref{eqn:solution-1}, it suffices to find, for each $\delta$ that is feasible for \eqref{eqn:solution-1}, some $(D,E)$ that is feasible for \eqref{eqn:solution-2} such that $\Var(D-E) = \Var(\delta(Z) - \delta_\B(Z))$.
		Of course, we can let
		\begin{equation}\label{eq:Gauss-Coupling}
			\begin{pmatrix}
				D \\ E
			\end{pmatrix} \sim \Ncal\left(\begin{pmatrix}
			\Ebb[\Theta] \\ \Ebb[\Theta]
			\end{pmatrix}, \begin{pmatrix}
            \Cov(\Theta) & \Cov(\delta(Z),\delta_\B(Z)) \\
            \Cov(\delta_\B(Z),\delta(Z)) & \Cov(\delta_\B(Z))
            \end{pmatrix}\right)
		\end{equation}
		where $\Cov(X,Y):= \Ebb[XY^{\top}]-\Ebb[X]\Ebb[Y]^{\top}$ denotes the cross-covariance between two random vectors $X$ and $Y$ of the same dimension.
		In this case $(D,E)$ and $(\delta(Z),\delta_\B(Z))$ have the same second-order structure, so we have $\Var(D-E) = \Var(\delta(Z) -\delta_\B(Z))$.

To see that \eqref{eqn:solution-1} $\le$ \eqref{eqn:solution-2}, it suffices to show that the optimal value of \eqref{eqn:solution-2} is achieved by a coupling which corresponds to some feasible $\delta$ for \eqref{eqn:solution-1} (via~\eqref{eq:Gauss-Coupling}). When combined with the above, this will show that such a $\delta$ is optimal for \eqref{eqn:solution-1}.

Note that~\eqref{eqn:solution-2} is indeed the (squared) 2-Wasserstein distance between the two multivariate normal distributions $\Ncal(\Ebb[\Theta], \Cov(\delta_\B(Z)))$ and $\Ncal(\Ebb[\Theta], \Cov(\Theta))$.
Indeed, recall \citep{DowsonLandau, OlkinPukelsheim} that the optimal value of \eqref{eqn:solution-2} is exactly given by the Bures-Wasserstein distance
		\begin{equation*}
			\BW^2(\Cov(\Theta),\Cov(\delta_\B(Z))),
		\end{equation*}
        where $\BW(\Sigma,\Sigma') := W_2(\Ncal(0,\Sigma),\Ncal(0,\Sigma'))$,
		so, in particular we can use \eqref{eqn:pos-def} to define
		\begin{equation*}
			\delta_{\varcstr}(z) := \transport_{\Cov(\delta_\B(Z))}^{\Cov(\Theta)}(\delta_\B(z)-\Ebb[\Theta]) + \Ebb[\Theta].
		\end{equation*}
		Now use the fact that for two symmetric positive definite matrices $A,B$, we have$\trace(ABA) = \|AB^{1/2}\|_2^2$ (here $\|\cdot\|_2$ denotes the Frobenius norm), so that we have
		\begin{align*}
			\Var(\delta_{\varcstr}(Z)-\delta_\B(Z)) &= \Var\left(\left(\transport_{\Cov(\Ebb[\Theta\,|\,Z])}^{\Cov(\Theta)}-I\right)\delta_\B(Z)\right) \\
                &= \trace\left(\left(\transport_{\Cov(\delta_\B(Z))}^{\Cov(\Theta)}-I\right)\Cov(\delta_\B(Z))\left(\transport_{\Cov(\delta_\B(Z))}^{\Cov(\Theta)}-I\right)\right) \\
			&=  \left\|\left(\transport_{\Cov(\delta_\B(Z))}^{\Cov(\Theta)}-I\right)\Big(\Cov(\delta_\B(Z))\Big)^{\sfrac{1}{2}}\right\|_2^2 \\
			&= \BW^2 \Big(\Cov(\Theta),\Cov(\delta_\B(Z)) \Big).
		\end{align*}
		This completes the proof.
            For the last claim, note that assumptions \eqref{eqn:Bayes-dens} and \eqref{eqn:Z-dens} allow us to apply \citet[Theorem~2.4]{GarciaTrillosSen}.
	\end{proof}

        \begin{proof}[Proof of Corollary~\ref{cor:risks}]
		We of course have $R_{\B}\le R_{\varcstr}$ by nature of imposing constraints, so it suffices to prove $R_{\varcstr}\le 2R_{\B}$ which is equivalent to $R_{\varcstr}-R_{\B}\le R_{\B}$.
        Since $\Ebb[\delta_{\B}(Z)] = \Ebb[\Theta] =:\mu$, we can use the tower property to compute:
        \begin{align*}
            R_{\B} &= \Ebb\left[\left\|\delta_{\B}(Z)-\Theta\right\|^2\right] \\
            &= \Ebb\left[\left\|\Theta - \mu\right\|^2\right] - \Ebb\left[\left\|\delta_{\B}(Z)-\mu\right\|^2\right] \\
            &= \trace(\Cov(\Theta)) - \trace(\Cov(\delta_{\B}(Z))) \\
            &= \trace(\Cov(\Theta)-\Cov(\delta_{\B}(Z))).
        \end{align*}
        Now recall $\Cov(\delta_{\B}(Z)) \preceq\Cov(\Theta)$ from \eqref{eqn:M-prec-A}, so we have shown
		\begin{equation*}
			R_{\B}= \left\|\Cov(\Theta) - \Cov(\delta_\B(Z))\right\|_1,
		\end{equation*}
        where $\|\cdot\|_1$ denotes the trace norm (also called the nuclear norm or the Schatten 1-norm), defined as $\|A\|_1:=\trace(|A|)$ for a real symmetric matrix $A$, where $|A|$ is its matrix absolute value.
        Now note that Theorem~\ref{thm:VCB-solution} implies
		\begin{equation*}
			R_{\varcstr}-R_{\B} =\BW^2\left(\Cov(\Theta),\Cov(\delta_\B(Z))\right).
		\end{equation*}
		Thus, it suffices to show that we have
		\begin{equation*}
			\BW^2\left(\Cov(\Theta),\Cov(\delta_\B(Z))\right) \le \left\|\Cov(\Theta) - \Cov(\delta_\B(Z))\right\|_1.
		\end{equation*}
		To show this, we let $\|\cdot\|_2$ denotes the Frobenius norm (also called the Schatten 2-norm), defined as $\|A\|_2:=\sqrt{\trace(A^2)}$ for a real symmetric matrix $A$.
        Then we note that we in fact have $\BW^2(A,B) \le  \|A^{\sfrac{1}{2}}-B^{\sfrac{1}{2}}\|_2^2 \le \|A - B\|_1$ for all $A,B\in\covspace(m)$ as a consequence of the Procrustes representation of the Bures-Wasserstein metric \cite[Lemma~1]{PanaretosSantoroBW} and the Powers-St{\o}rmer inequality \cite[Lemma~4.1]{PowersStormer}, respectively.
		(See also \cite[Lemma~1]{PanaretosSantoroBW}.)
	\end{proof}

        \begin{proof}[Proof of Theorem~\ref{thm:gen-cstr-num}]
        First let us show that the integrand of \eqref{eqn:gen-str-Kant}  is well-defined for every $\pi\in\Gamma(F;G,\Psi)$.
        To see this, write $H$ for the second marginal of $\pi$,
        note that
        \begin{equation*}
            c_{G}(z,\eta) = \|\eta - \delta_\B(z)\|^2 \le 2\|\eta\|^2 + 2\|\delta_\B(z)\|^2.
        \end{equation*}
        Then, use the contraction property of conditional expectation to bound
        \begin{align*}
            \int_{\Rbb^d\times\Rbb^m}c_{G}(z,\eta)\diff\pi(z,\eta) &\le 2\int_{\Rbb^m}\|\eta\|^2\diff H(\eta) + 2\int_{\Rbb^d}\|\delta_\B(z)\|^2\diff F(z) \\
            &\le 2\int_{\Rbb^m}\|\eta\|^2\diff H(\eta)+ 2\int_{\Rbb^m}\|\eta\|^2\diff G(\eta).
        \end{align*}
        By \eqref{eqn:quadratic-growth}, we may get a function $\psi\in\Psi$ and some $C,R>0$ such that we have $\|\eta\|^2 \le  C\psi(\eta)$ for all $\eta\in\Rbb^m$ with $\|\eta\|>R$.
        Then we can bound the right side via
        \begin{align*}
            \int_{\Rbb^m}\|\eta\|^2\diff H(\eta) &=\int_{B_r(0)}\|\eta\|^2\diff H(\eta) + \int_{\Rbb^m\setminus B_r(0)}\|\eta\|^2\diff H(\eta) \\
            &\le\int_{B_r(0)}\|\eta\|^2\diff H(\eta) + C\int_{\Rbb^m\setminus B_r(0)}\psi(\eta)\diff H(\eta) \\
            &\le R+ C\int_{\Rbb^m\setminus B_r(0)}\psi(\eta)\diff G(\eta) \\
            &\le R + C\int_{\Rbb^m}\psi\diff G,
        \end{align*}
        and similar for $G$.
        Since the right side above is finite by the assumption $\psi\in\Psi\subseteq L^1(G)$, it follows that the integral is well-defined for each feasible $\pi$.

        Next let us show that \eqref{eqn:gen-str-Kant} admits solutions.
        By \eqref{eqn:cty}, there exists a Borel set $A\subseteq\Rbb^d$ with $F(A) = 1$ such that $c_G:A\times \Rbb^m\to [0,\infty)$ is continuous.
        Since
        \begin{equation*}
            \int_{\Rbb^d\times\Rbb^m}c_G(z,\eta)\diff \pi(z,\eta) = \int_{A\times\Rbb^m}c_G(z,\eta)\diff \pi(z,\eta),
        \end{equation*}
        for all $\pi\in\Gamma(F;G)$,
        it follows from the Portmanteau lemma that the objective of \eqref{eqn:gen-str-Kant} is weakly lower semi-continuous.
        Since a lower semi-continuous function on a compact set must achieve a minimizer, it only remains to show that the feasible set $\Gamma(F;G,\Psi)$ is weakly compact.
        To do this, note that we can write $\Gamma(F;G,\Psi) = \bigcup_{H\in\Hcal}\Gamma(F;H)$ for
        \begin{equation*}
            \Hcal :=\left\{H\in\Pcal(\Rbb^m): \int_{\Rbb^m}\phi_{\ell}(\eta)\diff H(\eta) = \int_{\Rbb^m}\phi_{\ell}(\eta)\diff G(\eta) \textnormal{ for all } 1 \le \ell \le k\right\},
        \end{equation*}
        and it follows from \cite[Lemma~4.3]{Villani} that $\Gamma(F;G,\Psi)$ is weakly compact in $\Pcal(\Rbb^d\times\Rbb^m)$ as soon as $\Hcal$ is weakly compact in $\Pcal(\Rbb^m)$.
        To show the latter, it suffices by Prokhorov's theorem to show that $\Hcal$ is tight.
        To do this, use \eqref{eqn:quadratic-growth} to get (as above) some $\psi\in\Psi$ and $C,R>0$ with $\|\eta\|^2\le C\psi(\eta)$ for all $\eta\in\Rbb^m$ with $\|\eta\|>R$.
        In particular, the compact sets $K_{r}:=\{\eta\in\Rbb^m:\psi(\eta)\le r\}$ for $r\ge 0$ satisfy, by Markov's inequality:
        \begin{equation*}
            H(\Rbb^m\setminus K_r) \le \frac{1}{r}\int_{\Rbb^m}\psi(\eta)\diff H(\eta) = \frac{1}{r}\int_{\Rbb^m}\psi(\eta)\diff G(\eta)
        \end{equation*}
        for all $H\in \mathcal{H}$ and $r\ge 0$, hence
        \begin{equation*}
            \sup_{H\in\mathcal{H}}H(\Rbb^m\setminus K_r) \le \frac{1}{r}\int_{\Rbb^m}\psi(\eta)\diff G(\eta)\to 0
        \end{equation*}
        as $r\to\infty$.
        This proves that $\mathcal{H}$ is tight, and hence guarantees the existence of a solution to \eqref{eqn:gen-str-Kant}.

        Now we finish the proof, by applying the result in \citet[Theorem~2.4 and Remark~4.2]{GarciaTrillosSen} which shows that, under assumptions~\eqref{eqn:Bayes-dens} and \eqref{eqn:quadratic-growth},  any solution to \eqref{eqn:gen-str-Kant} must be induced by a map.
        Indeed, it suffices to show that \eqref{eqn:gen-str-Kant} admits a unique solution, so, towards a contradiction, assume that there were distinct solutions $\pi_1,\pi_2$ to \eqref{eqn:gen-str-Kant}.
        Since the objective of \eqref{eqn:gen-str-Kant} is affine, it follows that $\frac{1}{2}(\pi_1+\pi_2)$ is also a solution.
        However, if $\pi_1,\pi_2$ are induced by maps, then $\frac{1}{2}(\pi_1+\pi_2)$ cannot be induced by a map.
        This is a contradiction, hence the proof is complete.
        \end{proof}

        \section{Proofs from Section~\ref{sec:emp}}\label{app:emp}

        This section contains the proofs of the main results in Section~\ref{sec:emp} on EB denoising.

        \subsection{Proofs from Subsection~\ref{subsec:emp-varcstr}}\label{app:emp-varcstr}

        We begin with the proofs of the results from Subsection~\ref{subsec:emp-varcstr} on variance-constrained EB denoising.
        In particular, we will prove Theorem~\ref{thm:varcstr-gm-npp} which provides a rate of convergence for the denoiser $\hat{\delta}_{\varcstr}$ given in Algorithm~\ref{alg:Gaussian-mod-NP-prior}.
        Adopting the notation from Subsection~\ref{subsec:emp-varcstr}, we begin with the following estimate. 

    \begin{lemma}\label{lem:emp-cov-bd}
        Under \eqref{eqn:FM} and \eqref{eqn:EBQ}, we have
        \begin{equation*}
            \|\hat M - M\|_{\textnormal{F}}^2 = O_{\Pbb}\left(\alpha_n\vee n^{-2(q-2)/q}\right)
        \end{equation*}
        as $n\to\infty$.
    \end{lemma}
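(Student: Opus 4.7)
The plan is to introduce the intermediate quantity $\tilde M := \frac{1}{n}\sum_{i=1}^n \delta_{\B}(Z_i)\delta_{\B}(Z_i)^{\top} - \bar Y \bar Y^{\top}$, the sample covariance matrix of the \emph{oracle} Bayes-denoised data with $\bar Y := \frac{1}{n}\sum_i \delta_{\B}(Z_i)$, and then to use the triangle inequality $\|\hat M - M\|_{\textnormal{F}} \le \|\hat M - \tilde M\|_{\textnormal{F}} + \|\tilde M - M\|_{\textnormal{F}}$. I would bound the two pieces separately so that each matches one of the two rates in the claim.

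To bound $\|\tilde M - M\|_{\textnormal{F}}^2$ by $n^{-2(q-2)/q}$, I would first note that Jensen's inequality together with~\eqref{eqn:FM} gives $\Ebb[\|\delta_{\B}(Z)\|^q] \le \Ebb[\|\Theta\|^q] < \infty$, so the i.i.d.\ rank-one random matrices $\delta_{\B}(Z_i)\delta_{\B}(Z_i)^{\top}$ have entries with bounded $q/2$-th moments. Since $2 < q \le 4$ gives $r := q/2 \in (1,2]$, applying the Marcinkiewicz-Zygmund inequality entrywise yields $\|\tfrac{1}{n}\sum_i \delta_{\B}(Z_i)\delta_{\B}(Z_i)^{\top} - \Ebb[\delta_{\B}(Z)\delta_{\B}(Z)^{\top}]\|_{\textnormal{F}} = O_{\Pbb}(n^{-(q-2)/q})$. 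The centering correction $\bar Y \bar Y^{\top} - \Ebb[\delta_{\B}(Z)]\Ebb[\delta_{\B}(Z)]^{\top}$ is $O_{\Pbb}(n^{-1/2})$ by the ordinary CLT, which is asymptotically negligible since $n^{-1/2} \le n^{-(q-2)/q}$ for $q \le 4$. Squaring yields the claimed rate for this piece.

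To bound $\|\hat M - \tilde M\|_{\textnormal{F}}^2$ by $\alpha_n$, I would set $\epsilon_i := \hat{\delta}_{\B}(Z_i) - \delta_{\B}(Z_i)$, so that \eqref{eqn:EBQ} gives $\tfrac{1}{n}\sum_i \|\epsilon_i\|^2 = O_{\Pbb}(\alpha_n)$. Expanding $(\delta_{\B}(Z_i)+\epsilon_i)(\delta_{\B}(Z_i)+\epsilon_i)^{\top} - \delta_{\B}(Z_i)\delta_{\B}(Z_i)^{\top}$ produces a cross term $\epsilon_i \delta_{\B}(Z_i)^{\top} + \delta_{\B}(Z_i) \epsilon_i^{\top}$ and a quadratic term $\epsilon_i \epsilon_i^{\top}$. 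Cauchy-Schwarz in $i$ bounds the average of the cross terms in Frobenius norm by $2\sqrt{\tfrac{1}{n}\sum_i \|\epsilon_i\|^2}\sqrt{\tfrac{1}{n}\sum_i \|\delta_{\B}(Z_i)\|^2} = O_{\Pbb}(\sqrt{\alpha_n})$, where the second factor is $O_{\Pbb}(1)$ by the weak law of large numbers applied to $\|\delta_{\B}(Z_i)\|^2$ (whose expectation is finite since $q > 2$), and it bounds the average of the quadratic terms by $\tfrac{1}{n}\sum_i \|\epsilon_i\|^2 = O_{\Pbb}(\alpha_n)$. An analogous argument controls the centering correction $\bar{\hat Y}\bar{\hat Y}^{\top} - \bar Y \bar Y^{\top}$. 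Squaring the resulting bound $\|\hat M - \tilde M\|_{\textnormal{F}} = O_{\Pbb}(\sqrt{\alpha_n})$ gives $O_{\Pbb}(\alpha_n)$. Combining both estimates via the triangle inequality finishes the proof. The one slightly delicate step is identifying the correct Marcinkiewicz-Zygmund exponent $(q-2)/q$ when only $q$ moments of $\Theta$ with $q \in (2,4]$ are assumed; the rest is routine expansion and Cauchy-Schwarz.
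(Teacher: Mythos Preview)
Your proposal is correct and follows essentially the same route as the paper: the same intermediate quantity $\tilde M$ (the paper calls it $\bar M$), the same conditional-Jensen step to get $\Ebb[\|\delta_{\B}(Z)\|^q]<\infty$, the same Marcinkiewicz--Zygmund rate $n^{-(q-2)/q}$ for $\|\tilde M - M\|_{\textnormal{F}}$ (the paper cites this as \citet[Theorem~2.5.12]{DurrettPTE}), and the same Cauchy--Schwarz treatment of the cross terms for $\|\hat M - \tilde M\|_{\textnormal{F}}$. The only cosmetic difference is that the paper first packages the Cauchy--Schwarz step as a standalone deterministic inequality bounding the Frobenius distance between two sample covariance matrices by $CD(D+\tilde D)$, whereas you expand directly; the underlying computation is identical.
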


    \begin{proof}
        We begin by proving the following deterministic result:
        There exists a universal constant $C>0$ such that, for any $x_1,\ldots, x_n$ and $y_1,\ldots, y_n$ in $\Rbb^m$, we have
        \begin{equation}\label{eqn:emp-var-bd}
            \left\|\frac{1}{n}\sum_{i=1}^{n}(x_i-\bar x)(x_i-\bar x)^{\top} - \frac{1}{n}\sum_{i=1}^{n}(y_i-\bar y)(y_i-\bar y)^{\top}\right\|_{\textnormal{F}} \le CD(D+\tilde D),
        \end{equation}
        where we have defined $\bar x := \frac{1}{n}\sum_{i=1}^{n}x_i$ and $\bar y := \frac{1}{n}\sum_{i=1}^{n}y_i$, as well as
        \begin{equation*}
            D:=\sqrt{\frac{1}{n}\sum_{i=1}^{n}\|x_i-y_i\|^2}
        \end{equation*}
        and
        \begin{equation*}
            \tilde D:=\sqrt{\frac{1}{n}\sum_{i=1}^{n}\|y_i-\bar y\|^2}.
        \end{equation*}
        To see this, note that for all $u,v\in\Rbb^m$ we have the bound $\|uu^{\top}- vv^{\top}\|_{\textnormal{F}} \le \|u-v\|(\|u\| + \|v\|)$, hence we get
        \begin{align*}
            &\left\|\frac{1}{n}\sum_{i=1}^{n}(x_i-\bar x)(x_i-\bar x)^{\top} - \frac{1}{n}\sum_{i=1}^{n}(y_i-\bar y)(y_i-\bar y)^{\top}\right\|_{\textnormal{F}} \\
            &\le \frac{1}{n}\sum_{i=1}^{n}\left\|(x_i-\bar x)(x_i-\bar x)^{\top} - (y_i-\bar y)(y_i-\bar y)^{\top}\right\|_{\textnormal{F}} \\
            &\le \frac{1}{n}\sum_{i=1}^{n}\|x_i-y_i\|(\|x_i-\bar x\|+\|y_i-\bar y\|).
        \end{align*}
        By applying Cauchy-Schwarz to the right side and then squaring both sides of the resulting inequality, the display above yields
        \begin{align*}
            &\left\|\frac{1}{n}\sum_{i=1}^{n}(x_i-\bar x)(x_i-\bar x)^{\top} - \frac{1}{n}\sum_{i=1}^{n}(y_i-\bar y)(y_i-\bar y)^{\top}\right\|_{\textnormal{F}}^2 \\
            &\le \frac{1}{n}\sum_{i=1}^{n}\|x_i-y_i\|^2\cdot\frac{2}{n}\sum_{i=1}^{n}\left(\|x_i-\bar x\|^2+\|y_i-\bar y\|^2\right).
        \end{align*}
        In particular, it suffices to show that there is some universal constant $C>0$ such that we have
        \begin{equation*}
            \frac{1}{n}\sum_{i=1}^{n}\|x_i-\bar x\|^2 \le C\left(\frac{1}{n}\sum_{i=1}^{n}\|x_i-y_i\|^2 + \frac{1}{n}\sum_{i=1}^{n}\|y_i-\bar y\|^2\right)
        \end{equation*}
        To do this, we use the elementary inequality
        \begin{equation*}
            \|x_i-\bar x\|^2 \le 3\left(\|x_i-y_i\|^2 + \|y_i - \bar y \|^2 + \|\bar y - \bar x\|^2\right)
        \end{equation*}
        and the observation
        \begin{equation*}
            \|\bar x - \bar y\|^2 \le \frac{1}{n}\sum_{i=1}^{n}\|x_i-y_i\|^2
        \end{equation*}
        to conclude.

        Now we return to the main result.
        Set
        \begin{align*}
            x_i := \hat{\delta}_{\B}(Z_i) \qquad \mbox{and}\qquad y_i := \delta_{\B}(Z_i),
        \end{align*}
        in \eqref{eqn:emp-var-bd} an note that $\hat M$ is the sample covariance matrix of $\{x_i\}_{i=1}^n$ and $M$ is the population covariance matrix of $\{y_i\}_{i=1}^n$. Letting $\bar M$ denote the sample covariance matrix
        \begin{equation*}
            \bar M := \frac{1}{n}\sum_{i=1}^{n}y_iy_i^{\top} - \bar y\bar y^{\top},
        \end{equation*}
        we note that by the triangle inequality we have
        \begin{equation}\label{eqn:moment-contract}
            \|\hat M - M\|_{\textnormal{F}}^2\le 2\|\hat M - \bar M\|_{\textnormal{F}}^2+ 2\|\bar M - M\|_{\textnormal{F}}^2.
        \end{equation}
        Since $y_1,\ldots, y_n$ are i.i.d., we know from classical asymptotic statistics \citep[Theorem~2.5.12]{DurrettPTE} that the second term is $O_{\Pbb}(n^{-2(q-2)/q})$ provided that $\Ebb[\|\delta_{\B}(Z)\|^{q}]<\infty$; to see this, simply use the contraction property of conditional expectation to compute:
        \begin{equation*}
            \Ebb\left[\|\delta_{\B}(Z)\|^{q}\right] = \Ebb\left[\|\Ebb[\Theta\,|\,Z]\|^{q}\right] \le \Ebb\left[\|\Theta\|^{q}\right] < \infty,
        \end{equation*}
        where the finiteness holds by \eqref{eqn:FM}.
        For the first term, note by \eqref{eqn:emp-var-bd} that it suffices to show $D^2 = O_{\Pbb}(\alpha_n)$ and $\tilde D^2 = O_{\Pbb}(1)$; the latter follows from classical asymptotic statistics, as above, and the former holds by assumption \eqref{eqn:EBQ}.
        This finishes the proof.
    \end{proof}

    Now we can prove the main result.

    \begin{proof}[Proof of Theorem~\ref{thm:varcstr-gm-npp}]
        Since all elements of the problem are equivariant under translation, we may assume $\mu:=\Ebb[Z] = \Ebb[\Theta]$ is $\mu=0$.
        Then write $\hat\transport = \transport_{\hat M}^{\hat A}$ so that we can expand
        \begin{align*}
            \hat \delta_{\varcstr}(Z_i)-\delta_{\varcstr}(Z_i) &= \left(\transport_{\hat M}^{\hat A}- \transport_{M}^{A}\right)\hat \delta_{\B}(Z_i) + \transport_{M}^{A}\left(\hat \delta_{\B}(Z_i) - \delta_{\B}(Z_i)\right) + (I-\transport_{\hat M}^{\hat A}) \hat \mu,
        \end{align*}
        and let us use this to bound
        \begin{equation}\label{eqn:G-NP-consistency-1}
            \begin{split}
                \frac{1}{n}\sum_{i=1}^{n}\left\|\hat \delta_{\varcstr}(Z_i)-\delta_{\varcstr}(Z_i)\right\|_2^2
                &\le\left\|\transport_{\hat M}^{\hat A}- \transport_{M}^{A}\right\|_{2\to 2}^2\frac{3}{n}\sum_{i=1}^{n}\left\|\hat \delta_{\B}(Z_i)\right\|_2^2 \\
                &+ \left\|\transport_{M}^{A}\right\|_{2\to 2}^2\frac{3}{n}\sum_{i=1}^{n}\left\|\hat \delta_{\B}(Z_i) - \delta_{\B}(Z_i)\right\|_2^2 \\
                &+ 3\left\|I-\transport_{\hat M}^{\hat A}\right\|_{2\to 2}^2 \|\hat \mu\|_2^2,
            \end{split}
        \end{equation}
        where $\|\cdot\|_{2\to 2}$ denotes the operator norm of a linear map from $(\Rbb^m,\|\cdot\|)$ to itself.
        In order to complete the proof, it suffices to show that the three terms on the right side are all $O_{\Pbb}(\alpha_n\vee n^{-2(q-2)/q})$ as $n\to\infty$.
        We do this in reverse order.

        For the third term of \eqref{eqn:G-NP-consistency-1}, it suffices to show first $\Pbb(\hat M\preceq \hat A)\to 1$ and then $n\|I-\transport_{\hat M}^{\hat A}\|_{2\to 2}^2\|\hat \mu\|_2^2 = O_{\Pbb}(1)$ on $\{\hat M\preceq \hat A\}$; this is because $2(q-2)/q \le 1$ for all $2<q\le 4$, hence $O_{\Pbb}(n^{-1}) \le O_{\Pbb}(n^{-2(q-2)/q})$.
        The former follows since $\{(M,A)\in\covspace(m): M\not\prec A\}$ is an open set containing $(M,A)$ (recall \eqref{eqn:M-prec-A}), and we have $\hat M \to M$ by Lemma~\ref{lem:emp-cov-bd}, as well as $\hat A \to A$ by combining \eqref{eqn:EC} with $\hat S\to S$ that follows from \eqref{eqn:FM} and the strong law of large numbers.
        To see the latter, note that $\hat M \preceq \hat A$ implies $\transport_{\hat M}^{\hat A} \preceq I$ hence $\|I-\transport_{\hat M}^{\hat A}\|_{2\to 2}\le 1$.
        Moreover, we have $\Ebb[\|Z\|^2] = \trace(A) + \trace(\Sigma) < \infty$ by \eqref{eqn:M} and \eqref{eqn:FM}, so classical asymptotic statistics yields $\|\hat \mu\|_2^2 = O_{\Pbb}(n^{-1})$.

        For the second term, we simply note that the first factor is non-random and that the second factor is $O_{\Pbb}(\alpha_n)$ by assumption~\eqref{eqn:EBQ}.

        Lastly, we consider the first term of \eqref{eqn:G-NP-consistency-1}.
        Indeed, we note that it suffices to show
        \begin{equation}\label{eqn:G-NP-consistency-2}
            \left\|\transport_{\hat M}^{\hat A}- \transport_{M}^{A}\right\|_{2\to 2}^2 = O_{\Pbb}\left(\alpha_n\vee n^{-2(q-2)/q}\right)
        \end{equation}
        and
        \begin{equation}\label{eqn:G-NP-consistency-3}
            \frac{1}{n}\sum_{i=1}^{n}\left\|\hat \delta_{\B}(Z_i)\right\|_2^2 = O_{\Pbb}(1)
        \end{equation}
        as $n\to\infty$, and we do this in two separate steps.

        To see \eqref{eqn:G-NP-consistency-2}, we use the triangle inequality to get
        \begin{equation*}
            \|\transport_{\hat M}^{\hat A} - \transport_{M}^{A}\|_{2\to 2} \le \|\transport_{\hat M}^{\hat A} - \transport_{M}^{\hat A}\|_{2\to 2} + \|\transport_{M}^{\hat A} - \transport_{M}^{A}\|_{2\to 2}.
        \end{equation*}
        Of course, all norms are equivalent up to constants which may depend only on the dimension, so it suffices to show that both
        \begin{equation*}
            \|\transport_{\hat M}^{\hat A} - \transport_{M}^{\hat A}\|_{\textnormal{F}} \qquad \textnormal{and}\qquad \|\transport_{M}^{\hat A} - \transport_{M}^{A}\|_{\textnormal{F}}
        \end{equation*}
        are $O_{\Pbb}(\alpha_n\vee n^{-1})$ as $n\to\infty$.
        To do this, we define $\phi^{B}:\covspace(m)\to \symspace(m)$ via $\phi^B(A):=\transport_{A}^{B}$, and note from \citet[Lemma~A.2]{Kroshnin} that $\phi^B$ is continuously differentiable and its Jacobian $\nabla_A\phi^{B}$ is jointly continuous in $(A,B)$.
        At the same time, the mean value theorem yields
        \begin{equation*}
            \|\transport_{\hat M}^{\hat A} - \transport_{M}^{\hat A}\|_{\textnormal{F}} \le \sup_{0\le t\le 1}\|\nabla_{(1-t)\hat M + tM}\phi^{\hat A}\|_{\textnormal{F}\to\textnormal{F}}\cdot\|\hat M - M\|_{\textnormal{F}},
        \end{equation*}
        where $\|\cdot\|_{\textnormal{F}\to\textnormal{F}}$ denotes the operator norm of a linear map from $(\Rbb^{m\times m},\|\cdot\|_{\textnormal{F}})$ to itself.
        Now fix $r>0$, and note that the ball $B_r^{\textnormal{F}}(B):= \{A\in \covspace(m): \|A-B\|_{\textnormal{F}} \le r\}$ is compact and convex for all $B\in\Rbb^{m\times m}$.
        In particular, we have
        \begin{equation*}
            C_1:=\sup_{\tilde{M}\in B^{\textnormal{F}}_{r}(M),\tilde{A}\in B^{\textnormal{F}}_{r}(A)}\|\nabla_{\tilde{M}}\phi^{\tilde{A}}\|_{\textnormal{F}\to\textnormal{F}} < \infty,
        \end{equation*}
        and
        \begin{equation*}
            \|\transport_{\hat M}^{\hat A} - \transport_{M}^{\hat A}\|_{\textnormal{F}} \le C_1\|\hat M - M\|_{\textnormal{F}},
        \end{equation*}
        on the event $\{\|\hat M - M\|_{\textnormal{F}}\le r,\|\hat A - A\|_{\textnormal{F}}\le r\}$.
        Additionally, note that we may define the map $\phi_A:\covspace(m)\to\symspace(m)$ via $\phi_A(B) := \transport_A^{B}$, and of course this is related to $\phi^B$ by the identity $\phi_A = (\phi^A)^{-1}$.
        Since inversion is differentiable on $\covspace(m)$, this implies that $\phi_A$ is also continuously differentiable.
        In particular, we have
        \begin{equation*}
            C_2:=\sup_{\tilde{A}\in B^{\textnormal{F}}_{r}(A)}\|\nabla_{\tilde{A}}\phi_{M}\|_{\textnormal{F}\to\textnormal{F}} < \infty,
        \end{equation*}
        and
        \begin{equation*}
            \|\transport_{M}^{\hat A} - \transport_{M}^{A}\|_{\textnormal{F}} \le C_2\|\hat A - A\|_{\textnormal{F}},
        \end{equation*}
        on the event $\{\|\hat A - A\|_{\textnormal{F}}\le r\}$.
        Since we have $\Pbb(\|\hat M - M\|_{\textnormal{F}}\le r, \|\hat A - A\|_{\textnormal{F}}\le r)\to 1$ by \eqref{eqn:EC} and Lemma~\ref{lem:emp-cov-bd}, it suffices to show that all three of
        \begin{equation*}
            \|\hat S - S\|_{\textnormal{F}}, \qquad\|\hat \Sigma - \Sigma\|_{\textnormal{F}}, \qquad \textnormal{ and }\qquad \|\hat M - M\|_{\textnormal{F}}
        \end{equation*}
        are $O_{\Pbb}(\alpha_n\vee n^{-2(q-2)/q})$ as $n\to\infty$.
        As we have already seen, the first is $O_{\Pbb}(n^{-2(q-2)/q})$ from classical asymptotic statistics, the second is $O_{\Pbb}(n^{-1})\le O_{\Pbb}(n^{-2(q-2)/q})$ by \eqref{eqn:EC}, and the third is $O_{\Pbb}(\alpha_n\vee n^{-2(q-2)/q})$ by Lemma~\ref{lem:emp-cov-bd}.
        This establishes \eqref{eqn:G-NP-consistency-2}.

        To see \eqref{eqn:G-NP-consistency-3}, we simply bound:
        \begin{equation}
            \frac{1}{n}\sum_{i=1}^{n}\left\|\hat \delta_{\B}(Z_i)\right\|_2^2 \le \frac{2}{n}\sum_{i=1}^{n}\left\|\delta_{\B}(Z_i)\right\|_2^2 + \frac{2}{n}\sum_{i=1}^{n}\left\|\hat \delta_{\B}(Z_i)-\delta_{\B}(Z_i)\right\|_2^2.
        \end{equation}
        Lastly, note that the first term is $O_{\Pbb}(1)$ by classical asymptotic statistics and \eqref{eqn:FM}, and the second term is $O_{\Pbb}(\alpha_n)$ by another application of~\eqref{eqn:EBQ}.
        This shows that the sum is $O_{\Pbb}(1)$, which establishes \eqref{eqn:G-NP-consistency-3}. This finishes the proof.
    \end{proof}

    \subsection{Proofs from Subsection~\ref{subsec:emp-distrcstr}}\label{app:emp-distrcstr}

    Next we turn to the results of Subsection~\ref{subsec:emp-distrcstr} on EB denoising with distribution constraints.
    In particular, we will prove Theorem~\ref{thm:emp-distrcstr} which provides a rate of convergence of the denoiser $\hat{\delta}_{\distrcstr}$ given in Algorithm~\ref{alg:distr-cstr}.

    \begin{proof}[Proof of Theorem~\ref{thm:emp-distrcstr}]
        We begin with a useful estimate, which is based on the proof of \citet[Theorem~2.1]{DebGhosalSen} and \citet[Theorem~7]{SlawskiSen}.
        To derive it, recall from Theorem~\ref{thm:pop-distrcstr} that $\delta_{\distrcstr}$ must be of the form $\nabla \phi \circ \delta_{\B}$ for some convex function $\phi:\Rbb^m\to\Rbb$.
        Consequently, we have $(\nabla \phi^{\ast})(\delta_{\distrcstr}(z)) = \delta_{\B}(z)$ holding for $F$-almost all $z\in\Rbb^d$, where $\phi^{\ast}$ is the Fenchel-Legendre dual of $\phi$.
        Also, let us define
        \begin{equation*}
            \bar G_n := (\delta_{\distrcstr})_{\#}\bar F_n \qquad \textnormal{and}\qquad\bar \pi_n := (\id,\delta_{\distrcstr})_{\#}\bar F_n
        \end{equation*}
        for each $n\in\Nbb$ (recall that $\bar F_n:=\frac{1}{n}\sum_{i=1}^{n}\delta_{Z_i}$).
        Then, we can use the smoothness from \eqref{eqn:L} to compute:
        \begin{align*}
            &\int_{\Rbb^m}\phi^{\ast}(\eta)\,\diff \hat G_n(\eta) - \int_{\Rbb^m}\phi^{\ast}(\eta)\,\diff \bar G_n(\eta) \\
            &= \int_{\Rbb^d}\int_{\Rbb^m}\phi^{\ast}(\eta)\,\diff \hat{\pi}_n(\eta\,|\,z)\,\diff \bar F_n(z) - \int_{\Rbb^m}\phi^{\ast}(\eta)\,\diff \bar G_n(\eta) \\
            &\ge \int_{\Rbb^d}\phi^{\ast}(\hat \delta_{\distrcstr}(z))\,\diff \bar F_n(z) - \int_{\Rbb^d}\phi^{\ast}(\delta_{\distrcstr}(z))\,\diff \bar F_n(z) \\
            &= \int_{\Rbb^d}\left(\phi^{\ast}(\hat \delta_{\distrcstr}(z))-\phi^{\ast}(\delta_{\distrcstr}(z))\right)\diff \bar F_n(z) \\
            &\ge \int_{\Rbb^d}\left\langle(\nabla \phi^{\ast})(\delta_{\distrcstr}(z)),\hat \delta_{\distrcstr}(z)-\delta_{\distrcstr}(z)\right\rangle\diff \bar F_n(z) \\
            &\qquad\qquad\qquad + \frac{1}{2L}\int_{\Rbb^d}\left\|\hat \delta_{\distrcstr}(z)-\delta_{\distrcstr}(z)\right\|^2\diff \bar F_n(z) \\
            &= \int_{\Rbb^d}\left\langle(\delta_{\B}(z),\hat \delta_{\distrcstr}(z)-\delta_{\distrcstr}(z)\right\rangle\diff \bar F_n(z) \\
            &\qquad\qquad\qquad + \frac{1}{2L}\int_{\Rbb^d}\left\|\hat \delta_{\distrcstr}(z)-\delta_{\distrcstr}(z)\right\|^2\diff \bar F_n(z)
        \end{align*}
        Here, the first inequality follows from Jensen and the second inequality follows from assumption~\eqref{eqn:L}.
        Rearranging the above, we have shown
        \begin{eqnarray}\label{eqn:gen-cstr-1}
                \frac{1}{2L}\int_{\Rbb^d}\left\|\hat \delta_{\distrcstr}(z)-\delta_{\distrcstr}(z)\right\|^2\,\diff \bar F_n(z) & \le & \int_{\Rbb^m}\phi^{\ast}(\eta)\diff \hat G_n(\eta) - \int_{\Rbb^m}\phi^{\ast}(\eta)\diff \bar G_n(\eta) \nonumber \\
                && \;\; -\int_{\Rbb^d}\left\langle(\delta_{\B}(z),\hat \delta_{\distrcstr}(z)-\delta_{\distrcstr}(z)\right\rangle\diff \bar F_n(z). \qquad
        \end{eqnarray}
        Now we make two computations.
        First, we get
        \begin{eqnarray}\label{eqn:distr-cstr-2}
                \int_{\Rbb^d\times\Rbb^m}\hat c_n\diff \hat{\pi}_n &= & \int_{\Rbb^d\times\Rbb^m}\|\hat{\delta}_{\B}(z)-\eta\|^2\diff \hat{\pi}_n(z,\eta) \nonumber \\
                &=& \int_{\Rbb^d}\|\hat{\delta}_{\B}(z)\|^2\diff \bar F_n(z) + \int_{\Rbb^m}\|\eta\|^2\diff \hat G_n(\eta) - 2\int_{\Rbb^d}\int_{\Rbb^m}\left\langle\eta,\hat{\delta}_{\B}(z)\right\rangle\diff \hat{\pi}_n(\eta\,|\,z)\diff \bar F_n(z) \nonumber \\
                &=& \int_{\Rbb^d}\|\hat{\delta}_{\B}(z)\|^2\diff \bar F_n(z) + \int_{\Rbb^m}\|\eta\|^2\diff \hat G_n(\eta)  - 2\int_{\Rbb^d}\left\langle\hat{\delta}_{\distrcstr}(z),\hat{\delta}_{\B}(z)\right\rangle\diff \bar F_n(z) 
        \end{eqnarray}
        from the fact that we have $\hat{\delta}_{\distrcstr}(z) = \int \eta \,\diff \hat{\pi}_n(\eta\,|\,z)$ by construction.
        Second, we get
        \begin{eqnarray}\label{eqn:distr-cstr-3}
                \int_{\Rbb^d\times\Rbb^m} c_G\diff \bar{\pi}_n  &= & \int_{\Rbb^d\times\Rbb^m}\|\delta_{\B}(z)-\eta\|^2\diff \bar \pi_n(z,\eta) \nonumber \\
                &= & \int_{\Rbb^d}\|\delta_{\B}(z)\|^2\diff \bar F_n(z) + \int_{\Rbb^m}\|\eta\|^2\diff \bar G_n(\eta)  
                - 2\int_{\Rbb^d}\left\langle\delta_{\B}(z),\delta_{\distrcstr}(z)\right\rangle\diff \bar F_n(z), \quad \qquad 
        \end{eqnarray}
         from the fact that we have ${\delta}_{\distrcstr}(z) = \int \eta \,\diff \bar{\pi}_n(\eta\,|\,z)$ holding $\bar F_n$ almost surely, which requires a small argument:
         defining $\bar{\delta}_{\distrcstr}(z) := \int \eta \,\diff \bar{\pi}_n(\eta\,|\,z)$, we may use the contraction property of conditional expectation to compute:
         \begin{align*}
             &\int_{\Rbb^d}\|\bar{\delta}_{\distrcstr}(z)-\delta_{\distrcstr}(z)\|^2\diff \bar F_n(z) \\
             &= \int_{\Rbb^d}\|\bar{\delta}_{\distrcstr}(z)\|^2\diff \bar F_n(z) +\int_{\Rbb^d}\|\delta_{\distrcstr}(z)\|^2\diff \bar F_n(z) -2\int_{\Rbb^d}\langle\delta_{\distrcstr}(z),\bar\delta_{\distrcstr}(z)\rangle\diff \bar F_n(z) \\
            &= \int_{\Rbb^d}\|\bar{\delta}_{\distrcstr}(z)\|^2\diff \bar \pi_n(z,\eta) +\int_{\Rbb^d}\|\delta_{\distrcstr}(z)\|^2\diff \bar \pi_n(z,\eta) -2\int_{\Rbb^d}\langle\delta_{\distrcstr}(z),\bar\delta_{\distrcstr}(z)\rangle\diff \bar \pi_n(z,\eta) \\
             &\le \int_{\Rbb^d}\|\eta\|^2\diff \bar \pi_n(z)+\int_{\Rbb^d}\|\delta_{\distrcstr}(z)\|^2\diff \bar \pi_n(z) -2\int_{\Rbb^d}\langle\delta_{\distrcstr}(z),\eta\rangle\diff \bar \pi_n(z) \\
             &= \int_{\Rbb^d}\|\eta-\delta_{\distrcstr}(z)\|^2\diff \bar \pi_n(z) \\
             &= 0.
         \end{align*}
         Thus, equations~\eqref{eqn:distr-cstr-3} and \eqref{eqn:distr-cstr-2} are both true.
        By subtracting \eqref{eqn:distr-cstr-3} from \eqref{eqn:distr-cstr-2}, and plugging the result into \eqref{eqn:gen-cstr-1} and rearranging, we conclude:
        \begin{equation}\label{eqn:gen-cstr-4}
            \begin{split}
                \frac{1}{2L}&\int_{\Rbb^d}\left\|\hat \delta_{\distrcstr}(z)-\delta_{\distrcstr}(z)\right\|^2\diff \bar F_n(z) \\
                &\le\int_{\Rbb^m}\left(\phi^{\ast}(\eta)-\frac{1}{2}\|\eta\|^2\right)\diff \hat G_n(\eta) - \int_{\Rbb^m}\left(\phi^{\ast}(\eta)-\frac{1}{2}\|\eta\|^2\right)\diff \bar G_n(\eta) \\
                & \qquad - 2\int_{\Rbb^d}\left\langle\delta_{\B}(z)-\hat{\delta}_{\B}(z),\delta_{\distrcstr}(z)\right\rangle\diff \bar F_n(z) +\int_{}\left\|\delta_{\B}(z)\right\|^2\diff\bar F_n(z) - \int_{}\big\|\hat{\delta}_{\B}(z)\big\|^2\diff\bar F_n(z) \\
                &\qquad +\frac{1}{2}\left(\int_{\Rbb^d\times\Rbb^m}\hat c_n\diff \hat{\pi}_n - \int_{\Rbb^d\times\Rbb^m}c_G\diff \hat{\pi}_n\right)  + \frac{1}{2}\left(\int_{\Rbb^d\times\Rbb^m}c_G\diff \hat{\pi}_n -\int_{\Rbb^d\times\Rbb^m} c_G\diff \bar{\pi}_n\right).
            \end{split}
        \end{equation}
        Next, let $\rho_n$ be an optimal coupling of $\bar G_n$ and $\hat G_n$, and define $\lambda_n := (\nabla \phi^{\ast},\id)_{\#}\rho_n$, so that $\lambda_n$ is a coupling (not necessarily optimal) of $(\delta_{\B})_{\#}\bar F_n$ and $\hat{G}_n$.
        Then we can compute:
        \begin{equation}\label{eqn:gen-cstr-5}
        \begin{split}
            W_2^2\left((\delta_{\B})_{\#}\bar F_n,\hat{G}_n\right) &\le \int_{\Rbb^m\times\Rbb^m}\|\eta-\hat{\eta}\|^2\diff \lambda_n(\eta,\hat{\eta}) \\
            &= \int_{\Rbb^m\times\Rbb^m}\|\nabla \phi^{\ast}(\bar\eta)-\hat{\eta}\|^2\diff \rho_n(\bar{\eta},\hat{\eta}) \\
            &= \int_{\Rbb^m\times\Rbb^m}\|\nabla \phi^{\ast}(\bar\eta)-\bar{\eta}\|^2\diff \bar{G}_n(\bar{\eta}) + \int_{\Rbb^m\times\Rbb^m}\|\bar\eta-\hat{\eta}\|^2\diff \rho_n(\bar{\eta},\hat{\eta}) \\
            &\qquad + 2\int_{\Rbb^m\times\Rbb^m}\left\langle\nabla\phi^{\ast}(\bar\eta)-\bar{\eta}, \bar\eta-\hat{\eta}\right\rangle\diff \rho_n(\bar{\eta},\hat{\eta}).
            \end{split}
        \end{equation}
        Note that the first term on the right side is exactly
        \begin{equation*}
            \int_{\Rbb^m}\|\nabla \phi^{\ast}(\bar\eta)-\bar{\eta}\|^2\diff \bar{G}_n(\bar{\eta}) = \int_{\Rbb^m}\|\delta_{\B}(z)-\nabla\phi(\delta_{\B}(z))\|^2\diff \bar{F}_n(z) = W_2^2((\delta_{\B})_{\#}\bar F_n,\bar G_n)
        \end{equation*}
        because $\nabla \phi$ is an optimal transport map, and that the second term is $W_2^2(\bar{G}_n,\hat{G}_n)$ by definition.
        Therefore, \eqref{eqn:gen-cstr-5} becomes
        \begin{equation}\label{eqn:gen-cstr-6}
        \begin{split}
            W_2^2\left((\delta_{\B})_{\#}\bar F_n,\hat{G}_n\right) &\le \int_{\Rbb^m\times\Rbb^m}\|\eta-\hat{\eta}\|^2\diff \lambda_n(\eta,\hat{\eta}) \\
            &=W_2^2\left((\delta_{\B})_{\#}\bar{F}_n,\bar{G}_n\right) + W_2^2(\bar{G}_n,\hat{G}_n) \\
            & \qquad + 2\int_{\Rbb^m\times\Rbb^m}\left\langle\nabla\phi^{\ast}(\bar\eta)-\bar{\eta}, \bar\eta-\hat{\eta}\right\rangle\diff \rho_n(\bar{\eta},\hat{\eta}).
            \end{split}
        \end{equation}
        For the cross term, expand the square and use the strong convexity from \eqref{eqn:L} to get
        \begin{equation}\label{eqn:gen-cstr-7}
            \begin{split}
                &2\int_{\Rbb^m\times\Rbb^m}\left\langle\nabla\phi^{\ast}(\bar \eta),\bar{\eta}-\hat{\eta}\right\rangle \diff\rho_n(\bar{\eta},\hat{\eta}) \\
                &\le2\int_{\Rbb^m\times\Rbb^m}\left(\phi^{\ast}(\bar{\eta})-\phi^{\ast}(\hat{\eta}) + \frac{1}{2\lambda}\|\bar{\eta}-\hat{\eta}\|^2\right)\diff\rho_n(\bar{\eta},\hat{\eta}) \\
                &=2\int_{\Rbb^m}\phi^{\ast}(\bar{\eta})\diff\bar{G}_n(\bar{\eta}) - 2\int_{\Rbb^m}\phi^{\ast}(\hat{\eta})\diff\hat{G}_n(\bar{\eta}) + \frac{1}{\lambda}W_2^2(\hat{G}_n,\hat{G}_n)
            \end{split}
        \end{equation}
        and
        \begin{equation}\label{eqn:gen-cstr-8}
            \begin{split}
            &2\int_{\Rbb^m\times \Rbb^m}\left\langle -\bar{\eta},\bar{\eta}-\hat{\eta}\right\rangle \diff \rho_n(\bar{\eta},\hat{\eta}) \\
            &=\int_{\Rbb^m\times \Rbb^m}\left(\|\hat{\eta}\|^2 - \|\hat{\eta}-\bar{\eta}\|^2-\|  \bar{\eta}\|^2\right)\diff \rho_n(\bar{\eta},\hat{\eta}) \\
            &=\int_{\Rbb^m}\|\hat{\eta}\|^2\diff \hat{G}_n(\hat{\eta}) - \int_{\Rbb^m}\|\bar{\eta}\|^2\diff\bar{G}_n(\bar{\eta}) - W_2^2(\bar{G}_n,\hat{G}_n),
            \end{split}
        \end{equation}
        and then plug in \eqref{eqn:gen-cstr-7} and \eqref{eqn:gen-cstr-8} into \eqref{eqn:gen-cstr-6}, yielding
        \begin{equation}\label{eqn:gen-cstr-9}
            \begin{split}
            W_2^2\left((\delta_{\B})_{\#}\bar{F}_n,\hat{G}_n\right) &\le W_2^2\left((\delta_{\B})_{\#}\bar{F}_n, \bar{G}_n\right) + \frac{1}{\lambda}W_2^2(\bar{G}_n,\hat{G}_n) \\
            &\quad + \int_{\Rbb^m}\left(2\phi^{\ast}(\bar{\eta})-\|\bar{\eta}\|^2\right)\diff\bar{G}_n(\bar{\eta}) - \int_{\Rbb^m}\left(2\phi^{\ast}(\hat{\eta})-\|\hat{\eta}\|^2\right)\diff\hat{G}_n(\bar{\eta}).
            \end{split}
        \end{equation}
        Since
        \begin{equation*}
            \int_{\Rbb^d\times \Rbb^m}c_G\diff \hat{\pi}_n = W_2^2\left((\delta_{\B})_{\#}\bar F_n,\hat{G}_n\right)\qquad \textnormal{ and }\qquad \int_{\Rbb^d\times \Rbb^m}c_G\diff \bar{\pi}_n = W_2^2\left((\delta_{\B})_{\#}\bar F_n,\bar{G}_n\right),
        \end{equation*}
        we may plug in~\eqref{eqn:gen-cstr-9} into~\eqref{eqn:gen-cstr-4} to get:
        \begin{equation*}
            \begin{split}
                \frac{1}{2L}&\int_{\Rbb^d}\left\|\hat \delta_{\distrcstr}(z)-\delta_{\distrcstr}(z)\right\|^2\diff \bar F_n(z) \\
                &\hspace{-0.3in} \le \; \frac{1}{2\lambda}W_2^2(\hat{G}_n,\bar{G}_n) - 2\int_{\Rbb^d}\left\langle\delta_{\B}(z)-\hat{\delta}_{\B}(z),\delta_{\distrcstr}(z)\right\rangle\diff \bar F_n(z)\\
                &+\int_{\Rbb^d}\left\|\delta_{\B}(z)\right\|^2\diff\bar F_n(z) - \int_{\Rbb^d}\left\|\hat{\delta}_{\B}(z)\right\|^2\diff\bar F_n(z)  +\frac{1}{2}\left(\int_{\Rbb^d\times\Rbb^m}\hat c_n\diff \hat{\pi}_n - \int_{\Rbb^d\times\Rbb^m}c_G\diff \hat{\pi}_n\right). \\
            \end{split}
        \end{equation*}
        Now, expand the squares in the definitions of the last two terms to get:
        \begin{equation}\label{eqn:gen-cstr-10}
            \begin{split}
                \frac{1}{2L}\int_{\Rbb^d}\left\|\hat \delta_{\distrcstr}(z)-\delta_{\distrcstr}(z)\right\|^2\diff \bar F_n(z) &\le \frac{1}{2\lambda}W_2^2(\hat{G}_n,\bar{G}_n) \\
                &- \int_{\Rbb^d}\left\langle\delta_{\B}(z)-\hat{\delta}_{\B}(z),\delta_{\distrcstr}(z)\right\rangle\diff \bar F_n(z)\\
                &+\frac{1}{2}\int_{\Rbb^d}\left\|\delta_{\B}(z)\right\|^2\diff\bar F_n(z) - \frac{1}{2}\int_{\Rbb^d}\left\|\hat{\delta}_{\B}(z)\right\|^2\diff\bar F_n(z).
            \end{split}
        \end{equation}
        It is useful to compare \eqref{eqn:gen-cstr-10} with \citet[Theorem~7]{SlawskiSen} which derives a similar bound, but with only the first term;
        the additional terms are due to the fact that $\hat{\delta}_{\distrcstr}$ comes from an transport problem whose cost function has been estimated.
        
        In the rest of the proof, we will show that each of the terms on the right side of~\eqref{eqn:gen-cstr-10} is $O_{\Pbb}(\alpha_n^{\sfrac{1}{2}}\vee \beta_n\vee \gamma_n)$.
        The first term is the easiest to handle since by the triangle inequality we can bound
        \begin{equation*}
            W_2^2(\hat G_n,\bar G_n) \le 2W_2^2(\hat G_n,G)+2W_2^2(\bar G_n,G),
        \end{equation*}
        and note by assumption~\eqref{eqn:CM} that these terms on the right side are $O_{\Pbb}(\beta_n)$ and $O_{\Pbb}(\gamma_n)$, respectively.
        For the second term, use Cauchy-Schwarz to bound
        \begin{align*}
            &\int_{\Rbb^d}\left\langle\delta_{\B}(z)-\hat{\delta}_{\B}(z),\delta_{\distrcstr}(z)\right\rangle\diff \bar F_n(z)\\
            &\le\sqrt{\frac{1}{n}\sum_{i=1}^{n}\|\delta_{\B}(Z_i)-\hat{\delta}_{\B}(Z_i)\|^2}\sqrt{\frac{1}{n}\sum_{i=1}^{n}\|\delta_{\distrcstr}(Z_i)\|^2}.
        \end{align*}
        and note that the first factor is $O_{\Pbb}(\alpha_n^{\sfrac{1}{2}})$ by \eqref{eqn:EBQ}, and the second factor is $O_{\Pbb}(1)$ because of \eqref{eqn:2M} and the law of large numbers.
        For the third term, recall the elementary inequality
        \begin{equation*}
            \left|\|x\|^2 - \|y\|^2\right| \le \|x-y\|^2 + 2\|x\|\cdot\|x-y\|
        \end{equation*}
        for all $x,y\in\Rbb^m$.
        Taking $x=\delta_{\B}(Z_i)$ and $y=\hat{\delta}_{\B}(Z_i)$, summing, and applying Cauchy-Schwarz yields
        \begin{align*}
        &\int_{\Rbb^d}\|\delta_{\B}(Z_i)\|^2 \diff \bar F_n(z)-\int_{\Rbb^d}\|\hat{\delta}_{\B}(Z_i)\|^2 \diff \bar F_n(z)\\
        &=\sum_{i=1}^{n}\left(\|\delta_{\B}(Z_i)\|^2-\|\hat{\delta}_{\B}(Z_i)\|^2 \right)\\
        &\le \frac{1}{n}\sum_{i=1}^{n}\left(\|\delta_{\B}(Z_i)-\hat{\delta}_{\B}(Z_i)\|^2 + 2\|\delta_{\B}(Z_i)\|\cdot\|\delta_{\B}(Z_i)-\hat{\delta}_{\B}(Z_i)\|\right) \\
        &= \frac{1}{n}\sum_{i=1}^{n}\|\delta_{\B}(Z_i)-\hat{\delta}_{\B}(Z_i)\|^2 + \frac{2}{n}\sum_{i=1}^{n}\|\delta_{\B}(Z_i)\|\cdot\|\delta_{\B}(Z_i)-\hat{\delta}_{\B}(Z_i)\| \\
        &\le \frac{1}{n}\sum_{i=1}^{n}\|\delta_{\B}(Z_i)-\hat{\delta}_{\B}(Z_i)\|^2 + 2\cdot\sqrt{\frac{1}{n}\sum_{i=1}^{n}\|\delta_{\B}(Z_i)-\hat{\delta}_{\B}(Z_i)\|^2}\sqrt{\frac{1}{n}\sum_{i=1}^{n}\|\delta_{\B}(Z_i)\|^2}.
        \end{align*}
        Note that the first term on the right is $O_{\Pbb}(\alpha_n)$ by assumption~\eqref{eqn:EBQ}.
        Within the second term, the first factor is $O_{\Pbb}(\alpha_n^{\sfrac{1}{2}})$ by assumption~\eqref{eqn:EBQ}, and the second factor is $O_{\Pbb}(1)$ by assumption~\eqref{eqn:2M} and the law of large numbers.
    \end{proof}

    \subsection{Proofs from Subsection~\ref{subsec:emp-gencstr}}

    Lastly, we give the proof of Theorem~\ref{thm:gen-cstr-emp} on the rate of convergence for the general-constrained EB denoiser $\hat{\delta}_{\gencstr}$ from Algorithm~\ref{alg:gen-cstr}.

    An important step in our proof is a following technical result which states that limiting feasible couplings can be approximated by prelimiting feasible couplings.
    To state it, let us write $\Gamma_K(F;G,\Psi)$ for the set of $\pi\in \Gamma(F;G,\Psi)$ satisfying $\Supp(\pi(\Rbb^d\times \diff\eta))\subseteq K$, and similar for $\Gamma_K(\bar{F}_n;\hat{G}_n,\Psi)$.
    \begin{lemma}\label{lem:prelim-cvg-gen}
        Under assumptions \eqref{eqn:EC}, \eqref{eqn:CS}, \eqref{eqn:int} and \eqref{eqn:CC}, for every $\pi\in\Gamma_K(F;G,\Psi)$ there exists $\pi_n\in \Gamma_K(\bar F_n; \hat G_n,\Psi)$ for each $n\in\Nbb$ such that $W_2(\pi_n,\pi)=o_{\Pbb}(1)$.
    \end{lemma}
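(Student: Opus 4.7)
The plan is a ``glue-and-adjust'' argument: first construct an auxiliary perturbation $H_n\in\Pcal(K)$ of $H$ whose $\Psi$-integrals match those of $\hat G_n$, and then build $\pi_n$ by two successive applications of the standard gluing lemma. Throughout, write $H$ for the second marginal of $\pi$ and $M:\Pcal(K)\to\Rbb^k$ for the continuous moment map $H'\mapsto(\int\psi_\ell\,\diff H')_\ell$. By feasibility of $\pi$ and~\eqref{eqn:CS}, we have $M(H)=v$ with $v:=(\int\psi_\ell\,\diff G)_\ell$, and by~\eqref{eqn:CC}, we have $M(\hat G_n)\to v$ in probability.

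The key step is to produce $H_n\in\Pcal(K)$ with $M(H_n)=M(\hat G_n)$ and $W_2(H_n,H)=o_{\Pbb}(1)$. My candidate is $H_n:=(1-t_n)H+t_n\tilde H_n$, where $\tilde H_n\in\Pcal(K)$ is a convex combination of Dirac masses at fixed points $y_0,\ldots,y_k\in K$, and the mixing weights together with $t_n$ are selected so that $M(H_n)=M(\hat G_n)$ and $t_n\to 0$ in probability. Specifically, I would set $t_n:=\|M(\hat G_n)-v\|^{1/2}$ so that $(M(\hat G_n)-v)/t_n\to 0$, and then solve the linear system $M(\tilde H_n)=v+(M(\hat G_n)-v)/t_n$ for the weights. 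By~\eqref{eqn:int}, the $y_j$ can be chosen so that $v$ lies strictly inside $\Conv\{\psi(y_0),\ldots,\psi(y_k)\}\subseteq\Rbb^k$, which makes the linear system uniquely solvable with nonnegative weights for all large $n$. Since $\Supp(\tilde H_n)\subseteq K$ is bounded, this yields $W_2^2(H_n,H)\le t_n^2\,\diam(K)^2=o_{\Pbb}(1)$.

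For the gluing step, let $\alpha_n$ be an optimal coupling of $(F,\bar F_n)$ and $\beta_n$ an optimal coupling of $(H,H_n)$. Using the standard gluing lemma (\citet{Villani}), glue $\pi\in\Gamma(F;H)$ with $\alpha_n\in\Gamma(F;\bar F_n)$ along $F$ to obtain $\mu_n\in\Gamma(\bar F_n;H)$ with $W_2(\mu_n,\pi)\le W_2(\bar F_n,F)$; then glue $\mu_n$ with $\beta_n\in\Gamma(H;H_n)$ along $H$ to obtain $\pi_n\in\Gamma(\bar F_n;H_n)$ with $W_2(\pi_n,\mu_n)\le W_2(H_n,H)$. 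By construction $\pi_n\in\Gamma_K(\bar F_n;\hat G_n,\Psi)$, and the triangle inequality gives $W_2(\pi_n,\pi)\le W_2(\bar F_n,F)+W_2(H_n,H)=o_{\Pbb}(1)$, using also that $W_2(\bar F_n,F)\to 0$ in probability by the empirical Wasserstein law of large numbers (valid since $F$ has finite second moment under the standing moment hypotheses on $\Theta$ and on the likelihood family through~\eqref{eqn:EC}).

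The main obstacle is the existence of the reference points $y_0,\ldots,y_k\in K$ with $v$ strictly inside $\Conv\{\psi(y_0),\ldots,\psi(y_k)\}$: \eqref{eqn:int} places $v$ in the interior of $\mathcal M(\Psi)=\Conv\{\psi(y):y\in\Rbb^m\}$, but the construction needs $v$ to lie in the interior of $\Conv\{\psi(y):y\in K\}$. This holds whenever $K$ is chosen to contain a sufficiently rich neighborhood of $\Supp(H)$ (for example, when $K$ is the axis-aligned bounding box used in our numerical implementations), but a clean abstract sufficient condition---and in particular a quantitative rate $W_2(H_n,H)$---would require a more careful openness/selection argument for $M$ restricted to $\Pcal(K)$, which is where most of the technical work in a full proof would go.
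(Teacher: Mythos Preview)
Your high-level strategy---perturb $H$ to a nearby $H_n\in\Pcal(K)$ whose $\Psi$-moments match those of $\hat G_n$, then glue twice---is the same as the paper's, and your gluing paragraph coincides with the paper's use of the gluing lemma essentially verbatim. The genuine difference is in how the perturbation $H_n$ is built. You use a \emph{mixture} $H_n=(1-t_n)H+t_n\tilde H_n$ with $\tilde H_n$ a finitely supported measure on fixed atoms $y_0,\ldots,y_k\in K$; the paper instead uses \emph{exponential tilting}, setting $\diff H_{\alpha}/\diff H\propto\exp\bigl(\sum_\ell\alpha_\ell\psi_\ell\bigr)$ and taking $\alpha_n=\nabla A^{\ast}(M(\hat G_n))$, so that $M(H_{\alpha_n})=M(\hat G_n)$ by the standard mean-parameter duality for exponential families. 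Then $\alpha_n\to0$ by continuity of $\nabla A^{\ast}$ together with~\eqref{eqn:CC}, and $W_2(H_{\alpha_n},H)\to0$ follows from Pinsker's inequality (relative entropy equals the $A$-Bregman divergence) plus the common compact support.

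The exponential-tilting route directly dissolves the obstacle you flag at the end. Because the tilted family has $H$ itself as base measure and $\Supp(H)\subseteq K$ is compact, the log-partition $A$ is finite on all of $\Rbb^k$; hence $0$ lies in the interior of the natural-parameter domain and $v=\nabla A(0)$ automatically lies in the interior of the mean-parameter space $\nabla A(\Rbb^k)$. No separate argument that $v\in\textnormal{int}\,\Conv\{\psi(y):y\in K\}$ is required---local invertibility of the moment map at $\alpha=0$ does the work. Your mixture construction, by contrast, must manufacture this interior condition by hand via the choice of atoms $y_j\in K$, and as you correctly observe, \eqref{eqn:int} alone (which concerns all of $\Rbb^m$, not $K$) does not furnish them. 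So the paper's construction is not just a stylistic alternative; it is how the proof closes the gap you identified.

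One minor slip: the natural coupling for a mixture (keep the $H$-draw with probability $1-t_n$, otherwise draw independently from $H$ and $\tilde H_n$) gives $W_2^2(H_n,H)\le t_n\,\diam(K)^2$, not $t_n^2\,\diam(K)^2$. This does not affect your $o_{\Pbb}(1)$ conclusion.
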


    \begin{proof}
        First, take $\pi\in\Gamma_K(F;G,\Psi)$ and get $H_0\in\Pcal(\Rbb^m)$ such that we have $\pi\in\Gamma(F;H_0)$, and let construct a suitable exponential family from these pieces; this will require some basic notions which can be found in \citet[Chapter~3]{WainwrightJordan}.
        That is, let us define $A:\Rbb^k\to\Rbb\cup\{\infty\}$ via
        \begin{equation*}
            A(\alpha):= \int_{\Rbb^m}\exp\left(\sum_{\ell=1}^{k}\alpha_{\ell}\psi_{\ell}(\eta)\right)\diff H_0(\eta),
        \end{equation*}
        and, for each $\alpha\in D(H_0,\Phi)$ we define the probability measure $H_{\alpha}\in \Pcal(\Rbb^m)$ via its Radon-Nikodym derivative
        \begin{equation*}
            \frac{\diff H_{\alpha}}{\diff H_0}(\eta) = \exp\left(\sum_{\ell=1}^{k}\alpha_{\ell}\psi_{\ell}(\eta)-A(\alpha)\right).
        \end{equation*}
        It follows that $\{H_{\alpha}\}_{\alpha}$ is an exponential family;
        in fact, the linear independence of $\psi_1,\ldots, \psi_k$ implies that $\{H_{\alpha}\}_{\alpha}$ is minimal (but not necessarily complete).
        Also, observe that $H_0$ indeed corresponds to the member of this family with $\alpha=0$.
    By assumption \eqref{eqn:CC} and \eqref{eqn:int}, we have
    \begin{equation*}
        \begin{pmatrix}
            \int_{\Rbb^m}\psi_1\diff \hat G_n \\ \vdots \\\int_{\Rbb^m}\psi_k\diff \hat G_n
        \end{pmatrix} \to \begin{pmatrix}
            \int_{\Rbb^m}\psi_1\diff G \\ \vdots \\\int_{\Rbb^m}\psi_k\diff G
        \end{pmatrix} \in\mathcal{M}^{\circ}(\Psi)
    \end{equation*}
    as $n\to\infty$, so it follows  that we have
    \begin{equation*}
        \begin{pmatrix}
            \int_{\Rbb^d}\psi_1\diff \hat G_n \\ \vdots \\\int_{\Rbb^d}\psi_k\diff \hat G_n
        \end{pmatrix} \in \mathcal{M}^{\circ}(\Psi)
    \end{equation*}
    for all sufficiently large $n\in\Nbb$.
    This implies that, for such $n\in\Nbb$, the parameter
    \begin{equation*}
        \alpha_n := \nabla A^{\ast}\left(\begin{pmatrix}
            \int_{\Rbb^d}\psi_1\diff \hat G_n \\ \vdots \\\int_{\Rbb^d}\psi_k\diff \hat G_n
        \end{pmatrix}\right)
    \end{equation*}
    satisfies
    \begin{equation*}
        \begin{pmatrix}
            \int_{\Rbb^d}\psi_1\diff H_{\alpha_n} \\ \vdots \\\int_{\Rbb^d}\psi_k\diff H_{\alpha_n}
        \end{pmatrix} = \begin{pmatrix}
            \int_{\Rbb^d}\psi_1\diff \hat G_n \\ \vdots \\\int_{\Rbb^d}\psi_k\diff \hat G_n
        \end{pmatrix}.
    \end{equation*}
    Of course, we also have
    \begin{equation*}
        \begin{pmatrix}
            \int_{\Rbb^d}\psi_1\diff H_{0} \\ \vdots \\\int_{\Rbb^d}\psi_k\diff H_{0}
        \end{pmatrix} = \begin{pmatrix}
            \int_{\Rbb^d}\psi_1\diff G \\ \vdots \\\int_{\Rbb^d}\psi_k\diff G
        \end{pmatrix}
    \end{equation*}
    by construction.
    Since $A$ being strictly convex implies that $A^{\ast}$ is continuously differentiable, we conclude $\alpha_n\to 0$ as $n\to\infty$.
    Also, the relative entropy of $H_{\alpha_n}$ from $H_0$ is just the $A$-Bregman divergence of $\alpha_n$ from $0$, so combining $\alpha_n\to 0$ and Pinsker's inequality shows that we have $\|H_{\alpha_n}- H_{0}\|_{\TV} \to 0$ as $n\to\infty$.
    Since $\{H_{\alpha}\}_{\alpha}$ have common compact support, this further implies $W_2(H_{\alpha_n},H)\to 0$.
    Now use $W_2(\bar F_n,F)\to 0$ and the gluing lemma \citet[p. 11]{Villani} to construct $\pi_n\in\Gamma(\bar F_n;H_{\alpha_n})$ for all $n\in\Nbb$ such that we have $W_2(\pi_n,\pi)\to 0$.
    By construction, this means we have $\pi_n\in\Gamma_K(\bar F_n;\hat G_n,\Psi)$, so the proof is complete.
    \end{proof}

    \begin{proof}[Proof of Theorem~\ref{thm:gen-cstr-emp}]
        We claim (as stated in the discussion in Section~\ref{subsec:emp-gencstr}) that we have $W_2(\hat{H}_n,H)\to 0$ in probability as $n\to\infty$.
        If this is true, then we may apply the exact same argument from the proof of Theorem~\ref{thm:emp-distrcstr} so we are done.

        In fact, we will prove a slightly stronger claim.
        For convenience, write $\hat{\pi}_n$ for $\hat{\pi}_{\gencstr}$, in order to emphasize the dependence on $n\in\Nbb$.
        Then, we claim that we have $W_2(\hat{\pi}_{n},\pi_{\gencstr})\to 0$ in probability as $n\to\infty$.
        Since assumptions \eqref{eqn:Z-dens}, \eqref{eqn:Bayes-dens}, \eqref{eqn:cty}, and~\eqref{eqn:quadratic-growth} imply (Theorem~\ref{thm:gen-cstr-num}) that $\pi_{\gencstr}$ is the unique solution to \eqref{eqn:pop-distrcstr-Monge}, it suffices to show that $\{\hat \pi_{n}\}_{n\in\Nbb}$ is $W_2$-precompact and that every subsequential $W_2$-limit is a solution to \eqref{eqn:pop-distrcstr-Monge}.
        To see pre-compactness, simply note that the marginal distributions of $\{\hat \pi_n\}_{n\in\Nbb}$ are $\{\bar F_n\}_{n\in\Nbb}$ and $\{\hat H_n\}_{n\in\Nbb}$; the former converges in $W_2$ to $F$ hence it is $W_2$-pre-compact, and the latter is $W_2$-pre-compact since these probability measures have uniformly bounded support.
        
        Next, suppose that $\{n_k\}_{k\in\Nbb}$ and $\pi_{\infty}\in\Pcal(\Rbb^d\times\Rbb^m)$ have $W_2(\hat{\pi}_{\distrcstr,n_k},\pi_{\infty})\to 0$, and let us show that $\pi_{\infty}$ is optimal for \eqref{eqn:pop-distrcstr-Monge}.
        To see that $\pi_{\infty}$ is feasible for \eqref{eqn:pop-distrcstr-Monge}, take arbitrary $1\le\ell\le k$.
        Then, use the fact that $\{\hat{H}_n\}_{n\in\Nbb}$ have uniformly bounded support, the fact that $\hat{H}_n\in\Gamma(\bar F_{n};\hat{G}_n,\Psi)$ for all $n\in\Nbb$, and assumption~\eqref{eqn:CC} to get:

        \begin{align*}
        \int_{\Rbb^d\times K}\psi_{\ell}(\eta)\diff\pi_{\infty}(z,\eta) &= \lim_{n\to\infty}\int_{\Rbb^d\times K}\psi_{\ell}(\eta)\diff\hat{\pi}_{n}(z,\eta) \\
        &= \lim_{n\to\infty}\int_{K}\psi_{\ell}(\eta)\diff\hat{G}_{n}(\eta) = \int_{K}\psi_{\ell}(\eta)\diff G(\eta).
        \end{align*}
        Also, the first coordinate of $\pi_{\infty}$ is the $W_2$-limit of the first coordinates $\bar{F}_n$ of $\hat{\pi}_n$, which is $F$.
        Thus, $\pi_{\infty}$ is feasible.

        Next, we develop a useful estimate.
        We claim that any $\{\pi_n\}_{n\in\Nbb}$ with $\pi_n\in\Gamma(\bar F_n;\hat{H}_n)$ must satisfy
        \begin{equation}\label{eqn:integral-estimate-H}
            \left|\int_{\Rbb^d\times K}c_G\,\diff \pi_{n}-\int_{\Rbb^d\times K}\hat c_{n}\diff \pi_{n}\right| \to 0.
        \end{equation}
        as $n\to\infty$.
        To show this, we calculate:
        \begin{align*}
            &\left|\int_{\Rbb^d\times K}c_{G}\diff \pi_{n}-\int_{\Rbb^d\times K}\hat c_{n}\diff \pi_{n}\right| \\
            &=\left|\int_{\Rbb^d\times K}\left(\|\delta_{\B}(z)-\eta\|^2-\|\hat \delta_{\B}(z)-\eta\|^2\right)\diff  \pi_{n}(z,\eta)\right| \\
            &=\left|\int_{\Rbb^d\times K}\left(\|\delta_{\B}(z)\|^2-\|\hat{\delta}_{\B}(z)\|^2-2\left\langle\delta_{\B}(z)-\hat{\delta}_{\B}(z),\eta\right\rangle\right)\diff \pi_{n}(z,\eta)\right| \\
            &\le\left|\int_{\Rbb^d\times K}\left(\|\delta_{\B}(z)\|^2-\|\hat{\delta}_{\B}(z)\|^2\right)\diff \pi_{n}(z,\eta)\right| + 2\left|\int_{\Rbb^d\times K}\left\langle\delta_{\B}(z)-\hat{\delta}_{\B}(z),\eta\right\rangle\diff \pi_{n}(z,\eta)\right| \\
            &\le \int_{\Rbb^d\times K}\left|\|\delta_{\B}(z)\|^2-\|\hat{\delta}_{\B}(z)\|^2\right|\diff \pi_{n}(z,\eta) + 2\int_{\Rbb^d\times K}\left|\left\langle\delta_{\B}(z)-\hat{\delta}_{\B}(z),\eta\right\rangle\right|\diff \pi_{n}(z,\eta),
        \end{align*}
        and we separately further the bound on each of these terms.
        For the first term, we notice that the integral only depends on the marginal distribution of $z$ under $\pi_{n}$, which equals $\bar F_n$ by construction.
        Also, we have the elementary inequality
        \begin{equation*}
            \left|\|x\|^2 - \|y\|^2\right| \le \|x-y\|^2 + 2\|x\|\cdot\|x-y\|
        \end{equation*}
        for all $x,y\in\Rbb^m$.
        So, taking $x=\delta_{\B}(Z_i)$ and $y=\hat{\delta}_{\B}(Z_i)$, summing, and applying Jensen's inequality yields
        \begin{align*}
        &\int_{\Rbb^d\times K}\left|\|\delta_{\B}(z)\|^2-\|\hat{\delta}_{\B}(z)\|^2\right|\diff \pi_{n}(z,\eta) \\
        &\le \frac{1}{n}\sum_{i=1}^{n}\left|\|\delta_{\B}(Z_i)\|^2-\|\hat{\delta}_{\B}(Z_i)\|^2\right| \\
        &\le \frac{1}{n}\sum_{i=1}^{n}\left(\|\delta_{\B}(Z_i)-\hat{\delta}_{\B}(Z_i)\|^2 + 2\|\delta_{\B}(Z_i)\|\cdot\|\delta_{\B}(Z_i)-\hat{\delta}_{\B}(Z_i)\|\right) \\
        &= \frac{1}{n}\sum_{i=1}^{n}\|\delta_{\B}(Z_i)-\hat{\delta}_{\B}(Z_i)\|^2 + \frac{2}{n}\sum_{i=1}^{n}\|\delta_{\B}(Z_i)\|\cdot\|\delta_{\B}(Z_i)-\hat{\delta}_{\B}(Z_i)\| \\
        &\le \frac{1}{n}\sum_{i=1}^{n}\|\delta_{\B}(Z_i)-\hat{\delta}_{\B}(Z_i)\|^2 + 2\cdot\sqrt{\frac{1}{n}\sum_{i=1}^{n}\|\delta_{\B}(Z_i)-\hat{\delta}_{\B}(Z_i)\|^2}\sqrt{\frac{1}{n}\sum_{i=1}^{n}\|\delta_{\B}(Z_i)\|^2}.
        \end{align*}
    The the right side above vanishes as $n\to\infty$ because of the assumption~\eqref{eqn:EBQ}, assumption~\eqref{eqn:quadratic-growth}, and the law of large numbers.
        For the second term above, use Cauchy-Schwarz twice to get
        \begin{align*}
            &\int_{\Rbb^d\times K}\left|\left\langle\delta_{\B}(z)-\hat{\delta}_{\B}(z),\eta\right\rangle\right|\diff \pi_{n}(z,\eta) \\
            &\le \int_{\Rbb^d\times K}\left\|\delta_{\B}(z)-\hat{\delta}_{\B}(z)\right\|\cdot\|\eta\|\diff \pi_{n}(z,\eta) \\
            &\le \sqrt{\int_{\Rbb^d\times K}\left\|\delta_{\B}(z)-\hat{\delta}_{\B}(z)\right\|\diff \pi_{n}(z,\eta)}\sqrt{\int_{\Rbb^d\times K}\|\eta\|^2\diff \pi_n(z,\eta)} \\
            &= \sqrt{\frac{1}{n}\sum_{i=1}^{n}\left\|\delta_{\B}(Z_i)-\hat{\delta}_{\B}(Z_i)\right\|^2}\sqrt{\int_{K}\|\eta\|^2\diff\hat  H_n(\eta)}.
        \end{align*}
        As before, the first term vanishes because of \eqref{eqn:EBQ} and the second term is bounded because of \eqref{eqn:CS}.
        Thus, we have shown \eqref{eqn:integral-estimate-H}.

        Lastly, we show that $\pi_{\infty}$ is in fact optimal for \eqref{eqn:pop-distrcstr-Monge}.
        To do this, take an arbitrary $\pi\in\Gamma(F;H)$ and use Lemma~\ref{lem:prelim-cvg-gen} to get some $\{\pi_k\}_{k\in\Nbb}$ with $\pi_k\in\Gamma(\bar F_{n_k};  \hat H_{n_k})$ and $W_2(\pi_k,\pi)\to 0$ as $k\to\infty$.
        Then use the Portmanteau lemma, the estimate~\eqref{eqn:integral-estimate-H}, the optimality of $\{\hat \pi_{n_k}\}_{k\in\Nbb}$, and the estimate~\eqref{eqn:integral-estimate-H} again, to get:
        \begin{align*}
            \int_{\Rbb^d\times K}c_G\,\diff \pi^{\ast} &\le \liminf_{k\to\infty}\int_{\Rbb^d\times K}c_G\,\diff \hat{\pi}_{n_k} \\
            &\le \liminf_{k\to\infty}\int_{\Rbb^d\times K}\hat{c}_{n_k}\,\diff \hat{\pi}_{n_k} \\
            &\le \liminf_{k\to\infty}\int_{\Rbb^d\times K}\hat{c}_{n_k}\,\diff \pi_{k} \\
            &\le \liminf_{k\to\infty}\int_{\Rbb^d\times K}c_G\,\diff \pi_{k}
        \end{align*}
        In fact, we have
        \begin{equation*}
            \lim_{k\to\infty}\int_{\Rbb^d\times K}c_G\,\diff \pi_{k} = \int_{\Rbb^d\times K}c_G\,\diff \pi
        \end{equation*}
        by $W_2$ convergence, since $c_G$ is uniformly integrable under $\{\pi_k\}_{k\in\Nbb}$.
        Thus $\pi_{\infty}$ is optimal, hence $\pi_{\infty} =\pi_{\gencstr}$.
        This completes the proof.
    \end{proof}

\end{document}